\newtheorem{theorem}{Theorem}[section]
\newtheorem{lemma}{Lemma}[section]
\newtheorem{proposition}{Propostion}[section]
\newtheorem{corollary}{Corollary}[section]
\newtheorem{claim}{Claim}[section]
\newcommand{\RR}{{\mathbb{R}}}
\subjclass{Primary  53A10, Secondary 57R22, 57M50 }
\keywords{mesh, acute triangulation}
\begin{document}
\title{Approximating Surfaces in $\RR^3$ by Meshes with Guaranteed Regularity}
\author{Joel Hass and Maria Trnkova} 
\address{Department of Mathematics, University of California, Davis
California 95616}
\email{hass@math.ucdavis.edu, mtrnkova@math.ucdavis.edu}
\thanks{The first author was partially supported by NSF grants DMS1760485 and DMS1719582   }
\begin{abstract}
We study the problem of approximating a surface $F$  in $\RR^3 $ by a high 
quality mesh, a piecewise-flat triangulated surface whose
triangles are as close as possible to equilateral.  
The MidNormal algorithm generates a triangular mesh that is
 guaranteed to have angles in the interval  $[49.1^o, 81.8^o]$.
As the mesh size $e\rightarrow 0$, the mesh converges pointwise to $F$  
through surfaces that are isotopic to $F$.
The  GradNormal algorithm gives a piecewise-$C^1$ approximation of $F$,  with
angles in the interval   $[35.2^o, 101.5^o]$ as $e\rightarrow 0$.
Previously achieved angle bounds were in the interval $[30^o, 120^o]$.
\end{abstract}
 \maketitle

\section{Introduction} \label{intro}

The problem of finding a mesh, or surface made of flat triangles,  
that approximates a  smooth surface $F \subset \RR^3$ is important for a wide variety of
applications, including computer graphics, finite elements, finding numerical solutions of PDEs, and geometric modeling.  
A desirable feature in a mesh is the avoidance of ``slivers'', triangles that have one or more angles close to 
zero, which  can cause mesh-based algorithms  to break down
for numerical reasons.

The search for a mesh with good angle quality that approximates a given surface leads to two conflicting goals.
One goal is to make the triangles as nearly equilateral as possible, and the second is
to have the mesh converge smoothly  to the surface.  If one focuses entirely on angles,
one can construct a $C^0$  surface-approximation to $F$ consisting entirely of flat equilateral triangles,
as shown in Section~\ref{cubealg}.  The resulting mesh is unsatisfactory
in some ways, due to its normal vectors differing from those of $F$ by
more than $90^o$.  A new mesh giving a much improved $C^0$  approximation is described in  Section~\ref{theAlg}.
The  {\em MidNormal Algorithm} introduced there produces 
 a $C^0$ approximation of a  surface $F \subset \RR^3$ by a mesh 
with angles in the interval $[49.1^o, 81.8^o]$. These angle bounds are valid at any scale,
and  as the mesh size  approaches  zero it  converges  to  $F$ and is
 homeomorphic to $F$ under the nearest point projection. 
At the cost of  a less tight bound on the angles, 
we can get a $C^1$ approximation.
The  {\em GradNormal Algorithm} produces meshes with angles in the interval $[ 35.2^o, 101.5^o]$
that converge to $F$ piecewise-smoothly as the mesh size approaches zero.
For the GradNormal algorithm  the angle bounds are rigorously established for sufficiently fine meshes,
and can be compared to  the angle interval  $[ 30^o, 120^o]$ obtained by Chew's algorithm \cite{Chew93}. See also~\cite{ChengDey}.
Examples of the meshes produced by MidNormal are shown in Figure~\ref{MidNormalegs} and by GradNormal in Figure~\ref{GradNormalegs}.
The genus two surface is the level set   $((x^2 + y^2)^2 -x^2  + y^2)^2 +z^2  = 0.028 $.
The code used to produce these images can be found in \cite{NormalCode}.
The mesh is displayed using Meshlab \cite{MeshLab}.

\begin{figure}[htbp]
\centering
\begin{subfigure}{.33\textwidth}
  \centering
  \includegraphics[width=.9\linewidth]{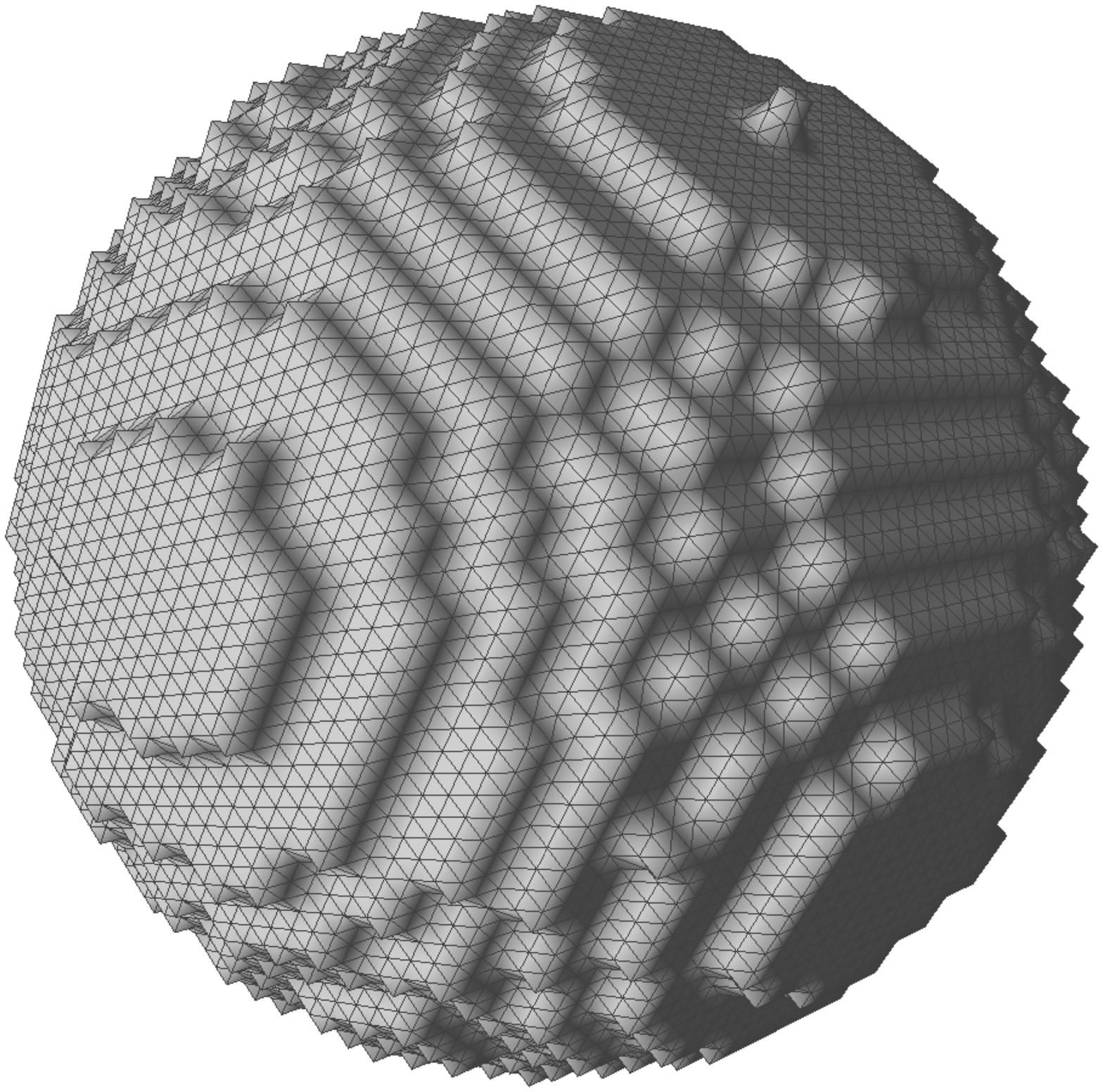}
  \caption{}
\end{subfigure}%
\begin{subfigure}{.33\textwidth}
  \centering
  \includegraphics[width=.91\linewidth]{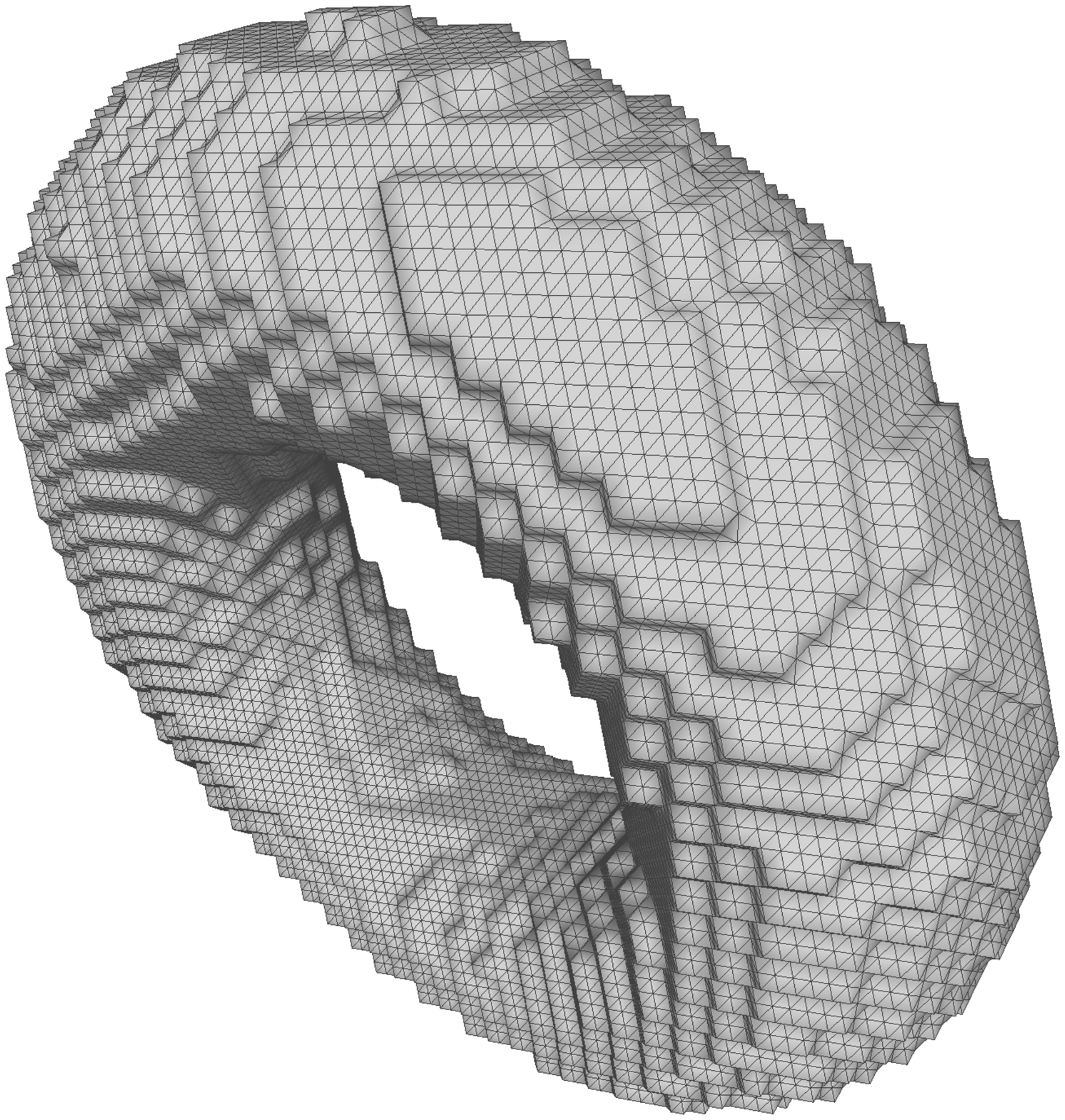}
  \caption{}
\end{subfigure}%
\begin{subfigure}{.33\textwidth}
  \centering
  \includegraphics[width=.8\linewidth]{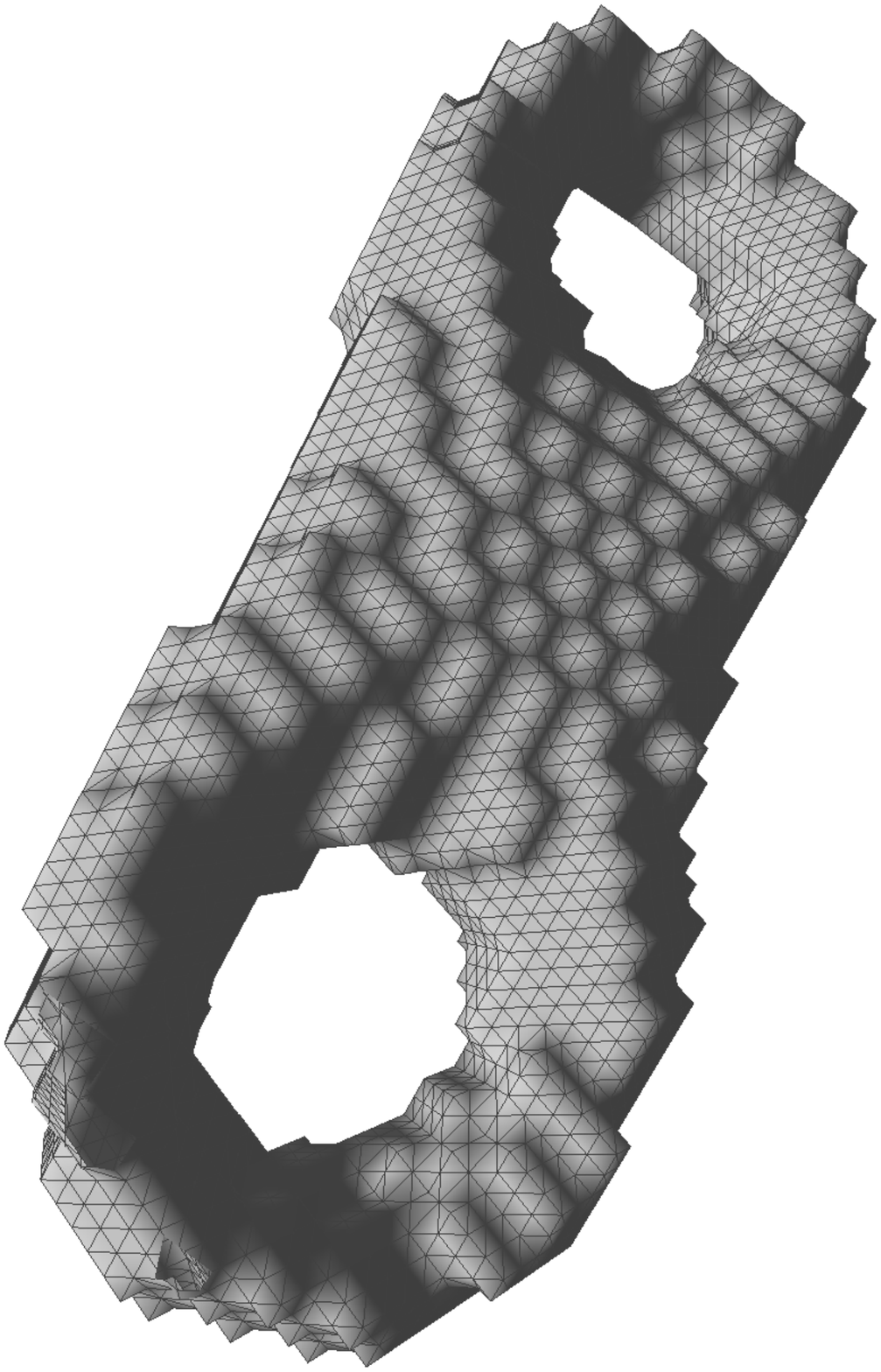}
  \caption{}
\end{subfigure}
\caption{ Meshes produced by the MidNormal algorithm. Tiling the unit cube with 709,920 tetrahedra gives meshes with  (A) 23,840  (sphere),  (B) 34,864  (torus) and
(C) 9968 (genus 2) triangular faces. These $C^0$-approximations have
all angles in the interval $[49.1^o, 81.8^o]$.} 
\label{MidNormalegs}
\end{figure}
  
\begin{figure}[htbp]
\centering
\begin{subfigure}{.33\textwidth}
  \centering
  \includegraphics[width=.9\linewidth]{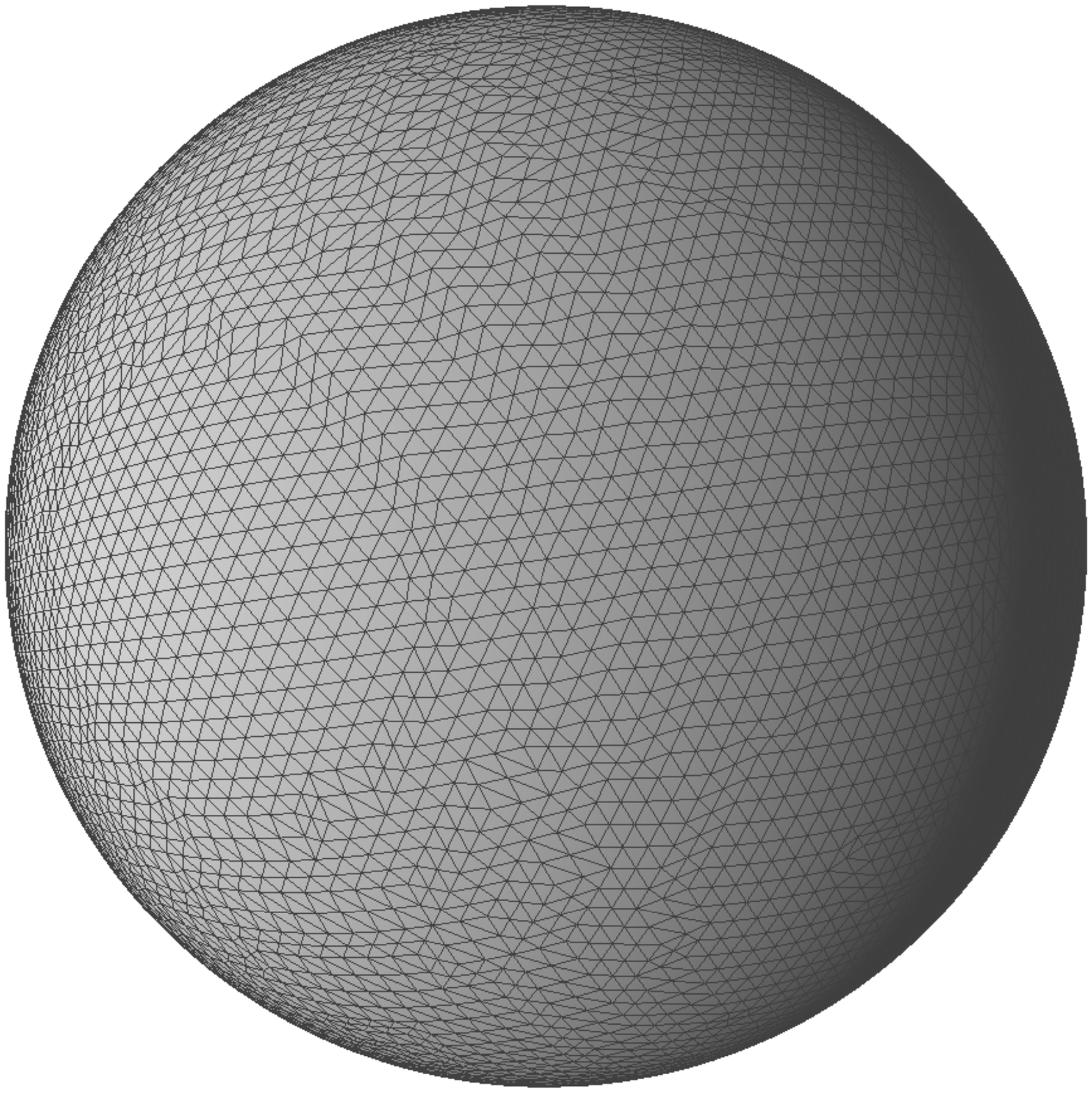}
  \caption{}
\end{subfigure}%
\begin{subfigure}{.33\textwidth}
  \centering
  \includegraphics[width=.75\linewidth]{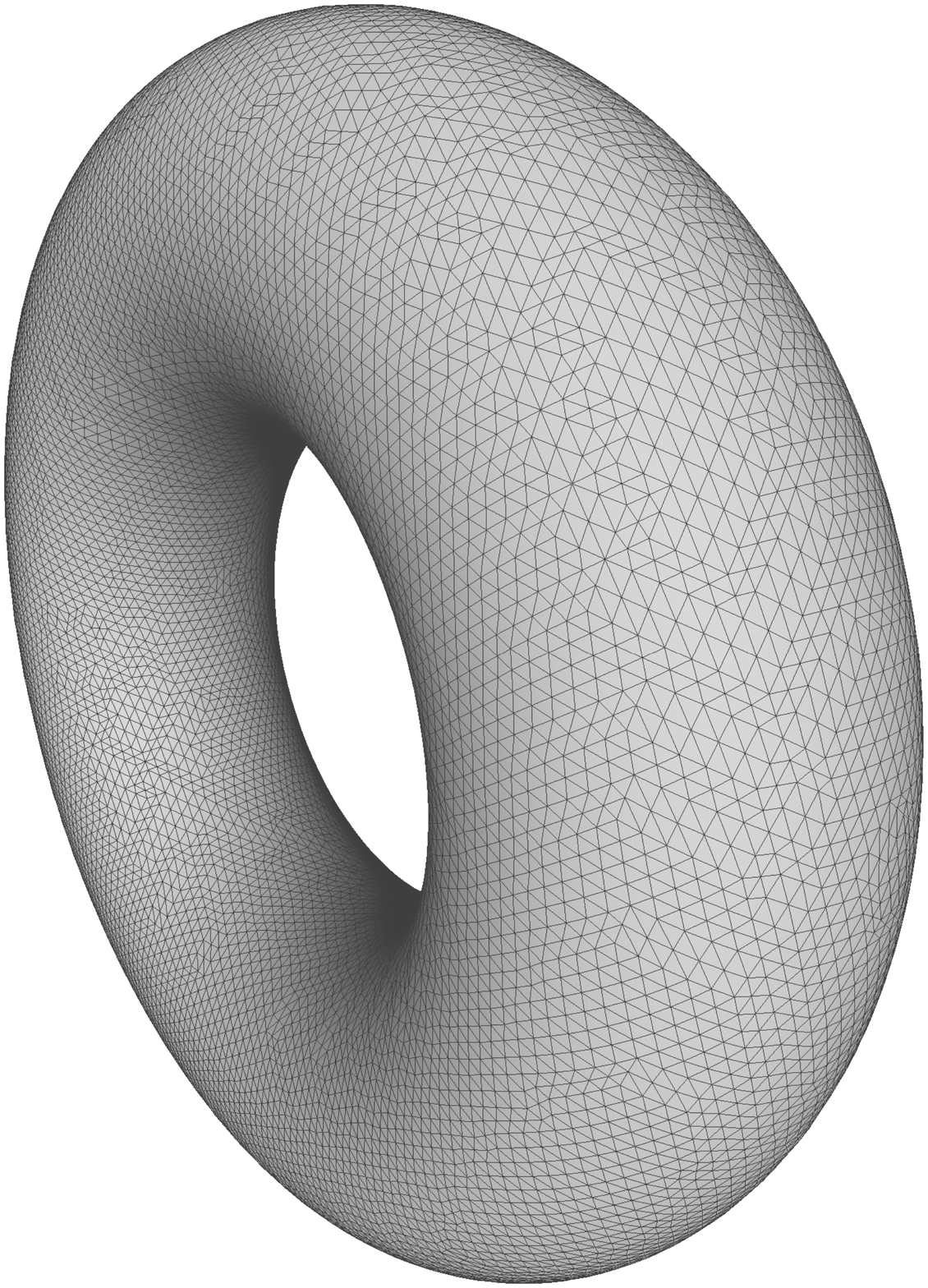}
  \caption{}
\end{subfigure}%
\begin{subfigure}{.33\textwidth}
  \centering
  \includegraphics[width=.8\linewidth]{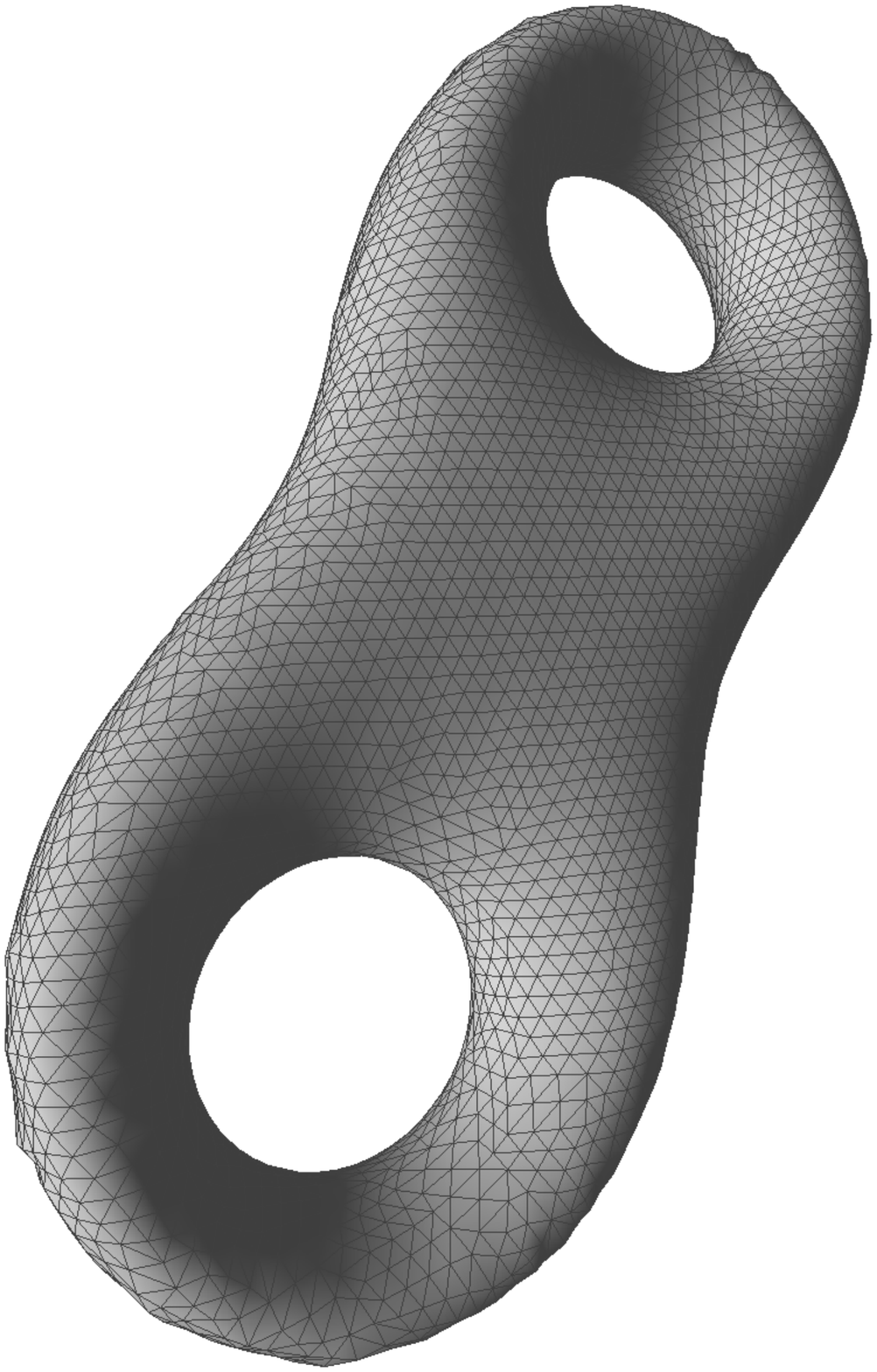}
  \caption{}
\end{subfigure}
\caption{Meshes  produced by the GradNormal algorithm. Tiling the unit cube with 869,652 tetrahedra 
gives meshes with (A) 25,092  (sphere), (B) 35,838 (torus)  and (C) 10,346 (genus 2)  triangular faces.
These  $C^1$-approximations have angles in the intervals
$[ 35.4^o, 102.7^o]$ (sphere),  $[ 32.8^o, 104.7^o]$ (torus) and $[ 22.1^o, 129.0^o]$ (genus 2). 
The greater curvature
of the higher genus surfaces results in lower mesh quality. 
At sufficiently fine scales all
 angles lie in the interval $[35.2^o, 101.5^o]$.} 
\label{GradNormalegs}
\end{figure}
 
The advantages of triangulations that avoid sliver triangles are discussed in the surveys
by Bern and Eppstein \cite{BernEppstein} and Zamfirescu \cite{Zamfirescu}.
A random process for selecting vertices on a surface gives a triangulation with expected
minimum angle approaching zero as the number of points increases \cite{BernEppsteinYao},
implying that  slivers are hard to avoid when creating meshes whose vertices come from  points sampled on a surface. 
Even more desirable, though difficult to achieve,
is a mesh in which all triangles are acute, so that every angle lying strictly between zero and 90 degrees.
Acute triangulations have been shown
to lead to faster convergence properties when used in various numerical methods \cite{Vavasis}.
Acute triangulations are also useful in establishing the convergence to a smooth solution
of discrete surface maps produced by computational algorithms that
approximate solutions to some classes of differential equations \cite{Luo}.

We  first investigate meshes that give a $C^0$ approximation to a surface.
This leads us to  the  {\em MidNormal Algorithm}, which produces 
meshes  with acute 
triangles having angles in the interval $[49.1^o, 81.8^o]$.
Moreover the sizes of the triangles are close to uniform across the entire surface. 
For a chosen scale constant $e$, the mesh gives edge lengths in the range $[e, 1.58e]$. 
The mesh produced by the MidNormal algorithm is guaranteed to describe an embedded 2-dimensional manifold, with no
gaps or intersecting triangles.
We show in Theorem~\ref{mainTheorem}  that as $e \to 0$,
the resulting meshes converge pointwise to  $F$, and the approximating
 mesh is isotopic to $F$ by the nearest point projection.

The idea of the MidNormal algorithm is to tile space with 
tetrahedra of a certain fixed shape and, given a surface $F$, to generate a triangulation  approximating $F$ by
intersecting $F$ with these tetrahedra.  This idea goes back to the theory of {\em normal surfaces}, a powerful tool used to
study surfaces in 3-dimensional manifolds that originated in work of Kneser in 1929 \cite{Kneser}.  The method was
first used for surface meshing  in 1991 in the marching tetrahedra algorithm \cite{DoiKoide, LorensenCline}. 
A mesh is generated by flat triangles that separate the vertices of a tetrahedron in the
same way that they are separated by $F$. The algorithm uses a tiling of $\RR^3$ by tetrahedra that 
belongs to a family of such tilings discovered by Goldberg \cite{Goldberg},
discussed  in Section~\ref{theAlg}. 
There is a family of Goldberg tetrahedra, each producing a tiling. A particular choice is determined by a pair of positive numbers, a shape parameter $a $ and a scale variable $e$. Together these determine a tetrahedron $\tau_{a,e}$ which we describe in
Section~\ref{theAlg}.   See Figure~\ref{GoldbergTet}.
Isometric copies of  $\tau_{a,e} $  fill $\RR^3$ with no gaps, matching along faces.  

The first version of the MidNormal algorithm uses a Goldberg tiling with parameter value $a_3 = \sqrt{3}/4 \approx 0.43$ and produces
angles in the interval  $[49.1^o, 81.8^o]$. This maximizes the minimum angle.
 A variation  of the algorithm  uses Goldberg tetrahedra of a second shape, $a_1 =(1/4)(\sqrt{ (19-3\sqrt{33}) /2   } ) \approx 0.2349$. 
This variation  realizes the global minimum for the upper angle bound and produces angles in the interval
$[38.3^o, 76.8^o]$.
Two other choices of $a$ give local extrema:  $a_2 =  1/\sqrt{11} \approx  0.30$ gives angles in the interval $[47.8^o, 84.2^o]$
and $a_4 =   \sqrt{ (3 \sqrt{17} -5 ) /32}\approx  0.47$ in the interval $[46.1^o, 77.3^o]$.

We show in Section~\ref{Comparison} that
the particular tilings we choose are close to optimal.
If we were to allow tetrahedra of different shapes in constructing a tiling,
or even if we use tetrahedra that overlap and fail to tile,
we could only  achieve a small improvement in the resulting angle bounds.

The  mesh properties given by the MidNormal algorithm are described in the following theorem.
The approximated surface $F$ is assumed to be given as the level set  $F = f^{-1}(0)$ of a smooth function
 $f: \RR^3 \to \RR$.  The method can be adopted to the case where $F$ is itself given as a mesh
 by taking $f$ to be the signed distance function from $F$.
In that setting, the algorithm can be used to replace a low quality mesh with one of higher quality.

\begin{theorem} \label{mainTheorem}  
Given a function $f: \RR^3 \to \RR$ for  which $F = f^{-1}(0)$ is a smooth compact surface with an embedded $\epsilon$-tubular neighborhood $N_\epsilon(F)$,
and a constant $e>0$, the MidNormal  algorithm produces a piecewise-flat triangulated surface $M(f,e)$ such that
\begin{enumerate}
\item  $M(f,e)$  is an embedded 2-dimensional surface. All triangle edges meet two triangles and
the link of every vertex is a closed curve.
\item Each triangle has angles in  the interval  $[49.1^o, 81.8^o]$.
\item Edge lengths fall into an interval $[e, 1.58e]$.
\item The triangles around a given vertex are graphs over a common plane. 
\item The surface $M(f,e)$  converges to $F$ in Hausdorff distance as $e \to 0$.
\item The surface $M(f,e)$ is isotopic to $F$  in $N_\epsilon(F)$ when $ e < \epsilon/ 2.$
\item The nearest neighbor projection from the mesh $M(f,e)$  to $F$  is a homeomorphism for $e$ sufficiently small.
\end{enumerate}
\end{theorem}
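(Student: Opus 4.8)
The plan is to separate the seven conclusions into two groups: the scale-invariant combinatorial and metric properties (1)--(4), which I would establish by a finite case analysis of the normal disks inside a single Goldberg tetrahedron $\tau_{a_3,e}$, and the asymptotic properties (5)--(7), which I would establish by curvature-based estimates as $e\to 0$.

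For (1) I would invoke the normal surface picture underlying the algorithm. A mesh vertex sits at the midpoint of a tiling edge $E$ exactly when $f$ changes sign across $E$; inside each tetrahedron the mesh triangles separate the positive vertices from the negative ones, so their traces on a shared face depend only on the signs of $f$ at that face's three vertices. Hence the edges agree from the two sides of every face and the pieces glue without gaps. An edge of $M(f,e)$ lying in a face is shared by exactly the two tetrahedra adjacent to that face, and an edge interior to a tetrahedron, produced when a quadrilateral normal disk is split by a diagonal, bounds exactly the two triangles of that split; so every edge meets two triangles. For the vertex links: if $f$ changes sign on $E$, then every tetrahedron incident to $E$ contains a normal disk crossing $E$ at the common midpoint vertex, and as one runs around the cyclically ordered tetrahedra incident to $E$ these disks contribute consecutive arcs that close up into a single circle, so the link of each vertex is a closed curve.

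Properties (2) and (3) I would prove by direct computation. Because the mesh vertices are fixed at edge midpoints, each mesh triangle is one of finitely many congruence types, namely the vertex-separating normal triangles and the two halves of each quadrilateral, and these types depend only on the shape parameter $a_3=\sqrt3/4$, with the scale $e$ entering as a global dilation. Enumerating the types, using the symmetry group of the Goldberg tiling to cut down the cases, and computing their angles and side lengths yields the interval $[49.1^\circ,81.8^\circ]$ and the edge-length range $[e,1.58e]$ at every scale. Property (4) is similar but local to a vertex: the triangles around the midpoint of $E$ are controlled by the sign pattern of $f$ on the bounded, fixed set of tetrahedra surrounding $E$, giving finitely many local configurations, and in each I would check that the triangle normals lie in a cone of half-angle less than $90^\circ$, so that all incident triangles are simultaneously graphs over a plane transverse to that cone.

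The asymptotic statements are where the real work lies. For (5), since $\nabla f\neq 0$ on a neighborhood of the compact surface $F$, a sign change of $f$ across an edge of length $O(e)$ forces the zero of $f$ to lie within $O(e)$ of the midpoint, so every mesh vertex is $O(e)$ from $F$; as the triangles have diameter at most $1.58e$, every point of $M(f,e)$ is $O(e)$ from $F$, and conversely each point of $F$ lies in a tetrahedron meeting $M(f,e)$, giving two-sided Hausdorff convergence. For (6) and (7) I would work in the normal coordinates of the tubular neighborhood, $N_\epsilon(F)\cong F\times(-\epsilon,\epsilon)$ via the nearest-point projection $\pi$ and signed distance. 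The key estimate, and the main obstacle, is a \emph{uniform transversality} bound: as $e\to 0$ the normal of each mesh triangle converges to the normal of $F$, uniformly over $F$. This is where the fixed combinatorial shapes must be reconciled with the smooth geometry, and it requires controlling the second-order behavior of $f$ at scale $e$ against the bounded curvature of $F$; property (4) packages the local part of this control. Granting it, $\pi|_M$ is a local homeomorphism, hence a covering map of the closed surface $M(f,e)$ onto $F$; counting the intersections of a normal fiber with $M(f,e)$ shows the covering has a single sheet for small $e$, so $\pi|_M$ is a homeomorphism, proving (7). Finally, once $e<\epsilon/2$ the mesh lies inside $N_\epsilon(F)$, and the straight-line normal homotopy $x\mapsto (1-t)x+t\,\pi(x)$ stays embedded, again by transversality and injectivity of $\pi|_M$, giving an isotopy from $M(f,e)$ to $F$ and establishing (6).
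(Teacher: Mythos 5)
Your treatment of (1)--(4) parallels the paper's own argument (finite case analysis of elementary disks in a single Goldberg tetrahedron, gluing across faces by sign patterns), and that part is sound. But the asymptotic half of your proposal rests on a claim that is false for this algorithm. You identify as your ``key estimate'' that the normal of each mesh triangle converges, uniformly over $F$, to the normal of $F$ as $e \to 0$. It does not: the vertices of $M(f,e)$ are pinned to midpoints of tiling edges, so every mesh triangle is parallel to one of a \emph{fixed finite} collection of elementary-disk planes (18 oriented normal directions for $a_3=\sqrt{3}/4$), independent of $e$. This is exactly why the paper calls MidNormal only a $C^0$ approximation and builds the separate GradNormal algorithm to get $C^1$ behavior. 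What is actually available, and what the paper's brief proof of (7) uses, is the weaker fact that each mesh triangle separates the vertices of its tetrahedron the same way $F$ does, and hence stays at an angle bounded away from $90^\circ$ from the nearby tangent plane of $F$; your covering-space argument for (7) can be repaired with this weaker transversality, but as written it starts from a false premise. Relatedly, your ``conversely'' step in (5) --- that every point of $F$ lies in a tetrahedron meeting $M(f,e)$ --- is unjustified: a curved surface can dip into a tetrahedron without separating its vertices, and such a tetrahedron carries no mesh triangle. The paper closes this gap with a two-tangent-ball argument: balls of radius $d_e$ tangent to $F$ at $p$ on either side contain tiling vertices of opposite sign, and since $M(f,e)$ separates positive from negative vertices of the tiling, the short path through $p$ joining those two vertices must cross the mesh, giving distance at most $2d_e$.

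The second genuine gap is in (6). Your isotopy is the straight-line homotopy onto the nearest-point projection, which presupposes that $\pi|_{M}$ is a homeomorphism --- i.e.\ it presupposes (7), which you (and the paper) only obtain for $e$ ``sufficiently small.'' But conclusion (6) asserts an isotopy at the explicit threshold $e < \epsilon/2$, with no asymptotic hypothesis, and your route cannot deliver that. The paper proves (6) by an entirely different, 3-manifold-topology argument (Lemma~\ref{$F$ to $M(f,e)$}): starting from $F$, perform normalization moves --- removing trivial curves of intersection with faces, compressions, and boundary compressions across edges --- to carry $F$ to a normal surface; these moves never change the parity of intersection with any tiling edge, so the normal surface they produce is exactly $M(f,e)$; and since $F$ is incompressible in $N_\epsilon(F)$ and $N_\epsilon(F)$ is irreducible (when $F$ is not a sphere), every compression merely splits off a trivial 2-sphere and preserves the isotopy class. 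All moves take place inside $N_\epsilon(F)$, so the isotopy holds at the stated scale without any limit. If you want to keep your projection-based isotopy, you must either weaken (6) to ``$e$ sufficiently small'' or supply an argument of this normalization type.
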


\begin{corollary} \label{converge} 
Given a function $f: \RR^3 \to \RR$ for which 0 is a regular value and $F = f^{-1}(0)$ is a compact surface,
the  MidNormal algorithm with scale $e$ produces  a triangulated surfaces $M(f,e)$ homeomorphic to $F$ with  angles in 
the interval  $[49.1^o, 81.8^o]$. As $e \to 0$ these surfaces
converge to $F$ in Hausdorff distance.
\end{corollary}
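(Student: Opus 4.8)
The plan is to derive the corollary from Theorem~\ref{mainTheorem} by verifying that its stronger hypothesis --- the existence of an embedded $\epsilon$-tubular neighborhood --- is automatically satisfied under the weaker assumptions stated here. First I would apply the regular value theorem: since $0$ is a regular value of the smooth map $f \colon \RR^3 \to \RR$, the preimage $F = f^{-1}(0)$ is a smooth embedded $2$-manifold in $\RR^3$, and it is compact by hypothesis.

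Next I would invoke the tubular neighborhood theorem. For a compact smoothly embedded surface $F \subset \RR^3$, the normal exponential map is a diffeomorphism from a neighborhood of the zero section of the normal bundle onto a neighborhood of $F$, and compactness furnishes a uniform positive radius of injectivity. Hence there exists $\epsilon > 0$ for which the $\epsilon$-neighborhood $N_\epsilon(F) = \{\, x \in \RR^3 : \mathrm{dist}(x,F) < \epsilon \,\}$ is an embedded tubular neighborhood, carrying a well-defined nearest point projection $N_\epsilon(F) \to F$. This places us exactly in the setting of Theorem~\ref{mainTheorem}.

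With such an $\epsilon$ fixed, I would apply Theorem~\ref{mainTheorem} directly. Conclusion (2) supplies the angle bounds $[49.1^o, 81.8^o]$ for every $e > 0$, with no smallness requirement, giving the angle statement of the corollary verbatim. Conclusions (6) and (7) give, for $e < \epsilon/2$ and for $e$ sufficiently small, an isotopy of $M(f,e)$ to $F$ inside $N_\epsilon(F)$ together with a nearest point projection that is a homeomorphism, so $M(f,e)$ is homeomorphic to $F$ for all small $e$. Conclusion (5) yields the Hausdorff convergence $M(f,e) \to F$ as $e \to 0$.

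The argument is essentially a repackaging of standard differential topology, so no single step is a genuine obstacle; the only point that warrants care is the role of compactness. It is precisely compactness of $F$ that prevents the injectivity radius of the normal exponential map from degenerating to zero, and thereby guarantees a single $\epsilon > 0$ valid uniformly along all of $F$ rather than merely a locally varying one. Without compactness the uniform tubular neighborhood --- and with it the global isotopy and homeomorphism conclusions --- could fail, so I would be careful to flag compactness as the hypothesis doing the essential work.
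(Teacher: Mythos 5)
Your proposal is correct and is exactly the argument the paper intends: the paper states this corollary without a separate proof, treating it as an immediate consequence of Theorem~\ref{mainTheorem}, and your filling-in of the missing step (compactness plus the regular value theorem gives a smooth embedded surface with a uniform embedded $\epsilon$-tubular neighborhood, after which conclusions (2), (5), (6), (7) of the theorem deliver the angle bounds, Hausdorff convergence, and homeomorphism) is precisely that derivation. Your closing caveat is also apt: the homeomorphism conclusion genuinely requires $e$ sufficiently small, a qualification the corollary's wording glosses over.
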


In an important set of applications, the function $f$  measures the
absorption at each point in $ \RR^3$ of an X-ray or imaging machine, or the density of a solid object.  
The desired level set $F = f^{-1}(0)$
can then represent the surface of a scanned object, such as an organ, bone, brain cortex or protein.
For purposes of visualization, geometric processing, surface comparison, surface classification,
or modeling of properties of the surface, it is desirable to have a high quality mesh representing the surface
such as that produced by MidNormal.

In many cases it is desirable to have a 
piecewise-$C^1$ approximation to the surfaces $F$.
We describe an algorithm that achieves this in Section~\ref{GradNormal}, which
we call the {\em GradNormal Algorithm}.  It starts with a mesh constructed using the MidNormal algorithm
with shape parameter $a_0= \sqrt{2}/4$. It then uses the gradient of $f$ to project
the vertices of the mesh towards the surface $F$.
Properties of the resulting mesh  $M^1(f,e)$ are 
described in the following result, proven in Section~\ref{GradNormal}.
 
\begin{theorem}
\label{GradNormalThm}
Let $F=f^{-1}(0) \in \RR^3$  be a compact level surface of a smooth function $f$. For sufficiently small scales $e$,\\
(1) The  triangular mesh  $M^1(f,e)$ is homeomorphic to $F$ by the nearest point projection to $F$.\\
(2)  As $e \to 0$ the surface  $M^1(f,e)$ piecewise-$C^1$ converges to $F$. \\
(3)  The mesh angles lie in the interval $[ 35.2^o, 101.5^o]$.
\end{theorem}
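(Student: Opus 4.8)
The plan is to treat $M^1(f,e)$ as a controlled perturbation of the MidNormal mesh $M(f,e)$ built on the Goldberg tetrahedron $\tau_{a_0,e}$, and to prove the angle bound (3) --- the heart of the statement --- by reducing to a flat local model. Each vertex $v$ of $M(f,e)$ lies at the midpoint of an edge of the tiling that is crossed by $F$, and the GradNormal step replaces $v$ by the point $v^1$ obtained by moving $v$ along the gradient line of $f$ until it meets $F$. For $e$ small every such $v$ lies well inside $N_\epsilon(F)$, so $v^1$ is well defined, lies on $F$, and differs from the nearest-point projection $\pi(v)$ by $O(e^2)$. Since $M^1(f,e)$ inherits the combinatorics of $M(f,e)$, it is again a closed combinatorial surface by Theorem~\ref{mainTheorem}(1); only the vertex positions, and hence the angles and face normals, must be re-examined.

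For (3) I would first pass to the flat limit. Fix $p\in F$ and rescale by $1/e$ about $p$. As $e\to 0$ the rescaled surface converges in $C^2$ to the tangent plane $T_pF$, and $\nabla f$ converges to a constant vector normal to it, uniformly in $p$ by compactness of $F$. Thus the finitely many combinatorial types of intersection of a plane with the unit tetrahedron $\tau_{a_0,1}$ give, in the limit, a finite family of model triangles whose vertices are the orthogonal projections onto the cutting plane $P$ of the midpoints of the crossed edges; for a flat surface with constant normal gradient the gradient projection of a midpoint is exactly this orthogonal projection. The angles of the true triangles of $M^1(f,e)$ differ from those of the matching model triangles by $O(e)$, uniformly, so it suffices to bound the model angles. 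I would parametrize the cutting plane by its unit normal $\nu\in S^2$ and a signed offset, noting that the offset enters only through the choice of crossed edges while, for a fixed combinatorial type, the projected triangle shape depends on $\nu$ alone. Writing the projected vertices as explicit functions of $\nu$ and extremizing the resulting angle functions over each cut type then yields the interval $[35.2^\circ,101.5^\circ]$; the value $a_0=\sqrt2/4$ is chosen precisely so that this worst-case optimization produces these bounds.

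Parts (1) and (2) follow from (3) together with the fact that the vertices of $M^1(f,e)$ lie on $F$. The lower angle bound $35.2^\circ>0$ keeps every triangle uniformly nondegenerate, so by the standard inscribed-triangle estimate --- three points of $F$ in a ball of radius $O(e)$ spanning a triangle with angles bounded away from $0$ have triangle normal within $O(e)$ of the surface normal --- each face plane makes an angle $O(e)$ with $T_qF$ at its nearby vertices. Hence $\pi$ restricted to each face is an immersion and $\pi|_{M^1(f,e)}$ is a local homeomorphism of closed surfaces for small $e$; a proper local homeomorphism onto $F$ is a covering map, and since the mesh presents a single sheet over each point of $F$ the degree is one, giving the homeomorphism of (1). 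The same $O(e)$ normal estimate shows the face normals converge uniformly to the normal field of $F$; combined with the $C^0$ convergence inherited from Theorem~\ref{mainTheorem}(5) and the $O(e)$ vertex displacement, this is the piecewise-$C^1$ convergence of (2).

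The main obstacle is the angle analysis in (3). In contrast with MidNormal, the gradient projection moves each vertex a distance comparable to the edge length $e$, so the model angles are genuinely different from the MidNormal angles and must be recomputed; the difficulty lies in organizing the case analysis over all combinatorial cut types and the full two-parameter range of admissible normals $\nu$, and in verifying that the extrema over this entire family are exactly $35.2^\circ$ and $101.5^\circ$. This is where the specific shape parameter $a_0=\sqrt2/4$ and a careful, likely computer-assisted, extremization are essential.
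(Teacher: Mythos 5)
Your overall strategy --- rescale to a flat model, observe that the gradient step becomes orthogonal projection onto the cutting plane, parametrize cutting planes by their unit normals, and extremize the resulting angle functions over each combinatorial cut type --- is the same strategy the paper uses. But there is a genuine gap at the core of your proposal: you assert that $M^1(f,e)$ ``inherits the combinatorics of $M(f,e)$,'' so that only vertex positions need to be re-examined. That is not what the GradNormal algorithm does, and if it were, claim (3) would be false. The algorithm first deletes every valence-4 vertex of the MidNormal mesh together with its four adjacent triangles, retriangulates the resulting quadrilateral with a diagonal, and only then projects. This step is not cosmetic. A valence-4 vertex sits at the midpoint $M$ of a tiling edge $AD$ of valence four, and the elementary triangles meeting there (e.g.\ $\triangle KLM$) make dihedral angles of $60^\circ$ or $90^\circ$ with the faces of the tetrahedron. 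When the cutting plane is nearly parallel to the face $ABC$ --- a normal direction that is perfectly admissible for the cut type ``separate $A$ from $B,C,D$'' --- the projection of $\triangle KLM$ collapses toward a segment and its angles approach $0^\circ$ and $180^\circ$. So the extremization you propose, taken over the full admissible normal set for each cut type with unchanged combinatorics, yields no positive lower bound at all, let alone $35.2^\circ$; this failure is precisely why the paper defines GradNormal as a two-step modification rather than a bare projection of MidNormal.

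Repairing this requires the part of the argument your outline does not anticipate. One must determine, for each elementary triangle such as $\triangle KLM$, which cutting-plane normals are excluded because every compatible plane forces a valence-4 vertex at $M$ (so the triangle is removed before projection), and one must separately bound the angles of the two new triangles obtained from the rhombus replacing the deleted star. In the paper this is the analysis of the spherical regions $X$, $Y$ and $Z$: contrary to your remark that the offset of the plane is irrelevant for a fixed cut type, the offset determines whether the valence-4 configuration occurs, which is exactly what distinguishes $Z$ (all compatible planes give valence 4) from $Y$ (some do). The extremization for $\triangle KLM$ is then carried out only over the truncated region $X-Z$, with Lemma~\ref{rotationprojection} and Corollary~\ref{extremeAngles} pushing the extrema to boundary arcs where one-variable (numerical) estimates suffice, and a parallel computation handles the rhombus triangles over $\partial Y$. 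A minor further inaccuracy: GradNormal does not flow a vertex along the gradient until it meets $F$; it takes the single Newton step ${\bf v} \mapsto {\bf v} - f({\bf v})\nabla f/\|\nabla f\|^2$, which lands on $F$ only when $f$ is linear (the discrepancy is $O(e^2)$, so this does not affect the limit, but your statement (1)--(2) argument should be phrased accordingly). Your flat-limit reduction and your covering-space argument for (1) are fine in spirit, but without the valence-4 removal and the associated restriction of the normal domains, the central claim (3) cannot be established --- it is false for the mesh you actually describe.
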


\subsection{Computational Methodology}  
Some of our arguments reduce to numerical  computations, which we carried out with the software package Mathematica 12.
Files are available on gitlab~\cite{NormalCode}. 
These computations operate with analytic functions, and can be carried out
with interval arithmetic to create completely rigorous arguments for the statements that we claim.
Real interval arithmetic is sufficient to give rigorous bounds for angle ranges, as required for our results.
Thus in principle all of our results can be made completely rigorous.  However we have not yet implemented
interval arithmetic into the numerical computations we used, so the actual values of the 
optimal angle bounds we give should be interpreted as numerically computed, and subject to the
correctness of the software being used.
This issue is discussed further when the computations are made.

The results in Section~\ref{Approximation} explore possible improvements of the results
of this paper by using similar techniques on other tetrahedral shapes. They rely on numerical computations
and are not stated as theorems.  They should rather be regarded as numerical evidence that
our techniques  cannot be improved substantially by using differently shaped tetrahedra.

The results in Section~\ref{SlidNormal} also rely on numerical computations.
The methods developed here not central to our methods, and are stated primarily to show that one  natural
approach does not give good results.
The angle bounds derived in this section rely on correctness of the numerical computations used, as is indicated there. 
The numerically  results in  Section~\ref{Approximation} and~\ref{SlidNormal}  are
not used  in the proofs of other results in this paper. 
 
\subsection{Normal Surfaces}
We start with a smooth function
$f: \RR^3 \to \RR$ that has 0 as a regular value, with the property that  $F = f^{-1}(0)$ is a smooth 
compact surface in $\RR^3$ that lies within the unit cube $[0,1]^3$.
We  describe an algorithm to generate 
a piecewise flat triangulated surface $M(f,e) \subset \RR^3$ that approximates $F$.
The algorithm will produce this mesh by assigning triangles  to some of the tetrahedra,
producing a {\em simple normal surface}.
 
A simple normal surface is a special case of the normal surfaces
introduced by Kneser in \cite{Kneser}.  Normal surfaces realize the simplest possible
way in which a surface can be situated relative to a three-dimensional triangulation, cutting
across each tetrahedron in the same way as a flat plane.
An  {\em elementary disk} in a tetrahedron is an embedded disk that is either a single flat 
triangle or two flat triangles meeting along a 
common edge and  forming a quadrilateral.
The vertices of each triangle in an elementary disk
are located at the midpoints of different edges 
of the tetrahedron.  See Figure~\ref{normal}.

\begin{figure}[hbtp]
\centering
\includegraphics[width=.6\textwidth]{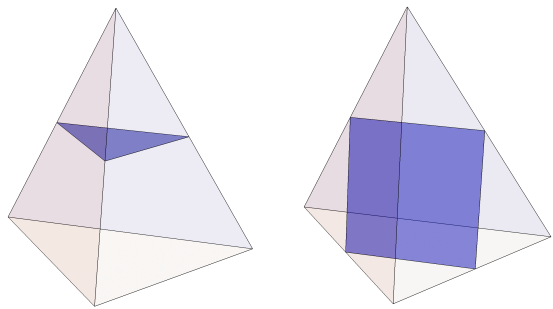}
\caption{
\label{normal}
Elementary disks forming part of a simple normal surface. A triangle separates one vertex from the other three.  A quadrilateral that separates
pairs of vertices is split into two triangles by adding a diagonal. There are four  possible triangles and three
possible quadrilaterals.}
\end{figure}

A general normal surface can intersect a single tetrahedron in many
parallel elementary disks.
By contrast, a simple normal surface with respect to 
a 3-dimensional triangulation $\tau$  is 
an embedded surface $S \subset M$  that
intersects any simplex in  $\tau$ transversely such that
the intersection of $S$ with any tetrahedron is either empty or
consists of a single elementary disk.
A simple normal surface can have more than one connected 
component, and may be non-orientable. 

%

The MidNormal algorithm, which we  describe in detail in Section~\ref{theAlg}, starts with   a surface in $\RR^3$ and 
produces a simple normal surface that
approximates this surface.
It produces triangles that are approximately of the same scale.
The edge lengths of any triangle fall into the range $[e, 1.58e]$, where $e$ is a constant that
can be set to any desired value.  As the edge lengths approach zero, the resulting surface
converges to $F$ in Hausdorff distance.  The faces of the approximating meshes have 
normal vectors  that lie in a fixed finite set of 18 normal directions, and therefore the approximation cannot be first order,
or piecewise-differentiable,
as the edge lengths approach zero. However the geometry of the
approximating surfaces is uniformly quasi-isometric to the limiting surface $F$.
We discuss methods of achieving a higher order approximation in Section~\ref{Approximation}.

\subsection{Related Work}  \label{related}

The MidNormal algorithm is closely related to the Marching Tetrahedra algorithm \cite{DoiKoide}, which is itself  a variation of 
the  Marching Cubes algorithm of Lorensen and Cline \cite{LorensenCline}. 
Marching Tetrahedra tiles space
with  tetrahedra obtained from subdivided cubes, and creates a mesh from triangles separating tetrahedral vertices in the same way as the surfaces we construct.  
It is widely used in areas such as medical imaging.
A drawback of Marching Tetrahedra is that the choice of cube based tilings can lead to low quality meshes.
The existence of sliver triangles is further exacerbated by adjustments to these algorithms that provide piecewise-$C^1$ approximations involving interpolating intersection points along edges.  The GradNormal
algorithm in contrast retains good angle bounds while producing a piecewise-$C^1$ approximation.
 
The problem of finding meshes with acute angles is difficult even for subregions of the plane with fixed boundary.
Algorithms exist to create Delaunay Triangulations for planar regions, which give various forms of
optimal regularity for a given vertex set \cite{ChengDey}.
However Delaunay Triangulations  are not in general acute, even for regions in the plane, and they can produce triangles with small angles.

It follows from work of Burago and Zalgaller that any polyhedral surface has a 
subdivision which is an acute triangulation  \cite{BuragoZalgaller}  (see also  \cite{Saraf}, \cite{ErtenUngor}).  
While it guarantees acute angles, 
 it gives no universal bound below $90^0$ for
the maximum angle that occurs.
Y. Colin de Verdiere and A. Marin showed that any smooth Riemannian surface admits a sequence of geodesic
triangulations with vertices on the surface and angles that, in the limit, lie in the intervals $[ 3\pi/10, 2\pi/5 ]$ for the case of genus zero, $[  \pi/3,  \pi/3 ]$ for  genus one, and
$[ 2\pi/7, 5\pi/14 ]$ for the case of genus greater than one  \cite{deVerdiereMarin}.
By Gauss Bonnet, these bounds are optimal for smooth surfaces.  These results are based on the Uniformization Theorem and
choosing an appropriate conformal model in the Moduli Space associated to the surface.
We are not aware of algorithms that construct
 meshes based on the above work.

Using different methods, Chew gave a procedure for mesh generation approximating a surface in  $\RR^3$ that
gives triangles that, while not acute,  have angles between $30^o $ and $ 120^o$  under certain assumptions \cite{Chew93}. 

Approaches to building surface meshes can be split into two groups. A {\em structured} mesh contains periodic combinatorics, and thus its triangle adjacencies
can be efficiently described.  An {\em unstructured} mesh has no particular global pattern, and is
therefore more adaptable to a variety of geometries
and topologies, but at a cost of requiring more bookkeeping to keep track of triangle adjacencies. 
The meshes produced by
our algorithms are structured in a 3-dimensional sense, similar to those produced by the
marching cubes and marching tetrahedra algorithms.  
They are induced by intersecting a surface with a triply periodic pattern of tetrahedra in $\RR^3$, whose
positions can be efficiently described by listing a triple of indices, but the resulting pattern of triangles does not need to have any 2-dimensional
periodicity. This captures some advantages of both the structured and unstructured approaches.

\section{A Pyramid Mesh Algorithm and its limitations}\label{cubealg}

In this section we show how to construct a mesh that is best possible if we consider only angles, and ignore issues
of tangent plane approximation.  We consider a surface-approximating mesh based  on triangles formed from the faces of
a cubical grid in $\RR^3$.  A collection of square faces
on the boundary of these cubes  limit to  $F$  in Hausdorff distance.
Subdividing each square into two  triangles by adding a diagonal gives 
a mesh with angles in the interval $[45^o , 90^o]$, but realizing only six normal directions.
If each square face is instead divided into four right triangles by adding a central vertex, the added vertex can be moved
orthogonally to the square to form a pyramid, decreasing the maximum angle in the mesh while avoiding self-intersections. 
The resulting mesh has angles in an interval
that is approximately $[55^o , 71^o]$. With additional care,  using an appropriate  subdivision and
choice of the two normal directions into which its vertex is moved, the pyramid
can be taken to consist of four equilateral triangles. 
The resulting mesh has best possible angles, all equal to $60^o$, and
all edge lengths equal.  

We state the  properties of this approximation in Theorem~\ref{pixels}
and then discuss its limitations.  Overcoming these shortcomings
is a main focus of this paper.

By scaling and translation, we can assume that $F$ lies in the unit cube $[0,1]^3$.  

\begin{theorem} \label{pixels} 
Let  $f: \RR^3 \to \RR$  be a function for which 0 is a regular value and for which 
$F = f^{-1}(0)$ is a closed surface contained in the unit cube $[0,1]^3$.
For a sufficiently large integer $N>0$ and a corresponding scale choice $e=1/N$, 
there is an approximating mesh $C(f,e)$ such that
\begin{enumerate}
\item  $C(f,e)$  represents an embedded 2-dimensional manifold.  
\item Each triangle in  $C(f,e)$  is equilateral, with all edge lengths equal to $e/6$.
\item  $C(f,e)$  converges to  $F$ in Hausdorff distance as $N \to \infty$.
\end{enumerate}
\end{theorem}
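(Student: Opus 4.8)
The plan is to construct $C(f,e)$ in two stages, building first a pixelated square-faced surface and then refining it into pyramids of equilateral triangles. Partition the cube $[0,1]^3$ into $N^3$ subcubes of side $e=1/N$, and let $R$ be the union of those closed subcubes on which $f$ is nonpositive at the subcube center, resolving the finitely many ambiguous local configurations by a fixed rule. Since $0$ is a regular value and $F$ is compact, $F$ is a smooth embedded surface with an embedded tubular neighborhood of some radius $\rho$; for $N$ large, every subcube meeting $F$ lies in this neighborhood, the center signs record on which side of $F$ each subcube lies, and $P_e=\partial R$ is a union of grid squares of side $e$. First I would check that, after the disambiguation rule, $P_e$ is an embedded closed $2$-manifold, with each grid edge bordering exactly two squares and each grid vertex having a single cycle as its link. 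I would then establish $P_e\to F$ in Hausdorff distance: each point of $F$ lies within the subcube diameter $\sqrt3\,e$ of $P_e$, and conversely each point of $P_e$ lies within $O(e)$ of $F$, the latter bound following from the reach $\rho$ of $F$ once $e<\rho$.

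For the refinement, subdivide each side-$e$ square of $P_e$ into a $6\times 6$ array of subsquares of side $e/6$, and replace each subsquare by the square pyramid $J_1$, placing an apex over the subsquare center at height $(e/6)/\sqrt2$ displaced along the face normal. The elementary identity that a square pyramid whose apex sits at height (half-diagonal) above the center, i.e.\ height $s/\sqrt2$ over a base of side $s$, has all four lateral edges equal to $s$, shows that every one of the four lateral faces is equilateral with side $e/6$; this is part (2). Writing $C(f,e)$ for the resulting triangulated surface, part (3) follows from the first stage, since every apex lies within $(e/6)/\sqrt2$ of $P_e$, so $C(f,e)$ sits in the $O(e)$-neighborhood of $P_e$ and meets every $O(e)$-ball centered on $P_e$; combined with $P_e\to F$ this gives $C(f,e)\to F$ in Hausdorff distance as $N\to\infty$. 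Note that the theorem asks only for Hausdorff convergence, not homeomorphism with $F$, so no isotopy argument is needed here.

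The remaining and most delicate point is part (1), that $C(f,e)$ is an embedded $2$-manifold, which I would verify locally. Over a single flat square of $P_e$ the pyramids are graphs of height $(e/6)/\sqrt2<e/6$ over their bases, so pyramids over distinct coplanar subsquares are disjoint except along shared base edges, where exactly two triangles meet; hence the mesh over one face is an embedded disk, and its interior vertex links are circles. The key is to displace each apex along the face normal using a fixed inward/outward rule so that, at the grid edges where two perpendicular faces of $P_e$ meet, the two faces' pyramids do not collide and the shared edge still borders exactly two triangles with each vertex link a single closed curve. I expect this collision analysis at a reflex edge of $R$ to be the main obstacle: the two perpendicular faces each send a row of pyramids of height $(e/6)/\sqrt2$ into the wide exterior wedge, and one must confirm that a bump of this height over a base of width $e/6$ cannot cross the plane of the adjacent face, and that the displacement direction can be assigned consistently across all faces. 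Along convex edges the bumps open into complementary quadrants and stay disjoint, so once the reflex case and the global consistency of the displacement rule are settled, embeddedness and the manifold property follow, completing the proof.
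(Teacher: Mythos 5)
Your overall strategy (pixelate the region cut out by $f$, then replace each small square by a pyramid of equilateral triangles) is the same as the paper's, but the proposal has a genuine gap at each of the two places where the paper's proof does its real work. First, the step ``I would check that, after the disambiguation rule, $P_e$ is an embedded closed $2$-manifold'' cannot be checked, because it is false in general, and no local disambiguation rule can repair it: the region $R$ is unambiguously defined by the center signs, and its boundary can be genuinely non-manifold. Concretely, take a vertical grid edge with the four surrounding cube centers at $(\pm e/2,\pm e/2, e/2)$, and let $f(x,y,z)=(z-e/2)+xy-\epsilon$ with $0<\epsilon<e^2/4$. Then $f$ is smooth with $0$ a regular value, but $f<0$ at exactly the two diagonally opposite centers where $xy<0$, so two cubes of $R$ meet only along that edge while the other two cubes around the edge are absent; four squares of $P_e$ then share the edge, and the link of its interior points is not a circle. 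This configuration arises near saddle points of $F$ whose tangent plane is perpendicular to a grid axis with principal directions at roughly $45^o$ to the grid, and whether it occurs at scale $e=1/N$ is a diophantine condition on the saddle height that can recur for infinitely many $N$, so one cannot simply wave it away by taking $N$ large. The paper's proof confronts exactly this: it lists the singular configurations of $\partial B$ (isolated vertex pinches and pinched edges), then subdivides into $e/3$-cubes and adds a layer of small cubes around the singular set so that the resulting boundary $\partial C$ is an honest manifold \emph{before} any pyramids are built. Your proposal has no analogue of this repair step.

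Second, your criterion for part (1) --- ``confirm that a bump of this height over a base of width $e/6$ cannot cross the plane of the adjacent face'' --- is both automatically true and insufficient. It is automatic because each pyramid lies in the prism over its own base (the apex sits over the base center), so it never crosses the plane of a perpendicular neighboring face. It is insufficient because two \emph{outward} pyramids at a reflex edge still intersect each other: placing the two faces at $\{z=0,\ x\ge 0\}$ and $\{x=0,\ z\ge 0\}$, the lateral triangle with vertices $(0,0,0)$, $(e/6,0,0)$, $(e/12,e/12,\sqrt{2}e/12)$ and the lateral triangle with vertices $(0,0,0)$, $(0,0,e/6)$, $(\sqrt{2}e/12,e/12,e/12)$ cross along a segment (for instance the point approximately $(0.059,\,0.042,\,0.059)\,e$ lies in both). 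So no height estimate rescues same-side bulging at reflex edges; what is actually needed is the paper's alternating (checkerboard) assignment of bulge directions, so that adjacent pyramids always deform to opposite sides, together with the case analysis at vertices of $\partial C$ of valence $3$, $4$, $5$, $6$ showing this alternation can be made globally consistent. Note also that this case analysis is only available because the singularity repair in the first step guarantees every vertex link of $\partial C$ is a single cycle in the octahedral link, which is precisely the structure your unrepaired $P_e$ may lack.
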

\begin{proof}
Tile the unit cube with  $N^3$ subcubes, each of side length $e$.
For $N$ sufficiently large, $F$ has an embedded tubular neighborhood of
radius $2e$.
Evaluate the function $f$ at all vertices of these sibcubes, and mark all cubes
that have both a negative and a
nonnegative vertex. Consider the region $B$ consisting of the union of these marked cubes.

Let $p \in F$ be any point in $F$.  
A radius  $e$ ball in $\RR^3$ tangent to $F$ at $p$ has interior that is disjoint from $F$.
If not, consider the point $q \in F$ closest to  the center of the ball $c$.  A normal line from $q$ to $c$ 
has length less than $e$, as does the normal from $p$ through $c$.  But we have assumed that 
$F$ has a tubular neighborhood of radius $2e$, which implies that any two normals of length 
$e$ are disjoint.  Thus $F$ is disjoint from any such ball.
So the two balls $B_1, B_2$  of radius  $e$ tangent to $F$ at $p$, one on each side of $F$,
have disjoint interiors and each meets $F$ only at  $p$.  Each subcube has 
diameter   $(\sqrt 3)e < 2e $. 
In particular,  the center point $c_1 \in B_1$ lies within
$e$  of some cube vertex $v_1 \in B_1$. Similarly the center point $c_2 \in B_2$ lies within
$e$  of some vertex  $v_2$ of a cube $B_2$ with $v_2 \in B_2$.
Moreover $f$ evaluates to a positive value on one of these vertices, say $v_1$ and a negative value on the other, $v_2$,
since they are separated by $F$.  The polygonal path from   $v_1$ to $p$ to $v_2$ 
has length less than $4e$ and  must cross a  cube in $B$.
Thus each point of  $F$ lies within distance $2e$ of a cube in $B$.

The  boundary of  $B$  is a 2-dimensional manifold, not necessarily connected, except for 
\begin{enumerate}
\item Isolated points meeting where exactly two cubes of $B$ meet at a common vertex,
\item A collection of edges that meet exactly
two cubes of $B$, and that  together form a graph.
\end{enumerate}

We now divide each $e \times e \times e$ cube of the tiling into 27 smaller cubes of side-length $e/3$,
and then add a layer of these smaller cubes to $B$ so that  $\partial B$ becomes a non-singular surface.
To do so, we add to $B$ all  $e/3 \times e/3 \times e/3$ cubes that meet the singular
set described above. The resulting collection of size $e/3 \times e/3 \times e/3$ cubes $C$
has a surface boundary $\partial C$ that is tiled by squares of size $e/3 \times e/3$.
Moreover the Hausdorff distance from $F$ to $\partial C$ is less than $2e$.

Next divide each  $e/3 \times e/3$ square  of $\partial C$    into four squares of
size $e/6 \times e/6$,
and further divide each $e/6 \times  e/6$ square into four triangles by adding a vertex at the center of the square,
and then displacing the added vertex perpendicularly from the square by a  distance of $\sqrt{2}e/2$, forming a pyramid
with equilateral faces. There is a choice of two normal directions in which to move the vertex,
and we now show how to choose this direction to ensure that two adjacent pyramids
do not intersect. 

Call a vertex of $\partial C$ {\em flat} if it has valence four and its four neighboring squares all lie in a plane.
The squares of $\partial C$  adjacent to a non-flat vertex
give a cycle in the link of that vertex in $\partial C$, namely a cycle in the 1-skeleton of an octahedron.
Thus vertices of $\partial C$ are adjacent to three, four, five or six squares. 
If the number is four or six, we alternate sides as we move around the vertex.
Thus  pyramids on adjacent squares deform into different sides and do not intersect.
If the number of squares around a vertex is three, then we orient them outward from the octant that they cut off.
If the number of squares around a vertex is five, then two of these squares are coplanar.
We orient the two pyramids on these squares in the same direction and alternate the direction of the
other three.   In each case the resulting pyramids have disjoint interiors.

Call the resulting triangulated surface $C(f,e)$.
Pyramids that have no common vertex have disjoint interiors, since the  distance of the base of a
pyramid from a  $e/3 \times e/3$ square of $\partial C$ that it does not intersect is at least $ e/3  >   \sqrt{2}e/12$.
Thus $C(f,e)$ is embedded, with the Hausdorff distance from $F$ to $C(f,e)$ is at most $2e$.
The claimed properties now follow.
\end{proof}

Note that we made no claim about the topology of the surface $C(f,e)$, which in fact will
not be homeomorphic to $F$ as constructed. In fact as constructed $C(f,e)$, 
will have more components then $F$. With more care we can arrange that $F$ is homeomorphic
to  $C(f,e)$, however we will not pursue this here.

The following {\em Pyramid Mesh Algorithm} constructs this mesh.
\begin{algorithm}
\caption{Pyramid Mesh Algorithm}
\begin{algorithmic}[1]

\Procedure{PyramidMesh}{$f,e$}       
\State Input;  A function $f: I^3 \to \RR$ and an edge length $e = 1/N$.
\For{$ i=1$ to $N^3$ }
\State Determine the sign of $f$  at each vertex of the 
$N^3$ sub-cubes $C_{i}$ of $I^3$.  
\State Create a list $B$ containing each sub-cube that contains both a negative and a positive vertex. 
\State Create a new list of cubes $B_1$  by dividing  each cube in $B$ into 27 sub-cubes  of  size $(e/3)^3$.
Add a layer of cubes around $B_1$ by adding to $B_1$ all cubes of size $(e/3)^3$ that  are not in $B_1$ but intersect a cube in $B_1$.
\State Divide each square in the boundary $\partial B_1$ into four sub-squares of size
$(e/6)^2$, producing a list of $M < 1296N^3$ squares $Q_{m}, ~~1 \le m \le M$.
\EndFor
\For{$ m=1$ to $M$ }
\State Add to a list of triangles ${\mathcal T}$
four equilateral triangles  $\{Q_{m} ^1, Q_{m} ^2, Q_{m} ^3,  Q_{m} ^4 \}$
forming a pyramid above each $Q_{i}.$ Orient this pyramid using the alternating rule described in Theorem~\ref{pixels}
that avoids interior intersections between distinct triangles. 
\EndFor
\State Output the list of equilateral triangles ${\mathcal T}$.
\EndProcedure
\end{algorithmic}
\end{algorithm}

\subsection{Limitations of the Pyramid Mesh Algorithm}
While the Pyramid Mesh Algorithm produces all equilateral triangles,
there are several issues 
that limit its applicability and lead us to
the normal surface based algorithms described in the following sections of this paper.
The first major drawback is that  the resulting triangles lie in planes
that do not give good approximations to nearby tangent planes of $F$.
The approximating mesh gives a zeroth order, but not a first order approximation of $F$.
One consequence of this
is that the induced metric on the approximating surfaces does not converge to that of 
the surface $F$.

A second limitation is that the nearest point projection from  the
 approximating mesh $C(f,e)$ to $F$ does not give a homeomorphism to $F$.  In fact, the nearest point projection
from $C(f,e)$ to $F$ can fail to be one-to-one, even as the mesh edge lengths $e$ approach zero.
This is illustrated in Figure~\ref{projectionproblem}.

\begin{figure}[htbp]
\begin{center}
\includegraphics[width=.3\textwidth]{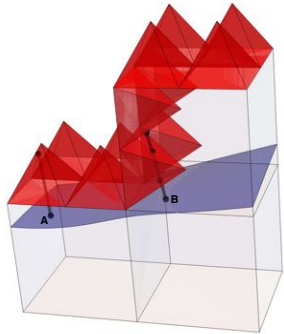}
\caption{A surface is represented by deforming the boundary of the collection of subdivided cubes so
that each square face is replaced by four pyramids, each with four equilateral triangular faces.
As indicated in the above example, the
 nearest point projection from the resulting equilateral mesh to $F$,
 represented here by a close-to-straight curve, may fail to give a homeomorphism,
even when the size of the cubes $\to 0$. The nearest point projection to point $B$ is not 1-1.} 
\label{projectionproblem}
\end{center}
\end{figure}

Another drawback is  that the Pyramid Mesh Algorithm yields extra components, which
would then need to be discarded.
Due to these limitations we will not explore it further in this paper.
 
\section{The MidNormal Algorithm} \label{theAlg}

In this section we describe in detail the MidNormal Algorithm, which generates a mesh from a tiling of $\RR^3$
by tetrahedra.  The algorithm takes as input a function $f: \RR^3 \to \RR$ with 
domain containing the unit cube $I^3 = [0,1] \times [0,1] \times [0,1] $ 
and outputs a mesh that approximates the level  set $F = f^{-1}(0)$.
It relies on filling $\RR^3$  with tetrahedra of a fixed shape and intersecting a surface with these tetrahedra to generate a triangulation.
In this section we describe for each constant $a>0$ a  tetrahedral shape $\tau_a$ that will be
used to tile $\RR^3$.  All the tetrahedra used for a given value of $a$ will be isometric (up to reflection).

It is natural to try to tile $\RR^3$ with near regular tetrahedra.
An interesting historical note is that
Aristotle claimed the false result that regular tetrahedra can meet five-to-an-edge and fit together to tile space \cite{Senechal}. 
In fact, the dihedral angle of a regular tetrahedron is somewhat less than  $2\pi/5 = 72^o$, so they don't
fit evenly around an edge.

The search for tetrahedra that do fit together  led Sommerville to find four tetrahedral shapes that tile $\RR^3$.  Baumgartner found a further example and 
Goldberg discovered three infinite families. Eppstein, Sullivan and Ungor constructed tilings of space by acute tetrahedra, with all dihedral angles less than $90^o$ \cite{EppsteinSullivanUngor}.
  
At first glance it may appeart that tetrahedra that are  as close to regular as possible  are preferable for producing
regular triangulations, but this is not the case. 
In fact, our method would not give meshes with acute triangles when applied to a regular or acute tetrahedron. 
The tilings that seems to work best for the MidNormal Algorithm are certain members of the family of tilings discovered by Goldberg.
A tetrahedron in the Goldberg family is shown in Figure~\ref{GoldbergTet}.
It is constructed by first tiling the $xy$-plane with equilateral triangles of unit length.
Three edges of the tetrahedron are graphs over edges of one of these equilateral triangles, each rising by 
a distance of $a$ from its initial to its final vertex. The other edges connect pairs of the resulting four
vertices.  The vertical edge $AB$ has length $3a$.  If we rescale by
a factor of $e$ then the equilateral triangle has edge length $e$ and the edge $AB$ has length
$3ae$.   The resulting tetrahedron is $\tau_{a,e}$. In scale independent computations we generally take $e=1$ for simplicity.
 
\begin{figure}[htbp]
\centering
\begin{subfigure}{.4\textwidth}
  \centering
  \includegraphics[width=1.1\linewidth]{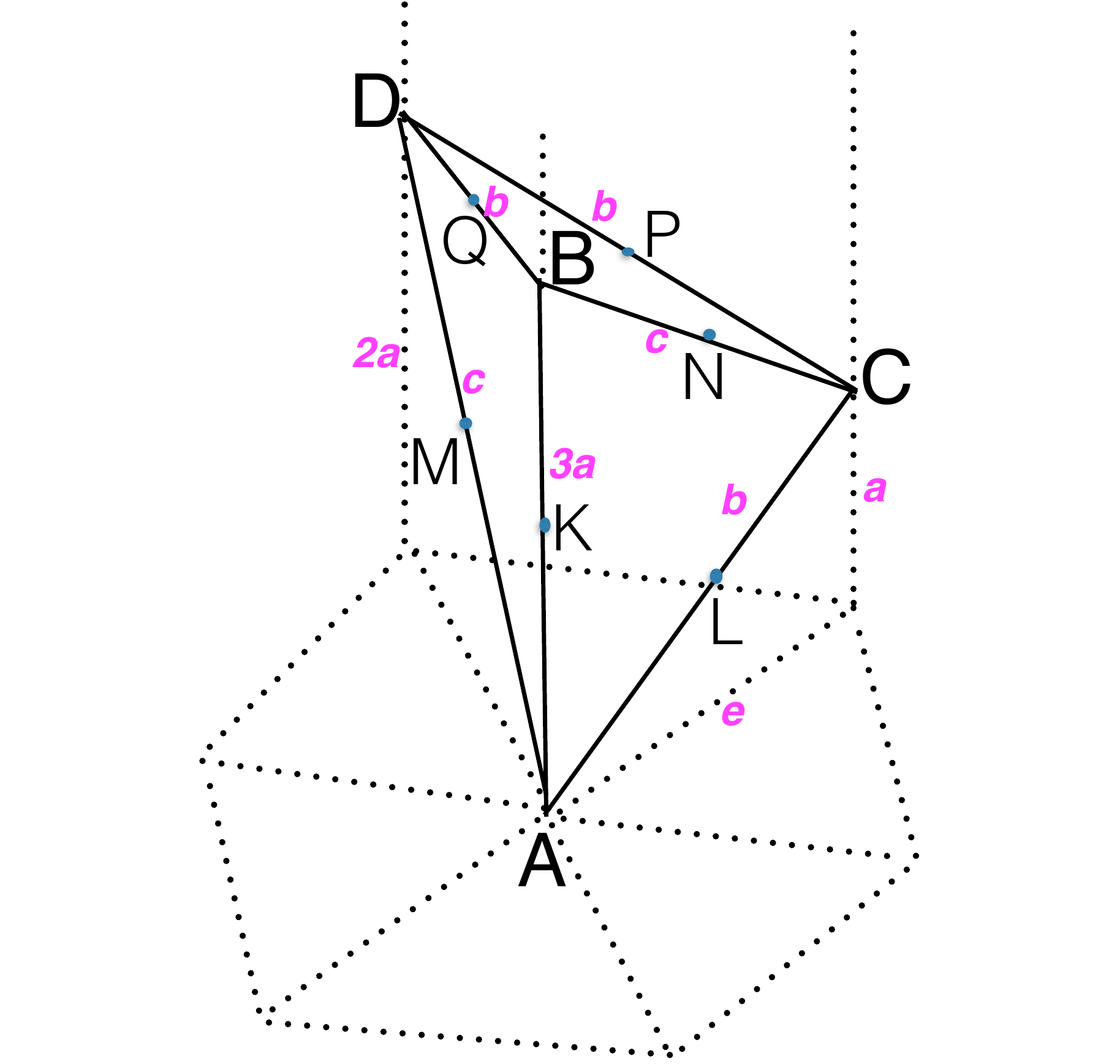}
  \caption{}
  \label{fig:sub1}
\end{subfigure}%
\begin{subfigure}{.2\textwidth}
  \centering
  \includegraphics[width=.8\linewidth]{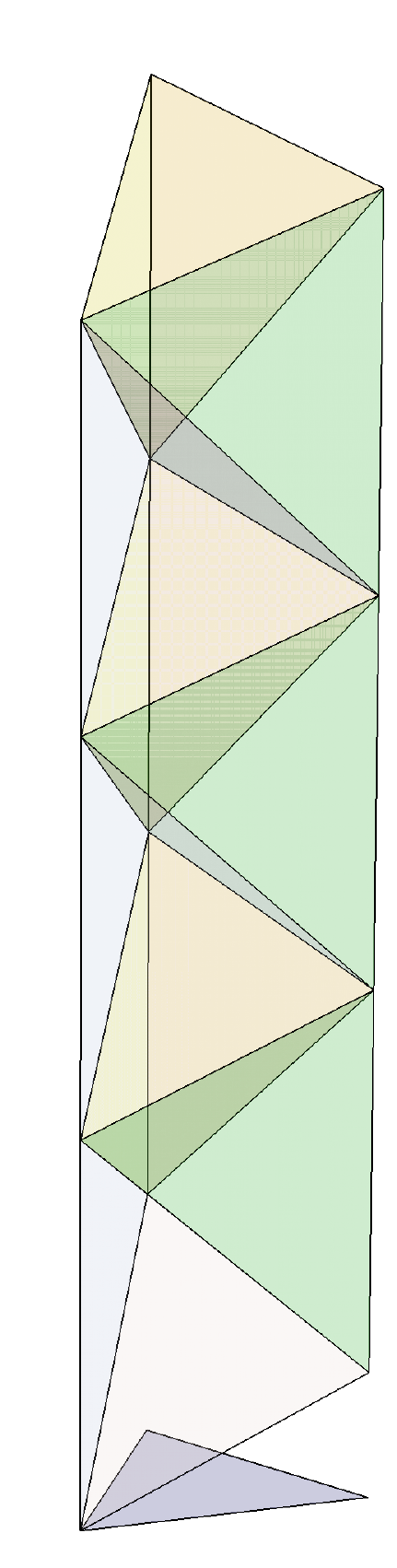}
  \caption{}
  \label{fig:sub2}
\end{subfigure}
~~~~
\begin{subfigure}{.325\textwidth}
  \centering
  \includegraphics[width=.9\linewidth]{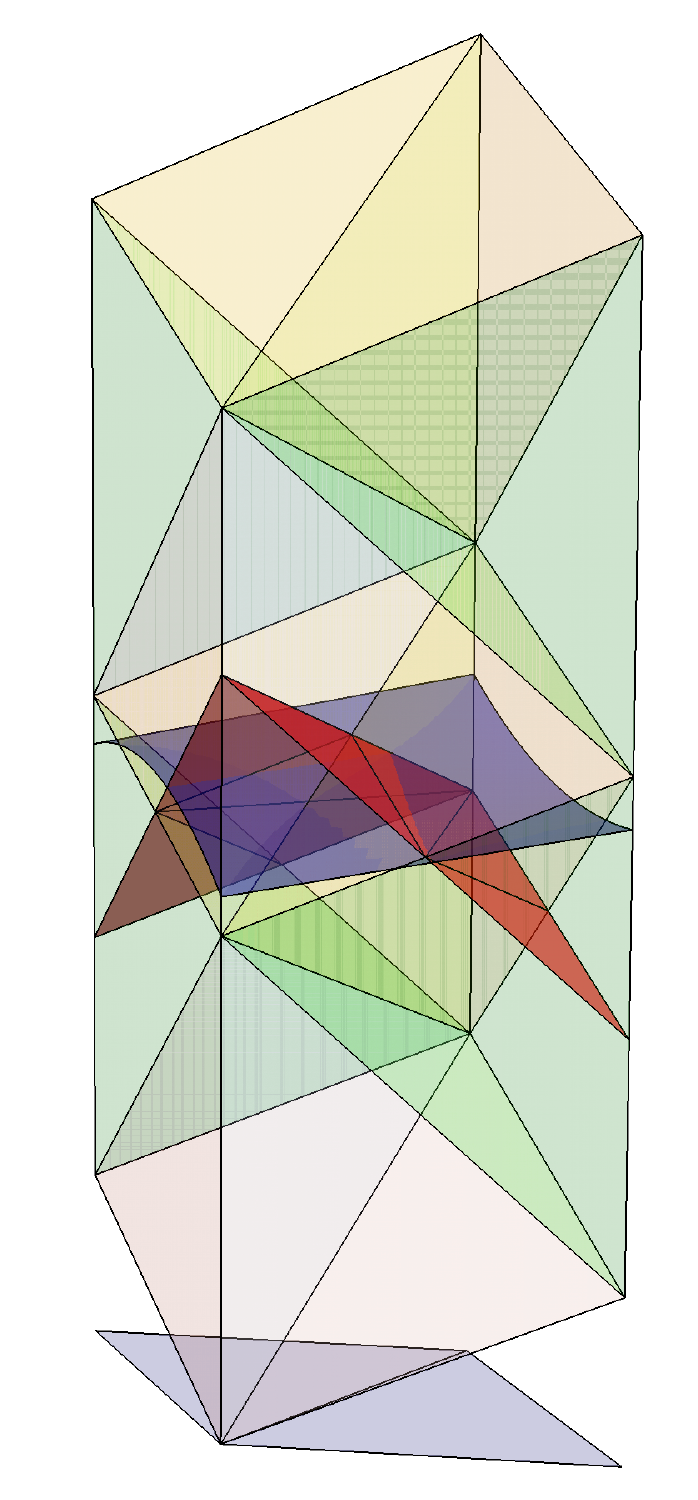}
  \caption{}
  \label{fig:sub3}
\end{subfigure}
\caption{(A) A tetrahedron $\tau_{a,e}$, one of a family  that tiles $\RR^3$. The scale independent 
parameter $ a \in (0,\infty )$ determines the
shape. (B) These tetrahedra stack to tile a vertical column over an equilateral triangle (center).  (C) A surface in $\RR^3$ divides
the vertices of these tetrahedra, leading to a triangular mesh. The tetrahedra shown here correspond to $a_3 = \sqrt{3}/4 \approx 0.43$.
Also shown in (C) is part of a normal surface.}
\label{GoldbergTet}
\end{figure}
 
We first prove a lemma establishing when a Goldberg tetrahedron is non-obtuse.
Certain tilings  from the Goldberg family  give a mesh with an acute triangulation.
We will compute the parameters $a$ that give optimal angles for $C^0$ and $C^1$ 
 approximations of a surface based on normal surfaces.
With the choice of tetrahedral tiling $\tau_a$ specified, we can describe the MidNormal algorithm.

\begin{algorithm}
\caption{MidNormal Algorithm}
\begin{algorithmic}[1]

\Procedure{MidNormal}{$a,e,f$}       
\State Input a function $f: I^3 \to \RR$, a choice of scale $e=1/N$, and a  shape parameter $a$ for
    a tiling of the unit cube $I^3$ by Goldberg tetrahedra $\tau_i$.
\For{$ i=1$ to $6N^3$ }
\State  Compute the sign of $f$ at the four vertices ${\tau_i}^1, {\tau_i}^2, {\tau_i}^3, {\tau_i}^4$ of tetrahedron $\tau_i$. If the value of $f$ at a vertex is exactly zero,  take the sign to be positive.
\State If the sign of $f$ is different at one vertex of $\tau_i$  from the sign at the remaining  three vertices, add to ${\mathcal T}$ the preferred triangle in $\tau_i$ that separates that vertex from the remaining three.
\State If the sign of $f$ is different at two vertices from the sign at the remaining  two vertices of  $\tau_i$, 
add to ${\mathcal T}$ two triangles formed by taking the quadrilateral in $\tau_i$ that separates the two pairs of vertices by adding a diagonal as follows:  
For quadrilateral $LMNQ$ add diagonal $MN$. 
For quadrilateral $KMPN$ add diagonal $MN$. 
For quadrilateral $KLPQ$ add diagonal   $LQ$   if $a \le a_0 = \sqrt{2}/4$ and diagonal $KP$ otherwise.
\EndFor
 \State Output the list of triangles ${\mathcal T}$.
\EndProcedure

\end{algorithmic}
\end{algorithm}

We will default the choice of $a$ to the value $a_3 = \sqrt{3}/4 \approx 0.43$, as we will show that this
gives a maximal lower bound on the mesh angle.
The value $a_0 = \sqrt{2}/4$ will be used to get a maximal lower bound on the the mesh angle in a second algorithm that gives  angle a $C^1$ approximation, as discussed in Section~\ref{GradNormal}. The value $a_0 = \sqrt{2}/4$ that appears in step (3)
corresponds to a case where the quadrilateral $KLPQ$ is a square,  and either choice of diagonal
gives rise to angles of  $90^o$ and $45^o$.
 
\subsection{Computations with Tetrahedra and Triangles} \label{computations}

We now compute the edge lengths and angles in the triangular meshes obtained by applying the MidNormal algorithm to 
the tetrahedral tilings of $\RR^3$ produced by the Goldberg family. We carry out a computation that shows how to choose
the parameter $a$ among this family of tilings to achieve optimal shapes for the triangles in these meshes.

The tetrahedron $ABCD(a,e)$ in Goldberg's family projects onto an equilateral triangle of side length $e$, as in Figure~\ref{GoldbergTet}.
It has edge lengths $(3a, b, b, c, c, c)$, where $b^2=a^2+e^2$ and $c^2=4a^2+e^2$.   
This tetrahedron and its reflection tile  $\RR^3$. The tiling is obtained by first taking copies of the
tetrahedron that are successively rotated by $2\pi/3$ horizontally and translated by distance $a$ vertically (in the $z$-direction). These fill up a vertical column 
above an equilateral triangle. Reflection through the faces of these vertical columns gives the tiling of $\RR^3$. 
We are interested in simple normal surfaces generated by these tilings.

These tiling tetrahedra are not acute, as the edges  $AD$ and $BC$  have valence four.
The size  $e$ of the equilateral triangles in the $xy$-plane gives a choice of scaling and
does not affect angles, so for simplicity we set $e=1$. 
The choice of the parameter $a$ determines the angles in the simple normal disks  through the midpoints 
 $K, L, M, N, P, Q$.
We now determine when a Goldberg tetrahedron is non-obtuse.

\begin{lemma}\label{nonobtuse}
The Goldberg tetrahedron $\tau_a$ with parameter $a$ is obtuse for $a > \sqrt{2}/2$ and non-obtuse for
$a \le \sqrt{2}/2.$
\end{lemma}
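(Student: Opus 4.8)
The plan is to realize $\tau_a$ concretely, compute all six dihedral angles as explicit functions of $a$, and show that exactly one of them governs obtuseness. Since angles are scale–invariant I set $e=1$ and place the tetrahedron as a single ``step'' in a vertical column over the unit equilateral triangle with base vertices $(0,0),(1,0),(\tfrac12,\tfrac{\sqrt3}{2})$, taking
\[
A=(0,0,0),\quad C=(1,0,a),\quad D=(\tfrac12,\tfrac{\sqrt3}{2},2a),\quad B=(0,0,3a).
\]
Reading off the displacement vectors recovers the vertical edge $AB$ of length $3a$, the edges $AD$ and $BC$ of length $c=\sqrt{4a^2+1}$, and the edges $AC,BD,CD$ of length $b=\sqrt{a^2+1}$. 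Following the paper's convention (an edge of valence four in the tiling forces a right dihedral angle, which is exactly what prevents acuteness), I read ``non-obtuse'' as: every dihedral angle is at most $90^\circ$.

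For each edge $PQ$ I would take the two incident faces, form their outward normals as cross products of edge vectors (fixing the sign by pairing against the opposite vertex), and evaluate $\cos(\text{dihedral})=-\,n_1\!\cdot n_2/(|n_1|\,|n_2|)$. I expect four of the six edges to be inert. The vertical edge $AB$ projects onto a single base vertex, so its dihedral angle equals the $60^\circ$ angle of the equilateral triangle for every $a$. The edges $AD$ and $BC$ are the valence-four edges; I would verify directly that their incident face normals are orthogonal, so these dihedral angles equal exactly $90^\circ$ independently of $a$ (as must happen for four congruent copies to close up around them). For $AC$ and $BD$ the computation gives $\cos(\text{dihedral})=3a/\bigl(2\sqrt{3a^2+3/4}\bigr)>0$, so these remain acute for all $a>0$.

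This isolates $CD$ as the only edge that can become obtuse. There the two incident faces have normals of equal length, and I obtain
\[
\cos(\text{dihedral at }CD)=\frac{\tfrac34-\tfrac32 a^2}{3a^2+\tfrac34},
\]
whose sign is that of the numerator $3-6a^2$. Hence the dihedral angle at $CD$ is acute for $a<\sqrt2/2$, exactly $90^\circ$ at $a=\sqrt2/2$, and obtuse for $a>\sqrt2/2$. Combined with the previous paragraph, the maximum dihedral angle of $\tau_a$ is at most $90^\circ$ precisely when $a\le\sqrt2/2$, which is the claim.

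As a consistency check I would note that the same threshold governs the face angles: the faces come in two shapes, the isosceles triangle with sides $(b,b,c)$ and the triangle with sides $(b,c,3a)$, and the law of cosines shows each first acquires an obtuse angle when $c^2>2b^2$, respectively $9a^2>b^2+c^2$, both of which simplify to $2a^2>1$. The only real difficulty here is organizational rather than conceptual: one must orient the face normals consistently and confirm that the edges other than $CD$ never attain the maximum, so that the whole statement collapses to the single sign change of $1-2a^2$.
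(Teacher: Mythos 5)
Your proposal is correct and follows essentially the same route as the paper's proof: place $\tau_a$ with explicit coordinates, compute face normals, observe that the dihedral angles at $AB$, $AD$, $BC$ are constantly $60^o$, $90^o$, $90^o$, that those at $AC$ and $BD$ satisfy $\cos(\cdot)=3a/\sqrt{12a^2+3}>0$, and that the angle at $CD$, with $\cos(\cdot)=(1-2a^2)/(1+4a^2)$, is the sole angle governing obtuseness, changing sign exactly at $a=\sqrt{2}/2$. Your formulas agree with the paper's after simplification (your face-angle consistency check is extra, and your edge-length bookkeeping $AC=BD=CD=b$, $AD=BC=c$ is in fact the correct one, matching the coordinates rather than the paper's stated tuple $(3a,b,b,c,c,c)$).
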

\begin{proof}
We compute dihedral angles from the dot products of normal vectors to faces of the tetrahedron $ABCD$. 
Normal vectors to faces $ABC$, $ABD$, $ACD$ and $BCD$ are: 
\[
(0, -3 a, 0), \quad
\frac{3a}{2}( -\sqrt{3},  1,  0), \quad 
\frac{\sqrt{3}}{2} ( a, \sqrt{3} a, -1 ), \quad 
\frac{\sqrt{3}}{2}( 2a, 0, 1 ).
\]
Of the six dihedral angles, three do not depend on the parameter $a$.
Edges  $AB$, $BC$ and $AD$ have dihedral angles $60^o, 90^o, 90^o$.   Edges $AC$ and $BD$ have dihedral angle
  $\arccos  (3a/ \sqrt{3+12a^2}) $, and are acute for all $a$. 
Edge $CD$ has dihedral angle $\arccos{(1-2a^2)/(1+4a^2)}$, and is non-obtuse for $a \in (0 , \sqrt{2}/2]$.
\end{proof}

Theorem~\ref{Range_of_a} characterizes which Goldberg tetrahedra have elementary normal
disks that are acute, taking into account all ways in which the three quadrilateral elementary disks
can be divided into two triangles by a diagonal.

\begin{theorem} \label{Range_of_a} 
The MidNormal algorithm gives rise to an acute  triangulation when  the parameter $a$ lies in
the  range $(0 ,\   \sqrt2/4 )\cup(  \sqrt2/4, \   \sqrt2/2)$. 
For $a = \sqrt2/4$ the algorithm gives a non-obtuse triangulation.
The value of $a$ that maximizes the minimum angle of the normal surface triangulation among the Goldberg tilings
is  
\[
a_3 = \sqrt{3}/4 \approx 0.43
\]
 and this gives a mesh with angles  in the interval $[49.1^o, 81.8^o]$.
This value of $a$ that minimizes the maximum angle among the Goldberg tilings
 is  
\[
a_1=\frac14\sqrt{\frac{19-3\sqrt{33}}{2}}\approx 0.2349,
\]
 and gives a mesh with angles  in the interval $[38.3^0 , 76.8^0]$.
 \end{theorem}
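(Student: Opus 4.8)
The plan is to turn every angle that can appear in the mesh into an explicit function of the shape parameter $a$, and then run two one-variable optimizations. First I would set $e=1$ and fix coordinates, e.g. $A=(0,0,0)$, $B=(0,0,3a)$, $C=(1,0,a)$, $D=(1/2,\sqrt3/2,2a)$, which one checks reproduces the face-normals recorded in Lemma~\ref{nonobtuse}. I would then list the six edge midpoints $K,L,M,N,P,Q$ and the elementary disks: the four vertex-triangles and the three quadrilaterals $LMNQ$, $KMPN$, $KLPQ$, each quadrilateral cut by the diagonal prescribed in the algorithm. Since each edge of an elementary disk joins midpoints of two tetrahedron edges sharing a vertex, its length is half of the opposite edge; so with $b=\sqrt{1+a^2}$ and $c=\sqrt{1+4a^2}$ the entire mesh collapses to only three triangle shapes: the isosceles $T_1=\{b/2,b/2,c/2\}$ (vertex-triangles at $A,B$, and the halves of $KMPN$), the scalene $T_2=\{3a/2,b/2,c/2\}$ (vertex-triangles at $C,D$, and the halves of $LMNQ$), and the isosceles $T_3=\{c/2,c/2,d\}$ from the rhombus $KLPQ$, where $d$ is the chosen diagonal.

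Next I would compute all three angles of each $T_i$ by the law of cosines. The rhombus $KLPQ$ is the only disk that can degenerate: its diagonals have lengths $\sqrt{1/4+4a^2}$ and $\sqrt3/2$, equal exactly at $a=\sqrt2/4$, where $KLPQ$ is a square. The algorithm always takes the shorter diagonal, so $d=\min(\sqrt{1/4+4a^2},\,\sqrt3/2)$ and the apex angle of $T_3$ opposite $d$ has $\cos=\tfrac{1/2-4a^2}{1+4a^2}$ for $a\le\sqrt2/4$ and $\cos=\tfrac{4a^2-1/2}{1+4a^2}$ for $a\ge\sqrt2/4$, vanishing (a right angle) precisely at $a=\sqrt2/4$. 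For the remaining shapes I would verify that the largest angle of $T_1$ has $\cos=\tfrac{1-2a^2}{2(1+a^2)}$ and that the largest angle of $T_2$ (with relevant cosine $a/b$ or $(1-2a^2)/(bc)$) stays strictly positive on $(0,\sqrt2/2)$. Collecting these shows every angle is below $90^\circ$ on $(0,\sqrt2/4)\cup(\sqrt2/4,\sqrt2/2)$, with the unique right angle being the $T_3$-apex at $a=\sqrt2/4$, which gives the acuteness and non-obtuseness statements.

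For the optimizations I would isolate the extreme angles as functions of $a$. The smallest angle in the whole mesh is the lesser of the base angle of $T_3$, with $\cos=d/c$, and the smallest angle of $T_2$, with $\cos=2a/c$ once $a>\sqrt2/4$; the former increases and the latter decreases in $a$, so $\theta_{\min}$ is maximized where they cross. Setting $\tfrac{\sqrt3}{2\sqrt{1+4a^2}}=\tfrac{2a}{\sqrt{1+4a^2}}$ gives at once $a_3=\sqrt3/4$, with common value $49.1^\circ$ and then-largest angle (the $T_3$-apex) $81.8^\circ$. Dually, the largest angle is the greater of the $T_3$-apex and the largest angle of $T_2$ opposite $c$, with $\cos=a/b$; one increases and one decreases, so $\theta_{\max}$ is minimized at their crossing $\tfrac{1/2-4a^2}{1+4a^2}=\tfrac{a}{\sqrt{1+a^2}}$. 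Squaring and clearing denominators reduces this to the quadratic $16a^4-19a^2+1=0$ in $a^2$, whose relevant root $a_1^2=\tfrac{19-3\sqrt{33}}{32}$ yields $a_1=\tfrac14\sqrt{\tfrac{19-3\sqrt{33}}{2}}$ and maximal angle $76.8^\circ$.

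The hard part is not writing down these crossings but certifying that they are genuine global extrema over all of $(0,\sqrt2/2)$. This requires controlling the monotonicity of each of the finitely many angle functions across the diagonal switch at $a=\sqrt2/4$ (both regimes of $d$ must be treated), and ruling out that some competing angle overtakes the binding pair away from the crossing—in particular confirming that the $a>\sqrt2/4$ branch cannot beat $a_1$ in the min-max problem. Because all the expressions are algebraic, this is exactly where the paper's interval-arithmetic verification is needed to make the numerical bounds $49.1^\circ,81.8^\circ,38.3^\circ,76.8^\circ$ fully rigorous; selecting the correct root of $16a^4-19a^2+1=0$ to obtain the closed form for $a_1$ also demands a little care.
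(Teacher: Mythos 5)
Your proposal is correct and takes essentially the same approach as the paper: the same coordinates, the same reduction to three triangle shapes (noting the prescribed diagonals of $KMPN$ and $LMNQ$ reproduce the unavoidable triangle types), the same law-of-cosines angle formulas, and the same two-case split according to the diagonal of $KLPQ$. The only real difference is presentational: you derive $a_3$ and $a_1$ analytically as crossings of the two binding angle functions (via $\sqrt{3}/2 = 2a$ and $16a^4-19a^2+1=0$), which is precisely what underlies the paper's closed forms, whereas the paper exhibits the angle functions graphically (Figures~\ref{fig:1A23}, \ref{fig:1B23}) and reads off the extrema, with both treatments leaving the global-extremality check to inspection of the finitely many plotted/computed angle curves.
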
  
\begin{proof} 
The tetrahedron $\tau_a$  from Goldberg's family with $e=1$ can be placed in $\mathbb R^3$ so that its vertices  $ABCD$  have coordinates 
$ A(0, 0, 0), B(0, 0, 3a), C(1, 0, a), D(1/2 , \sqrt{3}/2 , 2a). $
Its midpoints then have coordinates 
$  K(0, 0, 3a/2),  L(1/2, 0, a/2),    M(1/4, \sqrt{3}/4, a), \\  N(1/2, 0, 2a),   P(3/4, \sqrt{3}/4, 3a/2), Q(1/4, \sqrt{3}/4, 5a/2). $ 
The elementary normal disks that occur in the tetrahedron $ABCD$ are:\\
 Four triangles:  $\triangle KLM$ and $\triangle KNQ$ with edge lengths $(b/2, b/2, c/2)$,
and $\triangle LNP$ and $\triangle MPQ$ with edge lengths $( 3a/2, b/2, c/2)$.\\
Three quadrilaterals: $KLPQ$, $KMPN$ and $LMNQ$.

Each quadrilateral  can be split into pairs of congruent triangles using either of the two diagonals.  
The resulting edge lengths are 
given below:

\begin{enumerate} 
\item $KLPQ:$ In this quadrilateral  $KL=LP=PQ=KQ= c/2$ and the diagonal lengths are $KP= \sqrt3 /2 $, $LQ=\sqrt{4a^2+1/4}$. 
Dividing the quadrilateral into two triangles along each choice of diagonal gives
  \begin{enumerate}
  \item triangles $\triangle KLP\cong\triangle KPQ$ with edge lengths $ c/2, c/2, \sqrt{3}/2$, or
  \item  triangles $\triangle KLQ\cong\triangle LPQ$ with edge lengths $ c/2, c/2 , \sqrt{4a^2+1/4}$.
  \end{enumerate}
\item   $KMPN:$ In this quadrilateral $KM=MP=NP=KN=b/2$ and the diagonal lengths are $MN=c/2$, $KP= \sqrt3/2$. 
Dividing the quadrilateral into two triangles along each choice of diagonal gives
   \begin{enumerate}
   \item triangles $\triangle KMP\cong\triangle KPN$ with edge lengths $ b/2,  b/2,  \sqrt{3} /2 $, or
   \item  triangles $\triangle MKN\cong\triangle MPN$ with edge lengths $b/2, b/2, c/2$.
   \end{enumerate}
\item   $LMNQ:$  In this quadrilateral $LN=MQ= 3a/2$, $LM=NQ= b/2$ and the diagonal lengths $MN= c/2$, $LQ=\sqrt{4a^2+ 1/4}$. 
Dividing the quadrilateral into two triangles gives:
  \begin{enumerate}
   \item triangles $\triangle LNQ\cong\triangle QML$ with edge lengths $ 3a/2, b/2, \sqrt{4a^2+ 1/4}$,  or
   \item triangles $\triangle NLM\cong\triangle MQN$ with edge lengths $ 3a/2, b/2, c/2$.
   \end{enumerate} 
\end{enumerate}
We now consider how the  choice for the parameter $a$ affects the minimal and maximal angles in the mesh given by the corresponding simple normal surface.\medskip

In constructing a simple normal surface mesh we have a choice of which diagonal to add to decompose an  elementary disk of quadrilateral type
into two triangles. Elementary disks that are triangles may occur and cannot be avoided, so 
any of the triangles $\triangle KLM$, $\triangle KNQ$, $\triangle LNP$ and $\triangle MPQ$ might occur in a simple normal surface. 
Triangles congruent to these also appear in one of the two choices of diagonal subdivision for the quadrilaterals  $KMPN$ and $LMNQ$. 
Therefore for these quadrilaterals we do not need to consider the alternate choice of diagonal subdivision corresponding to
$\triangle KMP\cong\triangle KPN$ and $\triangle LNQ\cong\triangle QML$, as it would not lead to 
an overall improved angle bound. Choosing the diagonals that give
$\triangle MKN\cong\triangle MPN$ and $\triangle NLM\cong\triangle MQN$ will not affect the upper or lower bounds
for angles obtained anyway from the elementary disks that are triangles. 

Therefore a computation of the extremal mesh angles obtained from the tetrahedron $ABCD$ reduces to a
computation of the angles obtained for two triangles forming
the quadrilateral $KLPQ$ and for triangles  $\triangle KLM$ and $\triangle LNP$, as the parameter $a$ varies. 
 
\subsection{Optimal Shape Parameters}

We first examine the range of angle values  as a function of the shape parameter  $a$  for the congruent triangles 
$\triangle KLM$ and $\triangle KNQ$  and also for the congruent triangles 
$\triangle LNP$ and $\triangle MPQ$.   

\subsubsection{Triangle $\triangle KLM$:}

Note that edges $KM=ML=b/2 \leq KL=c/2$ depend on the parameter $a$.

By the  Law of Cosines
\begin{equation}\label{KLMboundaries}
\cos \angle KML =  \frac{1-2a^2}{2a^2+2}, \qquad \cos \angle KLM =  \cos \angle LKM =  \frac{\sqrt{4 a^2+1}}{2 \sqrt{a^2+1}}.
\end{equation}

\subsubsection{Triangle $\triangle LNP$:}

$LN=3a/2$, $NP=b/2 \leq LP=c/2$.
By the  Law of Cosines: 
\begin{equation}\label{LNPboundaries}
\cos \angle LPN =   \frac{1-2a^2}{\sqrt{a^2+1}\sqrt{4a^2+1}},  \quad \cos \angle NLP =  \frac{2 a}{\sqrt{4 a^2+1}},  \quad \cos \angle LNP =  \frac{a}{\sqrt{a^2+1}}. 
 \end{equation}

\subsection{Quadrilateral normal disks}
We now examine the range of angle values as a function of  $a$  for the 
triangles obtained by subdividing the three  quadrilateral elementary  normal disks into two triangles. For each quadrilateral
there are two choices of diagonal that divide it into a pair of triangles, and we examine each in turn.
As noted, the bounds obtained for the triangles also apply to a subdivision for quadrilaterals $KMPN$ and  $LMNQ$
and we need only compute which $a$ parameters give appropriate bounds for each of the two diagonal
subdivisions of quadrilateral $KLPQ$.
 
\subsubsection{Quadrilateral $KLPQ$}

The quadrilateral $KLPQ$ has four edges of length $c/2$, and its diagonals have length   $KP= \sqrt{3}/ 2 $ and $LQ= (1/2) \sqrt{16a^2+1}$.
There are two triangular subdivisions for  this quadrilateral, one for each diagonal. 
Each choice gives two congruent isosceles triangles: $\triangle KLP\cong\triangle KPQ$ and $\triangle KLQ \cong\triangle LPQ$,
and we consider each in turn.

\subsubsection{Triangle $\triangle KLP \subset KLPQ$:}  

$KL = LP =c/2$ and $KP = \sqrt3/2$ implies 
\begin{equation} \label{KLPboundaries}
\cos \angle KLP =\frac{2c^2-3}{2c^2}=\frac{8a^2-1}{8a^2+2}, \\
\cos \angle LKP = \cos \angle LPK = \frac{\sqrt{3}}{2c}=\frac{\sqrt3}{2\sqrt{4a^2+1}}.
\end{equation}

\medskip

\subsubsection{Triangle $\triangle KLQ\subset KLPQ$:}

$KL=KQ=c/2=\sqrt{a^2+1/4} \leq LQ=\sqrt{4a^2+1/4}$. 

By the  Law of Cosines
\begin{equation}\label{KLQboundaries}
 \cos \angle LKQ = \frac{1-8a^2}{8a^2+2}, \quad \cos \angle KLQ = \cos \angle KQL = \frac{\sqrt{16 a^2+1}}{2 \sqrt{4 a^2+1}}.
\end{equation}
 
\subsection{Summary}

The above conditions can be split into two cases: The angle bound are give by Equations (\ref{KLMboundaries}),
(\ref{LNPboundaries}), and (\ref{KLPboundaries})   or Equations (\ref{KLMboundaries}), (\ref{LNPboundaries}) and (\ref{KLQboundaries}),
depending on the choice of diagonal in quadrilateral $KLPQ$. 
The first case, illustrated in  Figure (\ref{fig:1A23}), 
gives acute values when $a\in ( \sqrt2/4 ,  \sqrt2/2) \approx (0.3535,  0.7071)$.  
The maximum of the minimal angle on this interval is realized at 
\[
a_3 = \sqrt{3}/4 \approx 0.43.
\]

With this choice for $a$, all angles of the triangulation are between  $49.1^0$ and $81.87^0$.
This achieves the largest minimum angle among this family of meshes.

Yet another choice, $a_4 =   \sqrt{ \frac{3 \sqrt{17} -5 } {32}}\approx  0.47$, minimizes the maximal angle for this choice of a diagonal of $KLPQ$ and gives angles in the interval $[46.1^o, 77.3^o]$.

\begin{figure}[htbp]
  \includegraphics[scale=0.7]{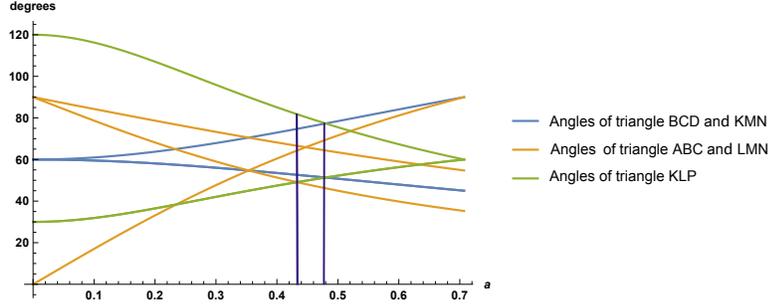}
  \caption{Angles of MidNormal triangles as functions of the tetrahedron shape parameter $a$. Quadrilateral KLPQ is split  along diagonal KP. }
  \label{fig:1A23}
\end{figure}

The second case returns acute triangulations  for $a\in (0 ,  \sqrt2 / 4 ) \approx (0 ,  0.353)$  as shown in Figure (\ref{fig:1B23}).
A minimum for the maximal angle on this interval occurs at
\[
a_1=\frac14\sqrt{\frac{19-3\sqrt{33}}{2}}\approx 0.2349.
\]

\begin{figure}[htbp]
  \includegraphics[scale=0.7]{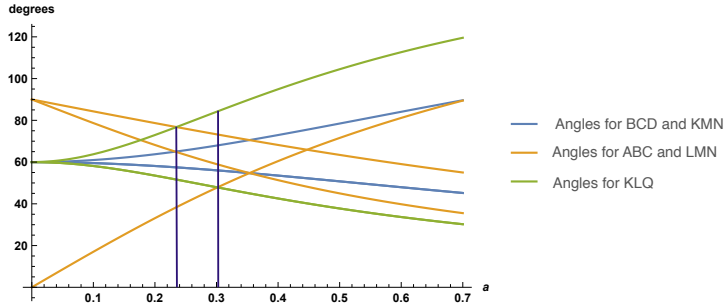}
  \caption{Angles of MidNormal triangles as functions of the parameter $a$. Quadrilateral KLPQ is split along diagonal LQ.}
  
  \label{fig:1B23}
\end{figure}
 
With this choice of diagonal and $a =a_1$,  all angles of the triangulation are between $38.3^0$ and $76.8^0$.
This achieves the smallest maximum angle for this family of meshes.
Maximizing the minimal  angle for this choice of a diagonal of $KLPQ$  gives  
 $a_2 =  \frac{1}{\sqrt{11}}\approx  0.30$,
and angles in the interval $[47.9^o, 84.3^o]$.

\end{proof}

\subsection{Convergence of the mesh to the surface}
 
We now analyze the mesh produced by the MidNormal algorithm and prove Theorem~\ref{mainTheorem}.
We first prove a lemma that relates the surface $F$ to the approximating mesh  $M(f,e)$.
\begin{lemma}  \label{$F$ to $M(f,e)$}
Let $F = f^{-1}(0)$ be a compact closed surface lying within a compact region
$B$ that is tiled by a family of tetrahedra $\tau_e$, in which each tetrahedron has diameter at most $d_e$.
Suppose that the surface $F$ has a tubular neighborhood of radius $\epsilon >0$
and that  $d_e \le \epsilon/2$. Then
the Hausdorff distance between $F$ and $M(f,e)$  is at most  $2d_e$ 
and  $M(f,e)$ is isotopic to $F$.  
\end{lemma}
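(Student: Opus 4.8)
The plan is to prove the two assertions separately, using throughout that $M = M(f,e)$ is a simple normal surface and therefore cleanly separates $B$. Let $U^+$ (resp. $U^-$) be the union, over all tetrahedra, of the components of $\tau \setminus M$ that meet the vertices at which $f \ge 0$ (resp. $f<0$), using the algorithm's convention that a zero value counts as positive. These regions glue consistently across shared faces, since the arc of $M$ on each face is determined by the signs at that face's three vertices, so $B \setminus M = U^+ \sqcup U^-$ with $U^\pm$ open. The whole argument then rests on the elementary observation that any connected set meeting both $U^+$ and $U^-$ must meet $M$.

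For the Hausdorff bound I would argue the two inclusions. Every point $y \in M$ lies in a triangle contained in some tetrahedron $\tau$ whose vertices take both signs; hence $f$ takes both signs on $\tau$ and vanishes somewhere in $\tau$, so there is $x \in F \cap \tau$ with $|y - x| \le \operatorname{diam}(\tau) \le d_e$. Conversely, fix $p \in F$ with unit normal $n$. Since $d_e \le \epsilon/2$ is below the reach of $F$, the closed ball $\overline{B}(p + d_e n,\, d_e)$ is tangent to $F$ at $p$, meets $F$ only at $p$, and has interior on the $f>0$ side. The tetrahedron $\tau_1$ containing the center $c_1 = p + d_e n$ has diameter $\le d_e$, so $\tau_1 \subseteq \overline{B}(c_1, d_e)$; every vertex of $\tau_1$ thus lies on the positive side (counting the convention), giving $\tau_1 \subseteq U^+$. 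Symmetrically, the tetrahedron containing $c_2 = p - d_e n$ lies in $U^-$. The segment $[c_1,c_2]$ has length $2d_e$, passes through $p$, and runs from $U^+$ to $U^-$, so it crosses $M$ at a point within $2d_e$ of $p$. Combining the two inclusions gives $d_H(F,M) \le 2 d_e$.

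For the isotopy, the Hausdorff bound places $M$ inside the embedded tubular neighborhood $N_\epsilon(F)$, which I identify with $F \times (-\epsilon,\epsilon)$ via $(x,t) \mapsto x + t\, n(x)$, with nearest-point projection $\pi(x,t)=x$. I want to show that $\pi|_M : M \to F$ is a homeomorphism realizing $M$ as the graph of a function $g : F \to (-\epsilon,\epsilon)$; the straight-line homotopy $(x,s) \mapsto x + (1-s) g(x)\, n(x)$ is then an isotopy through embedded surfaces carrying $M$ to $F$. Surjectivity of $\pi|_M$ is immediate from the argument above applied at each $p \in F$, since the sub-fiber $[c_2, c_1]$ runs from $U^-$ to $U^+$ and so meets $M$. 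The remaining, and I expect hardest, point is to show that $M$ is transverse to every normal fiber $\{x\} \times (-\epsilon,\epsilon)$ and meets it exactly once.

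Transversality amounts to the statement that the normal of each mesh triangle makes an angle strictly less than $90^\circ$ with the surface normal $n(\pi(\,\cdot\,))$ along that triangle. This is where the geometry enters, and I regard it as the crux of the whole argument: because each tetrahedron has diameter at most $d_e \le \epsilon/2$ and the reach of $F$ is at least $\epsilon$, the normal of $F$ varies by a controlled amount (of order $d_e/\epsilon$) across the patch $F \cap \tau$, while the triangle's vertices are midpoints of edges crossing $F$ and hence lie within $d_e/2$ of $F$; together these should bound the angle between the triangle's plane and the local tangent plane of $F$ away from $90^\circ$. Granting this, $\pi|_M$ is a local homeomorphism, hence a covering map onto the connected surface $F$; since $M$ separates the two ends of $F\times(-\epsilon,\epsilon)$ the algebraic intersection number of $M$ with each oriented fiber is $\pm 1$, and the constant sign of $\langle \nu, n\rangle$ forces every geometric intersection to count with the same sign, so each fiber meets $M$ exactly once. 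Thus $\pi|_M$ has degree one, is a homeomorphism, and $M$ is the graph of $g$, completing the isotopy. Everything outside this transversality estimate is bookkeeping with the separation $B \setminus M = U^+ \sqcup U^-$.
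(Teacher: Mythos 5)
Your Hausdorff-distance argument is sound and is essentially the paper's own: tangent balls of radius $d_e$ on either side of $p\in F$, the separation of positive from negative vertices by $M(f,e)$, and the diameter bound for the reverse inclusion. The isotopy half, however, has a genuine gap at exactly the step you flag as the crux. The transversality claim --- that the normal $\nu$ of each mesh triangle makes an angle bounded away from $90^o$ with the fiber direction $n$ --- is false under the lemma's hypotheses, and is not even true with a uniform bound in the flat limit. The reason is that a mesh triangle is determined only by \emph{which} vertices $F$ separates, not by \emph{where} $F$ crosses the edges: a plane separating $A$ from $B,C,D$ may cross $AB$ near $B$, $AC$ near $C$, and $AD$ near $A$, hence be arbitrarily close to the face $ABC$; but the elementary triangle $KLM$ it induces is parallel to the face $BCD$, and the dihedral angle between $ABC$ and $BCD$ along $BC$ is $90^o$ (see the proof of Lemma~\ref{nonobtuse}). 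So $F$ can be nearly perpendicular to the very triangle it creates, and $\langle\nu,n\rangle$ is not bounded away from $0$; the paper makes this exact point in Section~\ref{GradNormal} (a projection of $\triangle KLM$ to a nearly perpendicular plane ``can return a triangle with angles close to $0$ or $\pi$''), and it is why GradNormal must delete valence-four vertices. Worse, under the actual hypothesis $d_e\le\epsilon/2$ (a fixed ratio, not an asymptotic one) the normal of $F$ can turn by an amount of order $d_e/\epsilon$ across one tetrahedron, which can push $\langle\nu,n\rangle$ negative precisely where the flat-case margin is near zero. Then $\pi|_M$ need not be a local homeomorphism, a fiber can meet $M$ more than once, and the covering/degree argument collapses; compare Figure~\ref{projectionproblem}, and note that the paper asserts the nearest-point projection is a homeomorphism only ``for $e$ sufficiently small,'' as a separate asymptotic claim, not as part of this lemma.

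The paper's proof of the isotopy statement avoids projections entirely, and you will need a different mechanism of this kind. It normalizes $F$ within the tetrahedra it meets using Haken-type moves (removing trivial circles of intersection, compressions, and boundary-compressions across edges), observes that these moves preserve the parity of intersection with every edge so that the resulting normal surface is $M(f,e)$, and then uses incompressibility of $F$ in $N_\epsilon(F)$ together with irreducibility of $N_\epsilon(F)$ to conclude that each compression only splits off a trivial $2$-sphere, so the isotopy class is preserved throughout. If you want to rescue a graph-over-$F$ proof, you would have to first exclude or repair the nearly-perpendicular configurations (in effect redoing the spherical-region analysis of Section~\ref{GradNormal}), or weaken the statement to hold only for sufficiently small $e$, which is strictly less than what the lemma claims.
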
  
\begin{proof} 
Let $p \in F$ be any point in $F$.  
First note that a radius  $\epsilon/2$ ball in $\RR^3$ tangent to $F$ at $p$ has interior that is disjoint from $F$.
If not, consider the point $q \in F$ closest to  the center of the ball $c$.  A normal line from $q$ to $c$ 
has length less than $\epsilon/2$, as does the normal from $p$ through $c$.  But we have assumed that 
$F$ has a tubular neighborhood of radius $\epsilon$, which implies that any two normals of length 
$\epsilon$ are disjoint.  Thus $F$ is disjoint from any such ball.

It follows that the two balls $B_1, B_2$  of radius  $d_e$ tangent to $F$ at $p$, one on each side of $F$,
have disjoint interiors and that each meets $F$ only at  $p$.  By assumption  any tetrahedron in the tiling $\tau_e$ has 
diameter at most $d_e$ and any point in the tiled region $B$ lies within some tetrahedron,
so any point in $B$ is within $d_e$ of a vertex of $\tau_e$.  In particular,  the center point $c_1 \in B_1$ lies within
$d_e$  of some vertex  $v_1$ of a tetrahedron $\tau_1$ with $v_1 \in B_1$, and similarly the center point $c_2 \in B_1$ lies within
$d_e$  of some vertex  $v_2$ of a tetrahedron $\tau_2$ with $v_2 \in B_2$.
Note that $f$ evaluates to a positive value on one of these vertices, say $v_1$ and a negative value on the other, $v_2$,
since they are separated by $F$.
Now consider the piecewise-linear path $\alpha$ that starts at $v_1$, travels along a straight segment to $p$, and then 
follows a straight segment to $v_2$. By its construction, $M(f,e)$ separates the vertices of the tiling on which $f$ is positive
from those where it is negative. So $M(f,e)$ separates   $v_1$ from $v_2$ and  the path $c$ must cross
 $M(f,e)$ at least once.  Since all points  on the path $\alpha$ lie within distance $2d_e$ of the point $p$,
 the distance from $p$ to some point on $M(f,e)$ is at most $2d_e$.

Now consider an arbitrary point $q \in  M(f,e)$. Then $q \in \tau$ where $\tau$ is a tetrahedron in the tiling
and $f$ takes both positive and negative values on the vertices of $\tau$.
The surface $F$ must intersect  the same tetrahedron $\tau$, since by construction $F$ separates vertices of the triangulation
on which $f$ has different signs.
So the distance from  $q$ to $M(f,e)$ is at most  the diameter of the tetrahedron $\tau$, which is at most $d_e$.  
We conclude that the Hausdorff distance between $F$ and $M(f,e)$  is at most  $2d_e$.

We now look at the isotopy class of $M(f,e)$.  Note that $M(f,e)$ intersects the midpoint of each edge
that $F$ crosses an odd number of times.  Starting with $F$, we can do a series of normalization moves
that isotop it to $ M(f,e)$.  Each normalization move either removes a closed curve of intersection of
$F$ and a 2-simplex,  compresses $F$ inside a 3-simplex, or boundary compresses $F$ across an edge of
a tetrahedron, reducing the number of intersections with that edge by two \cite{Hass98}. 
When $F$ is incompressible and normalized in an irreducible manifold, one in which every 2-sphere is the  boundary of a
3-ball, then a compression will disconnect the surface, but one of the resulting components 
is a 2-sphere that bounds a 3-ball and  the other component is  isotopic to $F$.
Once all such moves are performed, the resulting surface is isotopic to a normal surface. 
Since none of these moves changes
which edges  the surface intersects an odd number of times,
the resulting normal surface is isotopic to $M(f,e)$.  

It remains to show that the resulting normal surface is isotopic to $F$.
In particular, we need to show that
any compressing move splits a trivial 2-sphere off  $F$ and preserves the isotopy class of the other remaining component. 
If $F$ is a sphere then this is immediate, so we assume $F$ has  genus at least one.

The tubular neighborhood  
$N_\epsilon(F)$ of $F$, which is homeomorphic to $F \times [-\epsilon, \epsilon]$,
contains all points whose distance from $F$ is at most $\epsilon$, and therefore
contains all tetrahedra that $F$ meets. 
The normalization procedure that carries $F$ to $M(f,e)$
introduces no intersections with tetrahedra that are not contained in $N_\epsilon(F)$. 
Thus the normalization isotopies and compressions take place completely within the
tetrahedra met by $F$.
Since $F$ is incompressible in $N_\epsilon(F)$ and $N_\epsilon(F)$ is irreducible when $F$ is not a 2-sphere, 
a compression gives rise to a surface isotopic to $F$, along with a trivial 2-sphere.
So the surface produced in the normalization process remains
in $N_\epsilon(F)$ and is isotopic to $F$ at each stage. We conclude that $F$ is isotopic to to $M(f,e)$, as claimed.
\end{proof}

We can now prove  the main result regarding the MidNormal algorithm.
\begin{proof} [Proof of Theorem ~\ref{mainTheorem}]
The MidNormal algorithm produces a triangulated surface that we call $M(f,e)$.
Since each edge meets exactly two triangles and each vertex meets four to six tetrahedra, and four to twelve
triangles arranged cyclically around the vertex, the mesh describes a 2-dimensional manifold.
The computation in Lemma~\ref{$F$ to $M(f,e)$} shows that the angles of the triangles for $\tau_a$ with $a_3 = \sqrt{3}/4 \approx 0.43$
lie in the interval  $[49.1^o,  81.8^o]$ and that the edge lengths fall into the range $[e, 1.58e]$.
Since all normals intersecting an edge $E \in \tau$ have positive component in the direction of the edge,
the triangles meeting $E$ are each graphs over the plane perpendicular to $E$. 
Every embedded surface has an $\epsilon$ neighborhood for some  $\epsilon>0$.
Lemma~\ref{$F$ to $M(f,e)$} implies that the mesh converges to the surface in Hausdorff distance as the edge lengths approach zero.
Note that $d_e \approx 1.41 e <2e$.  Finally note that as $e \to 0$ the smooth surface $F$ is increasingly well approximated by the
 tangent plane at a point of intersection of $F$ with the tetrahedron.
At a small enough scale the
projection from the elementary normal disk in the tetrahedron to $F$ is approximated by the projection
to this tangent plane, and is a homeomorphism. 
\end{proof}

\section{Approximation Quality} \label{Approximation}
The MidNormal algorithm gives a 0th-order approximation to a surface.
The faces of the approximating meshes produced by the algorithm have 
normal vectors  that lie in a finite set of directions, so one cannot expect to have
a first-order, or piecewise-$C^1$ approximation.
In general there are eight oriented normals arising from the four elementary triangles and up to 12 more from
the three elementary quadrilaterals, so up to 20 normal directions are possible with various choices of
how to subdivide a quadrilateral into two triangles.
Since the normals of the triangles in the mesh lie in a finite set,  the approximation cannot be piecewise-$C^1$,
even if the edge lengths approach zero.  
A computation shows that choosing $a = \sqrt{3}/4 \approx 0.43$ gives 18 oriented normals for
the midpoint triangles produced by the algorithm. This limited set of normal directions may sometimes be sufficient, but
for some applications it is desirable to get a piecewise-$C^1$-approximation, where the normals to the faces of the mesh  converge  to the surface normal as the mesh becomes sufficiently fine.  

The vertices of the mesh produced by the MidNormal algorithm
do not lie on $F$, the surface that is being approximated.
One way to achieve a piecewise-$C^1$-approximation is to move the vertices of the mesh to lie on $F$.
Note that each mesh vertex lies within distance $3ea/2$ of a point of $F$ that lies along an edge of the tetrahedra tiling.
When approximating a level surface $F= f^{-1}(0)$ by a mesh, we  can move each mesh vertex in the direction of $- \nabla f $ till it lands on $F$.
Each point moves a distance of less than $3ea/2 $.  

One strategy is to move vertices towards $F$ by relaxing the  condition that the mesh intersect
tetrahedral edges at midpoints, at the cost of getting weaker angle bounds on the mesh triangles. 
This leads to an algorithm, SlidNormal, in which vertices are slid along edges to improve the
first-order approximation.  A second approach is to project vertices onto $F$.  
This method can produce sliver triangles with arbitrarily small angles.  However we
will see that a modified projection, realized in what we call the GradNormal algorithm,  leads to a piecewise-$C^1$ approximation
with good mesh quality,
We  investigate this in Section~\ref{GradNormal}.

\subsection{Potential improvements from other tilings} \label{Comparison}

The MidNormal algorithm is based on tiling $\RR^3$ using one of  Goldberg's family of tetrahedra. A natural question is 
whether other tetrahedra that tile $\RR^3$ might give better quality for the resulting meshes. 
One difficulty in answering this is that it is not  at present known which single tetrahedral shapes can be used to tile $\RR^3$.
We nevertheless consider whether a tiling of $\RR^3$ by tetrahedra of some, perhaps unknown, shape, or 
even a tiling by multiple and varying shapes, might give better angles
for a normal surface mesh then the MidNormal algorithm based on Goldberg tetrahedra.

To investigate these questions we temporarily set aside the question of tiling and
just consider the angles of elementary disks inside a single tetrahedron of a given shape.
We search for two optimal tetrahedra. We first search for the  tetrahedron
that gives a mesh whose elementary disks have a smallest angle that is as large as possible and then 
 for the  tetrahedron
that gives elementary disks whose largest angle is as small as possible. In this search we
ignore the question of whether such a tetrahedron is part of a tiling of $\RR^3$.  We will see
that dropping the tiling condition does not give a significant improvement over the angles obtained
with Goldberg tetrahedra.  The Goldberg
tetrahedra, with $a$ appropriately chosen, give  close to optimal value for the
angles of its elementary triangles.

Let $ABCD$ be an arbitrary  tetrahedron in $\RR^3$.  After an isometry, scaling and relabeling of vertices, we can assume 
that:
\begin{enumerate}
 \item $AB$ is the longest edge,
 \item $A$ = (0, 0, 0), 
 \item $B= (1, 0, 0)$,
 \item $C = (x_C,y_C,0)$ lies in the $xy-$plane, 
and its coordinates satisfy 
\[
0 \le x_C \le 1,\quad 0 \le  y_C  \le  1, \quad \sqrt{1- {x_C}^2}\leq y_C\leq \sqrt{1-(x_C-1)^2}.
\]
 \item $D = (x_D,y_D,z_D)$ where
\[
1/2 \le x_D \le 1, \quad 0  \le y_D \le \sqrt{1-{x_D}^2}, \quad 0  \le  z_D\le \sqrt{1-{x_D}^2-{y_D}^2}.
\] 
\end{enumerate}
There is a five parameter space $(x_C, y_C, x_D, y_D, z_D)\subset \mathbb [0, 1]^5$ 
of possible tetrahedron shapes, and a subset $R$ of allowable values for
these parameters that satisfy the stated inequalities. 
There are seven elementary normal disks in a tetrahedron: four triangles and three quadrilaterals.  
Each quadrilateral can be triangulated in two ways, 
and this gives eight sets of triangulated elementary normal disks.  
We define eight functions $\gamma_i:\mathbb [0, 1]^5 \rightarrow \mathbb R, i\in \{1, \dots 8\}$,
each of which returns the minimal angle for one set of triangulated elementary normal disks in a  tetrahedron
of given shape.  
We consider each of these eight functions on the specified region $R\subset \mathbb [0, 1]^5$.
On this compact region in $\RR^5$ we  numerically search for a largest minimum angle. As described in Section~\ref{theAlg},  
a Goldberg  tetrahedron with parameter $a_3= \sqrt{3}/4 \approx 0.43$, 
has all mesh angles in the interval $[49.1^o,  81.8^o]$.
A computation  shows that only one  of these  functions $\gamma_i$
gives tetrahedra that have all elementary normal disks with  angles greater than $49.1^o$. 
This computation uses Mathematica to search the space  $\mathbb [0, 1]^5$ for a point  
which  maximizes the minimal angle of the elementary discs of a tetrahedron 
corresponding to this point in  $\mathbb [0, 1]^5$.  It is not relied on by
any other results in this paper.
 
It emerges that there is a single choice of triangular subdivision for each
quadrilateral elementary normal disk type that leads to values for the minimal
angles that are all larger than the $49.1^o$,
and that a maximum minimal angle  is realized by the $a = \sqrt{3}/4 \approx 0.43$ Goldberg tetrahedron.
This is represented by a tetrahedron with vertices at
$(0,0,0)$,  $(0,1,0)$,  $(0.38, -0.68, 0)$ and  $(0.57, 0, 0.67)$, 
and leads to angles in the interval $[49.69^o, 79.24^o]$.
This implies that the maximum smallest angle achievable over all $a$ values is $\approx 49.69^o$. 
This compares with the best angle of $ \approx 49.1^o$
obtained with  the  $a_3 = \sqrt{3}/4 $ Goldberg tetrahedron used in the midNormal algorithm.
Thus searching the entire space of tetrahedra, while ignoring tiling issues, increases the smallest angle bound by less than $1^o$, from $49.1^o$ to approximately $49.7^o$.
 
We next investigate the shape of a tetrahedron that minimizes the maximal angle when using  elementary normal disk triangulations. 
Three  of  eight choices of subdivision for the quadrilaterals lead to a better upper-bound on 
elementary normal disk angles than that given by  a Goldberg tetrahedron.
Recall that the parameter $a_1 =(1/4)(\sqrt{ (19-3\sqrt{33}) /2   } ) \approx 0.2349$ Goldberg tetrahedron $\tau_{a_1}$
gives rise to elementary normal disks that are triangulated with angles in the interval $[38.3^o, 76.8^o]$.
Three  tetrahedra, with appropriate choices for subdividing quadrilaterals, give angles in the ranges 
 $[45.29^o,73.29^o]$,  
$[15.01^o, 75.50^o]$, and
$[30.42^o, 72.92^o]$.

This analysis shows that searching the entire space of tetrahedral shapes, while ignoring tiling issues, leads
to a mesh with largest angle $ 72.92^o$,. This compares to the largest angle of $76.8^o$ achieved by 
 parameter  $a_1 $ in the midNormal algorithm.

In summary, numerical computation indicates that using tetrahedra of other shapes to 
tile space has the potential to improve the lower bound
on the angles produced by the MidNormal algorithm  by less than $1^o$, from about $49.1^o$ to  $49.7^o$,
and the upper bound by less than $4^o$, from  about $76.8^o$ to $73.0^o$.

\section{Sliding vertices} \label{SlidNormal}

The MidNormal algorithm gives a 0th-order approximation of a surface by a mesh whose vertices belong to a tetrahedral lattice. 
We describe below an approach to finding a  first order, or piecewise-$C^1$ approximation.  This
method turns out to  have limitations however, and we will instead develop for this purpose the better performing GradNormal algorithm described by in Section~\ref{GradNormal}.

An input  surface for the MidNormal algorithm is given as a level set. The algorithm constructs a tetrahedral lattice and evaluate the level function on its vertices. Then it takes a midpoint of an edge as a mesh vertex if the function changes its sign along that edge. 

The {\it SlidNormal} algorithm is a variation of the MidNormal algorithm. Instead of using a midpoint of a tetrahedral edge as a vertex of a mesh, we  slide this vertex along the edge to a location determined by a linear interpolation of the values of $f$ at the tetrahedra vertices. For example, let $AB$ be an edge and suppose a level function $\lambda: \mathbb R^3\rightarrow \mathbb R$  returns $\lambda(A)=-1$ and $\lambda(B)=3$. Then the linear approximation of a zero value of $\lambda$ is at a point $K\in AB$ such that $AK = (1/4)AB$.

\begin{algorithm}
\caption{SlidNormal Algorithm}
\begin{algorithmic}[1]

\Procedure{SlidNormal}{$e,f,a,t$}       
\State Input a differentiable function $f: I^3 \to \RR$ with
regular  level set $F = f^{-1}(0)$, a choice of scale $e=1/N$, a Goldberg tetrahedron shape parameter $a$, and $t\in [0,1/2]$.
\State  Apply the MidNormal algorithm with parameter $a$ to obtain a mesh $M(f,a,e)$ with $k$ vertices.
\For{$ i=1$ to $k$ }
\State Slide  vertex $v_i$ lying at the midpoint of edge $e_i$ towards the point on the edge where linear interpolation predicts $f=0$, but moving it at most $t$length$(e_i)$.
\EndFor
\State Output the list of triangles ${\mathcal T'}$.
\EndProcedure

\end{algorithmic}
\end{algorithm}

While sliding of vertices along edges brings them closer to the surface, the angles of the triangulation realize weaker lower and upper bounds. It is possible to obtain a triangle with very small or large angles (when  midpoints are moved  close to the vertices of the tetrahedra and the elementary normal disk limits to a tetrahedral edge).  

The next lemma gives some bounds on angles of a mesh produced by this SlidNormal algorithm. These bounds depend on a  sliding parameter $t\in(0,\,1/2)$. The parameter describes what portion of the edge length we allow vertices of normal disks  to slide along the tetrahedral edges.  Picking  $t=1/2$ allows the point to slide over the entire edge,  $t=0.25$ restricts the point to the middle half of the interval  and $t=0$ allows only the midpoint. 

The following lemma assumes correctness of a numerical computation, as we discuss in its proof.
We do not use these computations for our main results, which are independent of the results of this section.  

\begin{lemma} Suppose that we take 
a Goldberg tiling with  $ a_3 = \sqrt{3}/4 $ and produce a mesh with the SlidNormal algorithm using a sliding
parameter $t$.
Then any angle $\alpha$ of any triangle $\tau$ of the mesh produced satisfies:\\
$ 21.1^o  \le  \alpha  \le   116.2^o$ for $t = 0.2, $\\  
$ 16.1^o  \le  \alpha  \le   128.7^o$ for $t = 0.25$, \\
$ 11.9^o  \le  \alpha  \le  140.8^o $ for $t = 0.3$.
\end{lemma}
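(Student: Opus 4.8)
The plan is to reduce the three pairs of angle bounds to a finite numerical optimization over the positions at which the mesh vertices are allowed to sit on the tetrahedral edges. I first fix the coordinates of the Goldberg tetrahedron $\tau_{a_3}$ with $a_3=\sqrt3/4$ using the placement $A(0,0,0)$, $B(0,0,3a)$, $C(1,0,a)$, $D(1/2,\sqrt3/2,2a)$ from the proof of Theorem~\ref{Range_of_a}. Each of the six points $K,L,M,N,P,Q$ is the midpoint of a distinct edge, namely $AB,AC,AD,BC,CD,BD$ respectively. In the SlidNormal construction such a point is replaced by $X(s)=V_1+s\,(V_2-V_1)$ on its edge $V_1V_2$, where $s$ is the clamped linear-interpolation parameter. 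Since the interpolated zero can fall anywhere in the open edge while the clamp caps the displacement at $t\cdot\mathrm{length}(e_i)$, the set of achievable values is exactly $s\in[\tfrac12-t,\tfrac12+t]$, and I treat each sliding position as a variable on this interval.

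Next I enumerate the triangles that can occur in the mesh: the four triangle-type elementary disks $\triangle KLM$, $\triangle KNQ$, $\triangle LNP$, $\triangle MPQ$, together with the triangles cut off by the diagonals fixed by the MidNormal algorithm at this value of $a$ (diagonal $KP$ in $KLPQ$, since $a_3>a_0=\sqrt2/4$, and diagonals $MN$ in both $KMPN$ and $LMNQ$). For each such triangle I write its three vertices as the corresponding sliding points $X(s)$ and express its three angles, via dot products of the edge vectors, as explicit functions of the associated parameters. The key structural observation is that these parameters are mutually independent. For each triangle-type disk the three vertices lie on the three edges emanating from the separated vertex, so the interpolation parameters are independent by construction; a short computation shows the same for the three relevant parameters of each quadrilateral-derived triangle, since suitable values of $f$ at the four tetrahedral vertices realize any prescribed triple. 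Hence for every appearing triangle the angle optimization is carried out exactly over the cube $[\tfrac12-t,\tfrac12+t]^3$ (and even if some coupling were overlooked, optimizing over the independent box only enlarges the feasible set and therefore still yields valid bounds).

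With this setup the lemma becomes a global optimization: for each triangle type and each $t\in\{0.2,0.25,0.3\}$, minimize and maximize each angle function over its cube, then take the global minimum and maximum across all triangle types. Carrying this out in Mathematica produces the stated ranges $[21.1^o,116.2^o]$, $[16.1^o,128.7^o]$, and $[11.9^o,140.8^o]$. As expected, the extreme angles are attained when a vertex slides close to a tetrahedral vertex so that a triangle nearly degenerates, which explains why the bounds widen monotonically with $t$ and why even modest sliding destroys the acuteness enjoyed by the pure midpoint mesh of Theorem~\ref{Range_of_a}.

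I expect the main obstacle to be the reliability of the global optimization rather than any conceptual difficulty. The angle functions are non-convex rational expressions in up to three variables, so a naive gradient or sampling search can miss a true global extremum and report spuriously tight bounds. To make the argument fully rigorous one would replace the search by interval arithmetic, recursively subdividing each cube and bounding the angle function on every subbox until the global minimum and maximum are isolated to the claimed precision, exactly as described in the Computational Methodology. Since this lemma is explicitly flagged as depending on the correctness of a numerical computation and is not used in the paper's main results, I would present the values above as the output of that optimization, together with the interval-arithmetic route by which they can be certified.
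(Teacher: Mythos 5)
Your proposal takes essentially the same approach as the paper's proof: treat the slid vertex positions as free parameters ranging over $[1/2-t,\,1/2+t]$ on their edges, express each mesh angle as an explicit function of those parameters, and find the global extrema numerically in Mathematica, with the same acknowledgment that full rigor would require interval arithmetic. The only differences are organizational — the paper optimizes the minimum angle jointly over the six-parameter box $[1/2-t,\,1/2+t]^6$ and over all eight diagonal splittings of the quadrilateral disks, then selects the splitting giving the best bound, whereas you optimize each triangle separately over its own three-parameter box with the diagonals fixed to MidNormal's choices — and these amount to the same computation, since each triangle's angles depend on only three of the six parameters and the lemma concerns the mesh SlidNormal actually produces.
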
 

\begin{proof} Let $ABCD$ be a Goldberg tetrahedron and let $K$, $L$, $M$, $N$, $P$, $Q$ be points on its six edges that depend on parameters $k, l, m, n, p, q \in (1/2-t,\,1/2+t)$. There are four triangles and three quadrilateral normal disks  produced by these six points. Each quadrilateral can be triangulated in two different ways. Again, as in the previous section, there are 8 different cases of triangulating normal disks. 
For each case we use Mathematica to numerically compute the minimal angle of each triangle of 
all elementary normal disks as a parameter of the sliding distance $t$. This gives a function  of  $k, l, m, n, p, q$ whose minimum depends on $t$. 
We numerically find the minimum value of this function over the region $[1/2-t,\,1/2+t]^6$.

We then choose how to triangulate quadrilaterals among the 8 cases so as to get the largest lower bound. 
For example,  for $t = 0.3$ the minimal angle arising for all normal disks is $11.9^o$. 
We then check the maximum of the largest angle among all triangles when  vertices slide in the same interval 
$(1/2-t,\,1/2+t)$. For example, for $t = 0.3$ the maximal angle among all triangles is $140.8^o$.  
A similar computation applies for other values of $t$.
\end{proof}

If we change the tiling parameter to $a_1 =(1/4)(\sqrt{ (19-3\sqrt{33}) /2   } ) \approx 0.2349$  then the results are slightly worse:\\
$ 18.9^o  \le  \alpha  \le   116.8^o$ for $t = 0.2 $,\\  
$ 13.22^o  \le  \alpha  \le   129.2^o$ for $t = 0.25$, \\
$ 10.4^o  \le  \alpha  \le  155^o $ for $t = 0.3 $.

We conclude that sliding vertices along edges of the tetrahedra to allow for
a closer to $C^1$ approximation has a significant cost in mesh quality.  An alternate approach, described in
Section~\ref{GradNormal}, gives much better results, both in the accuracy of the approximation and
in the quality of the mesh.

\section{The GradNormal algorithm} \label{GradNormal}

In this section we discuss the GradNormal algorithm,  an exension of the MidNormal algorithm
that gives a piecewise-$C^1$ approximation of a level surface. 
This overcomes the limitations associated with the limited sets of tangent planes
of the MidNormal algorithm and the poor angle quality obtained by sliding vertices along edges as in
Section~\ref{SlidNormal}.
It produces triangles that, under appropriate curvature assumptions on $F$ or mesh scaling assumptions, are
contained in the interval $[ 35.2^o, 101.5^o]$. 
This represents a significant improvement compared to the best previous bounds on angles for a piecewise-$C^1$ approximation ,
which were  obtained by
Chew  \cite{Chew93}, and gave angles in the interval  $[ 30^o, 120^o]$. 

This section involves computations  that compute derivatives of derivatives of explicit functions,
and also estimates of functions of one variable along a closed interval.
The angle values we obtain 
are subject to the correctness of the Mathematica computations. 

The idea is to  first apply the MidNormal algorithm using a  Goldberg tetrahedral tiling with
appropriate parameter, and then to project
the resulting mesh vertices towards the level surface $F$. The projection is done using
the gradient of the function $f$ defining the level surface.
The GradNormal algorithm moves each vertex of the MidNormal mesh to the location in $\RR^3$ where the gradient
of the level set function $f$
predicts that the level surface $F$ is located. 
In the case of a linear function it would exactly project each vertex onto the zero level set.
It thus produces a first order approximation of $F$, improving the zeroth-order approximation
given by the MidNormal algorithm.  However the  mesh resulting from the projection process can have
sliver triangles with arbitrarily small angles.  An analysis of these badly behaving triangles
 shows that they result from angles that lie  in  one of
 four triangles that are each adjacent in the mesh to a vertex of valence four.  The
 GradNormal algorithm removes these four
triangles and adds a diagonal to the resulting quadrilateral. We show that this second step eliminates all
sliver triangles and results in a high quality mesh.
The parameter $a= \sqrt2/4$ gives the choice of Goldberg tetrahedron shape 
that achieves the maximal smallest angle for this process. With this choice we prove that 
all angles lie in the interval  $[ 35.2^o, 101.5^o]$ when the mesh
is sufficiently fine.

\begin{algorithm}
\caption{GradNormal Algorithm}
\begin{algorithmic}[1]

\Procedure{GradNormal}{$e,f$}       
\State Input a differentiable function $f: I^3 \to \RR$ with level set $F = f^{-1}(0)$ and a choice of scale $e=1/N$.
\State  Apply the MidNormal algorithm with parameters $\sqrt2/4, e$ to obtain a mesh $M(f,e)$.
\State Compute the gradient $\nabla f$ at the vertices of the mesh $M(f,e)$.\
\State Remove each  vertex of valence 4 and its four adjacent triangles.  Add a diagonal to the resulting
quadrilateral, giving two new triangles in the mesh. 
\State Project  each vertex 
${\bf v}$ to ${\bf v} - f({\bf v}) {\bf \nabla f}/ {\bf || \nabla f ||^2} $. 
\State Output the list of triangles ${\mathcal T'}$.
\EndProcedure

\end{algorithmic}
\end{algorithm}
 
There are two ways to  choose a diagonal in step (3). It turns out that this choice does not affect the resulting angle  bounds   
when $a= \sqrt{2}/4$. 
For that choice the quadrilateral is a square, and either diagonal results in two triangles having four angles equal
to $45^0$ and two equal to $90^0$. To fix a choice, 
we add the diagonal that connects the lower valence adjacent vertices.

We first consider the effect on angles of projecting to a rotated plane.
\begin{lemma} \label{rotationprojection}
Suppose that ${\vec v = (v_1, 1) }, v_1 >0  $ is a vector in the first quadrant of the $xy$-plane 
and that ${\vec w  = (w_1, w_2) } \ne \vec 0$ and subtends an angle $\alpha < \pi$  with  $\vec v$.
Rotate the $xy$-plane around the $x$-axis through an angle of  $ \theta$, $0 \le \theta \le \pi/2$
and denote the orthogonal projections of the rotated vectors 
${\vec v} , {\vec w}$ back to the $xy$-plane by ${\vec v(\theta))} , {\vec w(\theta)}$.
Then as  $\theta$ increases from  $0$ to $\pi/2$ the angle $\alpha(\theta)$ between ${\vec v(\theta)} $ and ${\vec w(\theta)}$ satisfies:\\
(1) If ${\vec w}$ is parallel to the positive $x$-axis or to the negative $y$-axis then $\alpha(\theta)$ is monotonically decreasing.\\
(2)  If ${\vec w}$ is parallel to the negative $x$-axis or to the positive $y$-axis  then $\alpha(\theta)$ is monotonically increasing.\\
(3) If ${\vec w}$  lies in the interior of the second quadrant then $\alpha(\theta)$ is monotonically increasing.\\
(4) If ${\vec w}$  lies in the interior of the fourth quadrant then $\alpha(\theta)$ is monotonically decreasing.\\
(5) If ${\vec w}$  lies in the  interior of the first quadrant then $\alpha(\theta)$ achieves its minimum
at an endpoint of  the interval  $[ 0, \pi/2 ]$.\\
(6) If ${\vec w}$  lies in the  interior of the  third quadrant then $\alpha(\theta)$ achieves its maximum
at  an endpoint  of  the interval  $[ 0, \pi/2 ]$.
\end{lemma}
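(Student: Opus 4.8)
The plan is to make the rotation-and-projection completely explicit and then reduce all six cases to the sign of a single elementary expression. Rotating a planar vector $(x,y)$ about the $x$-axis by $\theta$ sends it to $(x, y\cos\theta, y\sin\theta)$, and dropping the third coordinate gives the projected vector $(x, y\cos\theta)$. Writing $c = \cos\theta$, which decreases monotonically from $1$ to $0$ as $\theta$ runs over $[0,\pi/2]$, we therefore have $\vec v(\theta) = (v_1, c)$ and $\vec w(\theta) = (w_1, c\,w_2)$. Thus the $x$-components are frozen while the $y$-components are uniformly scaled by $c$; this is the structural fact that drives the whole lemma.

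First I would express the angle through the planar dot product and the scalar cross product. A direct computation gives
\[
\vec v(\theta)\cdot\vec w(\theta) = v_1 w_1 + c^2 w_2, \qquad \vec v(\theta)\times\vec w(\theta) = c\,(v_1 w_2 - w_1),
\]
so the signed angle $\alpha_s(\theta)$ from $\vec v(\theta)$ to $\vec w(\theta)$ satisfies $\tan\alpha_s = c(v_1w_2-w_1)/(v_1w_1+c^2w_2)$. Differentiating $\alpha_s = \operatorname{atan2}(\mathrm{cross},\mathrm{dot})$ with respect to $c$ and simplifying, the numerator factors cleanly as
\[
\frac{d\alpha_s}{dc} \;=\; \frac{(v_1 w_2 - w_1)\,(v_1 w_1 - c^2 w_2)}{\,|\vec v(\theta)|^2\,|\vec w(\theta)|^2\,},
\]
whose denominator is positive whenever both vectors are nonzero. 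The key bookkeeping step is to pass from the signed angle $\alpha_s$ to the unsigned angle $\alpha = |\alpha_s|$ asked about in the statement: for $\theta\in(0,\pi/2)$ the cross product has constant sign equal to that of $v_1w_2-w_1$, so $\alpha_s$ never changes sign on the open interval and $\alpha=|\alpha_s|$. Combining this with $dc/d\theta = -\sin\theta<0$ collapses everything to the identity
\[
\operatorname{sign}\frac{d\alpha}{d\theta} \;=\; \operatorname{sign}\!\left(c^2 w_2 - v_1 w_1\right), \qquad \theta\in(0,\pi/2),
\]
valid in every case with $v_1w_2-w_1\neq 0$ (which holds in all six cases, the only excluded possibility $\vec w\parallel\vec v$ being either $\alpha\equiv 0$ or the $\alpha=\pi$ situation ruled out by hypothesis).

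With this identity the six cases become a sign inspection of $g(\theta)=\cos^2\theta\,w_2 - v_1 w_1$. When $w_2\geq 0$ and $w_1\leq 0$ (the positive $y$-axis, the negative $x$-axis, and the open second quadrant) one has $g\geq 0$, so $\alpha$ is increasing, giving (2) and (3); when $w_2\leq 0$ and $w_1\geq 0$ (the positive $x$-axis, the negative $y$-axis, and the open fourth quadrant) one has $g\leq 0$, so $\alpha$ is decreasing, giving (1) and (4). For the open first quadrant, $w_1,w_2>0$, the function $g$ is strictly decreasing in $\theta$ and ends at $-v_1w_1<0$, so it changes sign at most once, from $+$ to $-$; hence $\alpha$ first increases then decreases and its minimum is attained at an endpoint, giving (5) (the degenerate subcase $\vec w\parallel\vec v$ gives $\alpha\equiv0$ and is immediate). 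Symmetrically, for the open third quadrant, $w_1,w_2<0$, $g$ is strictly increasing and ends at $-v_1w_1>0$, so it changes sign at most once, from $-$ to $+$; hence $\alpha$ first decreases then increases and its maximum is at an endpoint, giving (6).

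The one point requiring genuine care — the step I expect to be the main obstacle — is the signed-versus-unsigned angle reduction together with the degeneracies at $\theta=\pi/2$. In the two on-axis cases with $w_1=0$ the projected vector $\vec w(\pi/2)=(0,0)$ degenerates, so the formula must be applied on $(0,\pi/2)$ and the endpoint value obtained by continuity; similarly in cases (5) and (6) both projected vectors collapse onto the $x$-axis at $\theta=\pi/2$, forcing $\alpha(\pi/2)=0$ and $\alpha(\pi/2)=\pi$ respectively, which independently confirms that the endpoint realizes the extremum. Everything else is a routine verification of the two displayed algebraic identities.
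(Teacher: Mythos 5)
Your proof is correct, and its computational heart is the same as the paper's: the sign-controlling factor $v_1w_1-\cos^2\theta\,w_2$ in your identity $\operatorname{sign}(d\alpha/d\theta)=\operatorname{sign}(\cos^2\theta\,w_2-v_1w_1)$ is exactly the factor $(v_1w_1-\cos^2\theta)$ appearing in the paper's formula for $(\cos\alpha(\theta))'$ after its normalization $w_2=1$ (the paper's extra factor $(v_1-w_1)^2$ is just what your first-power factor becomes once the $\sin\alpha$ from differentiating the cosine is absorbed). The route around that core differs in three useful ways. First, the paper handles cases (1)--(4) by a separate geometric remark (the angle of each projected vector with the $x$-axis decreases in $\theta$) and uses calculus only for (5)--(6), whereas you run all six cases through the single sign identity; the price is the signed-versus-unsigned angle bookkeeping (constant sign of the cross product $\cos\theta\,(v_1w_2-w_1)$ on the open interval), which you correctly identify and discharge, and which the paper never needs to confront. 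Second, where the paper excludes an interior local maximum of $\cos\alpha$ by a second-derivative test at the critical point $\cos^2\theta=v_1w_1$, you instead note that $g(\theta)=\cos^2\theta\,w_2-v_1w_1$ is monotone and so changes sign at most once; this is more elementary and yields the same unimodality of $\alpha$ (increase-then-decrease in case (5), decrease-then-increase in case (6)), which is the substantive content here --- it is what Corollary~\ref{extremeAngles} actually needs when the rotation runs over a proper subinterval $[0,\theta_1]\subset[0,\pi/2]$, since the literal statements (5) and (6) over the full interval are already immediate from your endpoint observations $\alpha(\pi/2)=0$ and $\alpha(\pi/2)=\pi$. Third, you treat the degeneracy at $\theta=\pi/2$ (where $\vec w(\pi/2)$ vanishes in the on-axis cases with $w_1=0$) explicitly by continuity, a point the paper passes over silently. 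In short: same derivative computation, but a more unified decomposition and a more elementary argument for the key monotonicity step.
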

\begin{proof}
Rotation about the $x$-axis through an angle of  $ \theta$ takes the point $(x, y, 0)$ to  
$(x,   y \cos \theta, y\sin \theta)$.   Thus  ${\vec v(\theta)}  = (v_1,  \cos \theta)$ and 
${\vec w(\theta)}  = (w_1,  w_2 \cos \theta)$.  The angle between each vector and the $x$-axis is decreasing with $\theta$, 
implying the claims in Cases (1) -- (4).

The last two cases needs a more detailed investigation.
In Case (5) each of $w_1, w_2$ is positive, and we can assume that $ w_2 = 1$ by scaling.  
The angle  $\alpha(\theta)$  between  ${\vec v(\theta)} $ and $ {\vec w(\theta)}$  satisfies
\[
\cos \alpha(\theta) =  \frac{v_1w_1 +   \cos^2 \theta }{ \sqrt{( {v_1}^2 + \cos^2 \theta)}\sqrt{({w_1}^2 +  \cos^2 \theta)}} .
\]
For given vectors $\vec v$ and $\vec w$ the   cosine of $ \alpha(\theta) $ has first derivative  
\[
(\cos \alpha(\theta))' =\frac{ \sin\theta \cos\theta (v_1 -   w_1)^2 (v_1 w_1 -  \cos^2{\theta}) }{ (v_1^2 + \cos^2{\theta})^{3/2} (w_1^2 + \cos^2{\theta})^{3/2}} .
\]

A computation shows that the critical points of $\cos \alpha(\theta) $ lie either at the boundary of the interval $[0,\pi/2]$, or in the case where $ v_1 w_1 <1$, 
at an interior point where $\theta =\arccos{\sqrt{v_1w_1}}$.
A further computation shows that the second derivative at the interior critical point is positive, so there is no interior local maximum. 
Thus the cosine of $\alpha$ is maximized at the endpoints of $\theta \in [0,\pi/2]$,
 implying that the angle  $\alpha(\theta)$  is minimized at one of these two endpoints.
 
 For Case (6), where the angle between ${\vec v} $ and $ {\vec w}$ is greater than $\pi/2$, 
 we note that this angle is complementary  to that between ${\vec v} $ and $ {-\vec w}$, which was studied in Case (5).
 Thus a maximum in this case coincides with a minimum in Case (5), and this again occurs at an endpoint of the interval as claimed.
 \end{proof}
 
\begin{corollary} \label{extremeAngles}
Suppose two vectors in $\RR^3$  are orthogonally projected to a family of rotated planes that begins with the plane 
containing them and contains planes rotated about a line through an angle of at most $\pi/2$.
If the vectors subtend an angle smaller or equal to $\pi/2$
 then the minimum  angle between the projected edges occurs at 
either the initial or final projection. If they subtend an angle  greater than $\pi/2$
then the maximum  angle between the projected edges occurs at 
either the initial or final projection.
\end{corollary}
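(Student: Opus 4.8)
The plan is to recognize this corollary as the coordinate-free repackaging of Lemma~\ref{rotationprojection} and to obtain it by collapsing that lemma's six cases into the stated two-part dichotomy. First I would fix coordinates so that the common initial plane is the $xy$-plane and the axis of rotation is the $x$-axis. The two pictures then agree: projecting a fixed vector $\vec v=(v_1,v_2,0)$ orthogonally onto the rotated plane $\Pi_\theta$ and then applying the angle-preserving inverse rotation $R_{-\theta}$ about the $x$-axis returns the planar vector $(v_1,v_2\cos\theta)$, which is exactly the vector $\vec v(\theta)$ appearing in the lemma. Since $R_{-\theta}$ carries $\Pi_\theta$ back to the $xy$-plane and preserves angles, the angle $\alpha(\theta)$ between the two projected vectors in the corollary coincides with the function $\alpha(\theta)$ analyzed in the lemma.

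Next I would put the configuration into the lemma's normal form, which asks one of the two vectors to lie in the first quadrant. The planar reflections $x\mapsto -x$ and $y\mapsto -y$ send the projected pair $(v_1,v_2\cos\theta),(w_1,w_2\cos\theta)$ to a mirror image and hence leave $\alpha(\theta)$ unchanged as a function of $\theta$; applying a suitable composition moves one chosen vector $\vec v$ into the closed first quadrant, and rescaling so that its second coordinate equals $1$ lets the lemma apply verbatim with the other vector as $\vec w$. I would also record that the lemma's endpoint conclusions survive restriction to any subinterval $[0,\theta_{\max}]\subseteq[0,\pi/2]$: in the non-monotone cases the unique interior critical point is of the opposite type — a maximum of $\alpha$ in case (5) and a minimum in case (6) — so the extremum named in the corollary always sits at an endpoint of the rotation interval, whatever $\theta_{\max}\le\pi/2$ is.

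The heart of the argument is then the bookkeeping that turns the six cases into the two claimed statements. In the four monotone cases (1)--(4), $\alpha(\theta)$ is monotone on $[0,\theta_{\max}]$, so both its minimum and its maximum occur at an endpoint and whichever half of the dichotomy is invoked holds automatically. The two remaining cases are exactly the two halves: when $\vec w$ lies in the open first quadrant (case (5)) both vectors sit in the closed first quadrant and subtend an angle at most $\pi/2$, and the lemma places the minimum at an endpoint; when $\vec w$ lies in the open third quadrant (case (6)) the two vectors subtend an angle greater than $\pi/2$ and the lemma places the maximum at an endpoint. In each case I would confirm that the sign of $\alpha(0)-\pi/2$ matches the case label — for instance that a first-quadrant $\vec v$ together with a negative-$y$-axis $\vec w$ genuinely subtends an obtuse angle — so that the hypothesis of the dichotomy selects the correct conclusion.

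I expect the main obstacle to be this final case-matching rather than any computation: one must verify that every direction of the second vector falls under a case whose conclusion is compatible with the sign of the initial angle, and in particular that the monotone cases are consistent with both possibilities $\alpha(0)\le\pi/2$ and $\alpha(0)>\pi/2$. The only genuinely separate work is the degenerate positions excluded by the lemma's hypotheses, namely a vector along the rotation axis (where $v_2=0$ and the projection is constant in $\theta$) or along the $y$-axis (the boundary $v_1=0$); these I would dispatch directly, observing that an axis-parallel vector yields a monotone $\alpha(\theta)$ with endpoint extrema, or by a continuity argument from the interior cases.
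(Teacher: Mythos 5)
Your proposal is correct and follows essentially the same route as the paper: the paper offers no separate proof of Corollary~\ref{extremeAngles}, presenting it as a direct consequence of Lemma~\ref{rotationprojection}, and your argument is exactly a careful derivation from that lemma (coordinate identification of the two projection setups, symmetry normalization, and case matching). Your added observation that the interior critical point in cases (5)/(6) is a local extremum of the opposite type — so the endpoint conclusion survives restriction to a subinterval $[0,\theta_{\max}]\subseteq[0,\pi/2]$ — is a genuinely necessary detail that the lemma's statement alone does not supply, and you are right to flag and resolve it via the second-derivative computation in the lemma's proof.
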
 

We now  compute bounds on the angles produced by the GradNormal algorithm.
The choice of the parameter $a$ affects the resulting angles.
It emerges from a computation that a  Goldberg tetrahedron different from that in the MidNormal algorithm  gives  
optimum angles in the projected GradNormal mesh.
For a given choice of $a$, define $ \theta_{\mbox{min}}(a)$ to
be the greatest lower bound for the angles produced by the GradNormal algorithm using a tiling by tetrahedra of shape $\tau_a$.

\begin{proposition} \label{besta}
 For all $a$, $ \theta_{\mbox{min}}(a)  < 35.42^o$.
\end{proposition}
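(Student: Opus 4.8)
The plan is to reduce the statement to a two-parameter optimization and then control it using the projection results already established. First I would pass to the fine-mesh limit. As $e \to 0$ the surface $F$ is, near any point where it meets a tetrahedron $\tau_a$, well approximated by its tangent plane, and the gradient projection $\bv \mapsto \bv - f(\bv)\,\nabla f/\|\nabla f\|^2$ used in the GradNormal algorithm converges to orthogonal projection onto that tangent plane, since $\nabla f$ is normal to $F$. Consequently, in the limit every GradNormal triangle is the orthogonal projection onto a tangent plane of $F$ of one of the surviving elementary-disk triangles of $\tau_a$: the four triangular disks, the chosen triangulations of the quadrilaterals, and the two triangles created by the valence-$4$ replacement in step (4) of the algorithm. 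Thus $\theta_{\mbox{min}}(a)$ equals the infimum, over all admissible tangent-plane orientations and over these triangle types, of the smallest angle of the projected triangle.

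The second step is to delimit the admissible orientations and then collapse the orientation optimization to a finite check. Because the elementary disk is the MidNormal approximation to $F$ inside the tetrahedron, the tangent plane is not arbitrary: its unit normal is confined to the range of directions for which $F$ crosses $\tau_a$ consistently with the given disk, a closed region on $S^2$. For a fixed pair of triangle edges I would express projection onto a varying tangent plane as a rotation of the disk's plane about a line followed by projection back, which is exactly the setup of Lemma~\ref{rotationprojection}, with the edge vectors of the elementary-disk triangle playing the roles of $\vec v$ and $\vec w$. Corollary~\ref{extremeAngles} then guarantees that, along any such one-parameter rotation, the smallest projected angle (when the edges subtend at most $\pi/2$) and the largest (when they subtend more than $\pi/2$) are attained at an endpoint of the rotation interval. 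Sweeping the normal across the admissible region by such one-parameter families reduces the search for the extremal orientation to its boundary, turning a continuous optimization into a tractable one.

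With these reductions in place, the last step is computational: for each $a$ and each surviving triangle type, evaluate the smallest angle at the critical boundary orientations supplied by Corollary~\ref{extremeAngles}, take the minimum over triangle types to obtain $\theta_{\mbox{min}}(a)$, and finally maximize $\theta_{\mbox{min}}(a)$ over $a \in (0,\infty)$. The claim is that this maximum lies strictly below $35.42^\circ$; equivalently, that for every $a$ one can exhibit an admissible tangent-plane orientation at which some surviving triangle has an angle below $35.42^\circ$. I expect the main obstacle to be essentially computational and twofold: first, correctly determining the admissible orientation region on $S^2$ for each disk type, since this is exactly what keeps the extremal angle bounded away from $0$ and is what the valence-$4$ replacement in step (4) is designed to control; and second, carrying out the joint optimization over the boundary orientations and over the shape parameter $a$ rigorously. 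As elsewhere in the paper these evaluations are performed in Mathematica and could be certified with interval arithmetic, with Corollary~\ref{extremeAngles} doing the essential work of collapsing the orientation variable to a finite set so that only the one-dimensional maximization in $a$ remains genuinely continuous.
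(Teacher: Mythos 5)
Your reformulation of $\theta_{\mbox{min}}(a)$ as an infimum of projected angles is sound, but the plan has two concrete gaps, and it also misses the one idea that makes the paper's proof short. Since $\theta_{\mbox{min}}(a)$ is a greatest lower bound, the proposition only requires an \emph{upper} bound, i.e.\ a witness for each $a$: one admissible orientation and one triangle whose projection has an angle below $35.42^o$. The paper produces exactly that: it projects the two elementary-triangle angles congruent to $\angle BCD$ and $\angle ACB$ onto the single plane containing face $ACD$ (whose normal $\vec n_{ACD}$ is a vertex of the admissible spherical regions, hence a limit of admissible orientations), obtains two closed-form functions of $a$, and observes that one decreases and the other increases, crossing at $a_0=\sqrt2/4$ with common value $\arccos\left(4/\sqrt{75-36\sqrt{2}}\right)\approx 35.41^o$; so for every $a\le\sqrt2/2$ the smaller of the two is below $35.42^o$. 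Your plan instead computes the full infimum $\theta_{\mbox{min}}(a)$ --- the direction needed only for the lower bound in Theorem~\ref{GradNormalThm} --- and therefore requires knowing the admissible regions on the sphere for every disk type and every $a$. The paper determines these regions only for $a_0=\sqrt2/4$, and you do not supply them for general $a$; in particular the valence-four analysis behind step (4) of GradNormal changes for obtuse tetrahedra ($a>\sqrt2/2$), so this part of your computation is not specified.

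Two steps of the proposed computation would also fail as described. First, you cannot numerically ``maximize $\theta_{\mbox{min}}(a)$ over $a\in(0,\infty)$'': the domain is non-compact and your sketch gives no argument for large $a$. The paper disposes of $a>\sqrt2/2$ analytically, noting that the elementary-triangle angle $\arccos\bigl(2a/\sqrt{4a^2+1}\bigr)$ is decreasing in $a$ and already below $35.27^o$ there, which reduces everything to the compact interval $(0,\sqrt2/2\,]$; some such reduction is indispensable. Second, Corollary~\ref{extremeAngles} does not collapse the orientation variable ``to a finite set.'' Its hypothesis requires the rotation family to begin at the plane containing the two edges, so it reduces interior normals to the boundary of the admissible region (together with the unprojected triangle itself); along a boundary arc the hypothesis fails, and the projected angle remains a genuinely one-parameter function that must be controlled on a continuum --- this is exactly why the paper plots angle functions along the boundary arcs in the proof of Theorem~\ref{GradNormalThm} rather than checking finitely many orientations. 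So even after your reduction, the optimization is two-dimensional (arc parameter and $a$), not the one-dimensional search in $a$ you describe, and the mathematical content of the proposition --- exhibiting configurations whose small angle is uniform in $a$ --- is deferred to a computation that, as outlined, cannot terminate.
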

\begin{proof} 

Angle $\angle ABC$ of $\triangle ABC$ is equal to $\displaystyle  \arccos{ 2 a/\sqrt{4 a^2+1}}$. 
A computation show that this angle is strictly less than $\cos ^{-1}\left(\sqrt{\frac{2}{3}}\right)  \approx 35.2644^o$  for $a > \sqrt{2}/     2$. 
This means that to establish the Proposition, we can restrict attention to $a \le  \sqrt{2}/ 2$,
which corresponds to nonobtuse Goldberg tetrahedra by Lemma~\ref{nonobtuse}.

To get an upper bound on  $ \theta_{\mbox{min}}(a) $ for $a \in (0, \sqrt{2}/ 2 ]$ we first consider two angles that occur in
elementary normal triangles and two planes onto which they could be projected during the algorithm.
Namely angle $\angle BCD$ of $\triangle BCD$ could be projected into the plane containing $ACD$ and 
 angle $\angle ACB$ of $\triangle ABC$ could also be projected to the same plane.
 
 The projected angles as functions of $a$ are 
 \[  \frac{4 a^2+1}{\sqrt{\frac{\left(a^2+1\right) \left(172 a^4+\left(59-36 \sqrt{12 a^2+3}\right) a^2+4\right)}{4 a^2+1}}} \]
and
\[\frac{2-4 a^2}{\sqrt{\frac{\left(a^2+1\right) \left(172 a^4+\left(59-36 \sqrt{12 a^2+3}\right) a^2+4\right)}{4 a^2+1}}}  . \]

The two functions are equal at $a_0=\sqrt{2}/4 \approx 0.35$, as  shown in Figure~\ref{fig:2angles}. 
A lower bound on projected angles must be smaller or equal to the minimum
of these two functions.   For $ a_0 = \sqrt{2}/4$ this minimum equals
\[
 \cos ^{-1}\left(\frac{4}{\sqrt{75-36 \sqrt{2}}}\right) \approx 35.4128^o 
\]
Just from considering these two angles we see that we cannot get all angles greater than $ 35.413^o$.
Thus $ \theta_{\mbox{min}}(a) < 35.42^o$ for all $a$, as claimed.  
\end{proof} 

\begin{figure}[htbp]
  \includegraphics[scale=0.6]{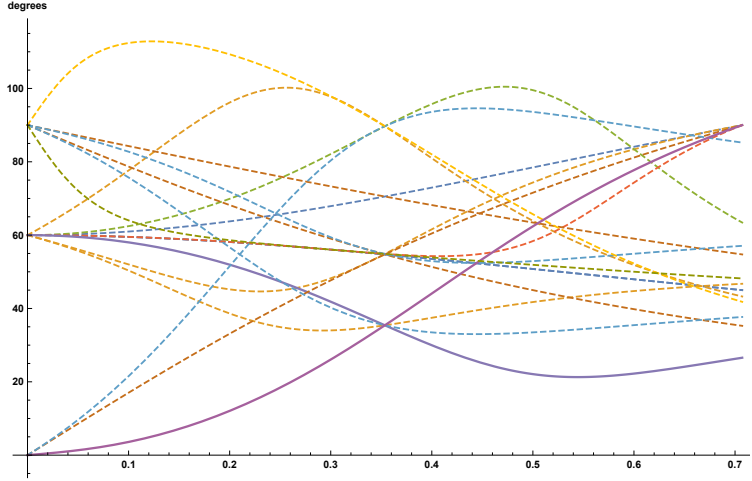}
  \vspace{-0.2cm}
  \caption{The  angles formed by projecting triangles $ABC$ and $BCD$ to faces $ABD, ACD, BCD$ and $ABC, ABD, ACD$ plotted as a function of $a$. The angle of the  projection of $\angle BCD$ and $\angle ACB$  to the plane containing $ACD$ 
  is shown in the two lowest graphs, indicated by undashed curves.
  The minimum of these two projected angles is maximized at  $a_0 = \sqrt{2}/4$ and is $\approx 35.4128^o$.  }
  \label{fig:2angles}
\end{figure}
 
We now show that with this choice of $a = a_0 = \sqrt{2}/4$, the minimum angle  $  \theta_{\mbox{min}}(a_0) >35.2^o$.
Thus the value $a_0= \sqrt{2}/4$ gives a near optimal value for the minimal mesh angle produced by the GradNormal
algorithm. The angles realized for arbitrary $a$ are  bounded in Theorem~\ref{GradNormalThm}.

 \newtheorem*{GradNormalThm}{Theorem~\ref{GradNormalThm}}
\begin{GradNormalThm}
Let $F=f^{-1}(0) \in \RR^3$  be a compact level surface of a smooth function $f$ with regular value 0. 
 For $a= \sqrt{2}/4$ and $e \to 0$\\
(1) The nearest point projection to $F$ gives a  homeomorphism from the  triangular mesh  $M^1(f,e)$ to  $F$.\\
(2)  The surface  $M^1(f,e)$ piecewise-$C^1$ converges to $F$. \\
(3) The mesh angles lie in the interval $[ 35.2^o, 101.5^o]$.
\end{GradNormalThm}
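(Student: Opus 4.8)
The plan is to work on top of the MidNormal mesh $M(f,e)$ built with shape parameter $a_0 = \sqrt{2}/4$, whose features are supplied by Theorem~\ref{mainTheorem} and Lemma~\ref{$F$ to $M(f,e)$}: for small $e$ it is embedded, has edge lengths in an interval $[e, Ce]$, converges to $F$ in Hausdorff distance, and (by Theorem~\ref{Range_of_a}, since $a_0 = \sqrt{2}/4$) has non-obtuse elementary disks. I would prove the three assertions in the order (3), (2), (1), because the angle bound is the substantive claim and the other two follow from it by standard approximation arguments. The choice $a_0 = \sqrt{2}/4$ is essentially forced by Proposition~\ref{besta}: it is the value at which the two competing projected angles coincide and their minimum is largest, so the interval $[35.2^o, 101.5^o]$ is already pinned down by that computation.

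For (3), the crucial point is that the GradNormal displacement $\mathbf{v} \mapsto \mathbf{v} - f(\mathbf{v})\nabla f / \|\nabla f\|^2$ is one Newton step toward $F$. Taylor expanding $f$ about the nearest point of $F$, and using that each MidNormal vertex lies within $O(e)$ of $F$ while $\nabla f$ is bounded away from zero near $F$, I would show that each projected vertex lands on $F$ up to error $O(e^2)$ and is moved in a direction within $O(e)$ of the surface normal. It follows that, after rescaling by $1/e$, the three vertices of any GradNormal triangle converge to the orthogonal projections onto the tangent plane $T_pF$ at a limit point $p$ of the three vertices of the corresponding MidNormal elementary triangle. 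Hence the limiting angle of each mesh triangle equals the angle obtained by orthogonally projecting a fixed MidNormal triangle onto the plane with normal $\mathbf{n} = \nabla f(p)/\|\nabla f(p)\|$.

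It then remains to optimize these projected angles over the admissible normals $\mathbf{n}$. The combinatorial type of the elementary disk constrains $\mathbf{n}$ to separate the positive from the negative tetrahedron vertices exactly as the disk does, so $\mathbf{n}$ ranges over an explicit spherical region dual to that disk. For each triangle type and each pair of its edges, Corollary~\ref{extremeAngles} shows that the projected angle, as $\mathbf{n}$ runs through any one-parameter rotation, attains its extreme at the endpoints; this reduces the optimization over the two-dimensional region of admissible normals to a check on its boundary, a finite computation performed in Mathematica. The triangles whose projected angles leave $[35.2^o, 101.5^o]$ turn out to be precisely those adjacent to a valence-four vertex, which is why the algorithm excises each valence-four star and re-triangulates the resulting quadrilateral by a diagonal; rerunning the projection-and-optimization computation on the two replacement triangles confirms their angles also lie in range. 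Combined with the $O(e^2)$ error estimate above, this yields the bound $[35.2^o, 101.5^o]$ for all triangles once $e$ is sufficiently small.

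Parts (2) and (1) then follow. Because (3) forces all angles away from $0$ and $\pi$ while every vertex lies on $F$ and $F$ is $C^2$, the standard interpolation estimate for a shape-regular triangulation gives $C^1$ convergence of the triangle normals to the surface normal at rate $O(e)$, which is the piecewise-$C^1$ convergence of (2). For (1), the valence-four surgery replaces a disk by a disk and hence preserves topology, so $M^1(f,e)$ is homeomorphic to $M(f,e)$ and, by Theorem~\ref{mainTheorem}, to $F$; since each triangle normal is within a small angle of the surface normal by (2), the mesh is a graph over $F$ inside $N_\epsilon(F)$, and the nearest point projection restricts to a homeomorphism onto $F$ for $e$ small. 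I expect the main obstacle to be the uniform control in the second paragraph together with the exact description of the admissible normal regions: one must make the $e \to 0$ limit uniform over all mesh triangles so that the finite-$e$ angles genuinely satisfy the limiting bounds, and one must correctly identify the boundary of each normal region so that Corollary~\ref{extremeAngles} reduces the continuous optimization to a finite verification rather than supplying only a necessary condition.
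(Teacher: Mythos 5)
Your proposal is correct and follows essentially the same route as the paper's own proof: reduce the $e \to 0$ limit to orthogonal projection of the fixed elementary triangles onto planes whose normals range over spherical regions determined by the vertex-separation type, use Corollary~\ref{extremeAngles} to push the optimization to the boundary arcs of those regions (with the valence-four stars excised and re-triangulated exactly as you describe), verify the boundary angle functions numerically, and then deduce (1) and (2) from the angle bounds and the argument of Theorem~\ref{mainTheorem}. Your Newton-step/$O(e^2)$ framing of the convergence and your explicit flagging of the region-identification subtlety (the paper's $X$, $Y$, $Z$ analysis) are just more quantitative statements of what the paper does.
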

\begin{proof} 
 Let $ M(f,e)$  be the  mesh produced by the MidNormal algorithm with $a= \sqrt{2}/4$ 
 and $M^1(f,e)$ a projection of  $M(f,e)$  along gradient vectors of $f$ 
towards the surface $F$ as in the GradNormal algorithm. 
Since $F$ is smooth and compact it has bounded curvature and as $e \to 0$,
its intersection with a tetrahedron $\tau$ is increasingly closely approximated by a plane.
This plane can be chosen to be a tangent plane of $F$, but for our purposes
we choose it to be a plane that intersects the edges of the tetrahedron
at points where $F$ intersects these edges.
When the surface $F$ separates the vertices of $\tau$ so as to
define an elementary normal disk $E$, then the plane $Q$ separating the same vertices and
intersection the edges of $\tau$ at points where $F$ intersects these edges 
smoothly converges to $F$ on  a neighborhood of $\tau$
of radius $e$.  Thus the angles of the nearest point projection of an elementary normal triangle in   $\tau_a$
of diameter less than $e$ onto $F$
gives angles that converge as $e \to 0$
 to the angles determined by the nearest point projection onto the plane $Q$.
 
We note that in the GradNormal projection we don't  project vertices onto the surface $F$, but rather onto the
plane where $F$ would be if $f$ was a linear function.  This plane smoothly converges to $F$ in a neighborhood of $\tau_{a,e}$
as $e \to 0$.
We conclude that in computing the angles of a projection of an elementary normal triangle in $\tau_{a,e}$
whose three points have been projected to $F$, we can assume, with arbitrarily small error as $e \to 0$,
that $F$ is a plane that separates the vertices of $\tau$ in the same way as the normal surface $F$.

We now classify the various cases of how a plane $F$ can intersect  a tetrahedron $\tau_a$. 
There are four cases where$F \cap \tau_a$ is a triangle and three where it is a quadrilateral that
is divided into two triangles along a diagonal.  An additional case occurs when
four adjacent tetrahedra meet along an edge of valence four and produce a rhombus which is 
divided into two triangles.  Counting cases, we see that there are altogether 12 triangles and 36 angles that 
can be projected onto some plane.

The case valence-4 vertex in $ M(f,e)$ requires special treatment and we consider it first.
Such vertices come from intersection with an edge of length $c$ in a Goldberg tetrahedron, as in Figure \ref{GoldbergTet}.  

\noindent
{\bf Case of a valence-4 vertex in $ M(f,e)$:} 
This vertex appears in the mesh  when four  elementary normal triangles meet the edge $AD$ of length $c$
at its midpoint $M$.
This edge has a dihedral angle of $90^o$ in each of the four adjacent tetrahedra, and 
the four adjacent tetrahedra combine to form an octahedron as in Figure~\ref{fig:octahedron}. 

\begin{figure}[htbp]
  \includegraphics[scale=0.3]{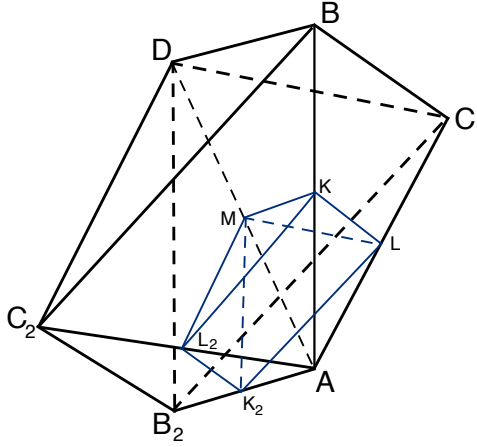}
  \vspace{-0.2cm}
  \caption{Four adjacent tetrahedra meet along $AD$, forming an octahedron. The  mesh surface meets this  octahedron in four triangles, 
  with a common valence-4 vertex at $M$.}
  \label{fig:octahedron}
\end{figure}
 
We consider first the case where $F$ is a plane that intersects the octahedron separating vertex $A$ from vertices $B,C,D$.  
We denote by $X$ the closure of the set of unit vectors perpendicular to such planes, oriented to point towards
$A$.  
We denote by $Y$ the subset  of $X$ consisting of normals to planes separating vertex $A$ from vertices $B,C,D,B_2,C_2$.

For a plane separating vertex $A$ from vertices $B,C,D,B_2,C_2$, the induced mesh has a valence-4 vertex where it intersects edge $AD$.
The GradNormal algorithm removes the four triangles adjacent to the edge $AD$: $\triangle KLM$, $\triangle KL_2M$, $\triangle K_2LM$ and $\triangle K_2L_2M$. 
Note that the four vertices  $ B, C, B_2 ,C_2$ are coplanar,
since there is a reflection through $M$ preserving the octahedron and interchanging $A$ and $D$,  
$B$ and $B_2 $,  and $C$ and $C_2 $. 
These four triangles form a pyramid  $MKLK_2L_2$ whose base is a flat rhombus parallel to rhombus $BCB_2C_2$. 
For $a= \sqrt{2}/4$, the rhombus is a square that realizes dihedral angles
 of $45^o$ with the faces $ABC$, $ABC_2$, $AB_2C$ and $AB_2C_2$ of the octahedron, as indicated in Figure~\ref{fig:octahedron}.
We now analyze the location of the set $Y$ in the unit sphere.

\begin{claim}\label{sphericalquad} 
Suppose $F$ is a plane  separating vertex $A$ from vertices $B,C,D,B_2,C_2$.
Then the unit normal vector of the plane $F$
lies in the interior of a spherical  quadrilateral  $Y \subset X$. The vertices of $Y$ are 
 normal to the faces $ABC$, $ABC_2$, $AB_2C$ and $AB_2C_2$. 
\end{claim}
\begin{proof}
The set of planes with these separation properties
is a subset of the 3-dimensional set of planes in $\RR^3$,
and their unit normal vectors $Y$ form a 2-dimensional subset of the
unit sphere.  If a plane with normal vector in $Y$ does not meet a vertex of
the octahedron then it is in  the interior of an open disk contained in $Y$,
since it can be rotated in any direction while remaining in $Y$.  The same
is true for planes that meet only one vertex  of the octahedron, since
they too can be rotated in all directions while still passing through only this vertex.
Planes in $Y$ meeting two vertices of the octahedron can be rotated only in one circular direction,
and lie along a geodesic arc on the 2-sphere that forms part of $\partial Y$.  Planes that
meet three or more vertices of the octahedron cannot be rotated while maintaining
their intersection with these points, and thus form vertices of $\partial Y$. To understand 
$Y$ we consider which planes separating vertex $A$ from vertices $B,C,D,B_2,C_2$
meet three or more vertices, giving a vertex of $\partial Y$ on the unit sphere, or meet
two vertices, giving an edge of $\partial Y$.

\begin{figure}[htbp]
    \includegraphics[scale=0.2]{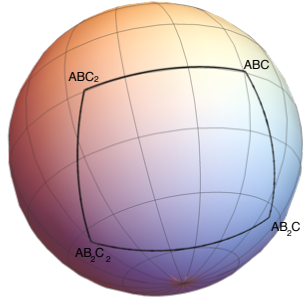}
  \caption{The spherical quadrilateral $Y$ indicates normal directions to planes that separate vertex $A$ from vertices $B,C,D,B_2,C_2$.} 
  \label{fig:Octahedron_SphericalQuad}
\end{figure}

Moreover any plane separating  $A$ from $B,C,D,B_2,C_2$ can be displaced through parallel planes 
towards $A$ till it contains $A$.  It follows that the vertices of $Y$ are determined by triples of vertices that include $A$ and are
limits of planes with the right separation property. These are given by normals $\vec n_{ABC }, \vec n_{ABC_2}\, vec n_{AB _2C }, \vec n_{AB_2C_2}$ to
the faces $ABC$, $ABC_2$, $AB _2C$ and $AB_2C_2$, each of which
gives a vertex of $\partial Y$. These four points on the unit sphere are vertices
of a spherical quadrilateral forming $Y$.  All planes that separate
vertex $A$  from the other vertices of the octahedron with normal pointing towards $A$
have unit normal vectors lying inside $Y$.  See Figure~\ref{fig:Octahedron_SphericalQuad}.
\end{proof}
  
In the GradNormal algorithm we replace the four triangles adjacent to  edge $AD$ with the rhombus $BCB_2C_2$,
divided into two triangles along a diagonal. We need to estimate the angles of these
two triangles after they are projected onto a plane $F$ with normal in the
spherical quadrilateral  $Y$.
Lemma ~\ref{rotationprojection}  implies that the largest and smallest angles 
among projections of the rhombus $KLK_2L_2$ onto $F$
 occur either in the rhombus $KLK_2L_2$ itself or at a plane whose normal lies in $\partial Y$. 
 For $a = \sqrt{2}/4$, this rhombus is a square, and a diagonal divides it into a pair of
 $45^o,45^o,90^o$ triangles. 

We project these two triangles onto  planes with normals  on $\partial Y$. The rhombus $KLK_2L_2$ projects to a parallelogram, so
the two triangles project to congruent triangles, and it suffices to consider the angles of one, say $KLK_2$. We investigate what
angles result from projecting  triangle $KLK_2$ onto a plane normal to $\partial Y$.
Each point in an arc of $\partial Y$ is normal to a plane obtained by rotating one   face of the octahedron to another through an edge containing $A$.
One set of angles results from projecting each of the three angles of triangle $KLK_2$ to planes determined by the spherical arc from $\vec n_{ABC_2}$ to $\vec n_{AB_2C_2}$.
We parameterize an arc of normal vectors $\vec v(t)$ passing from $\vec v(0) = \vec n_{ABC_2}$  to $\vec v(1) = \vec n_{AB_2C_2}$ and
compute the angles resulting from projecting triangle $KLK_2$ to planes  normal to $\vec v(t)$.
These angles are then given by a collection of functions of a parameter $t\in [0,1]$.
The three angle functions from triangle $KLK_2$   are plotted in Figure~\ref{fig:Octahedron_ProjectedAngles}.
The absolute minimum of the  three angle functions on this arc of $\partial Y$ is  $\approx 35.3004^o > 35.25^o$,
and the absolute maximum is  $\approx 101.445^o < 101.45^o$.
We then do a similar computation for each of the other arcs on $\partial Y$.  Figure~\ref{fig:Octahedron_ProjectedAngles2}
shows the angles resulting from projecting  $\triangle KLK_2$  onto the boundary arc of $Y$
 running between $\vec n_{ABC}$  and $\vec n_{ABC_2}$. 
 Again each curve lies above $35.25^o$ and below $101.45^o$,
 showing that all projected angles are between these two bounds.
 The remaining two boundary arcs give the same angle functions, due to a symmetry of the octahedron. 

\begin{figure}[htbp]
\centering
\begin{subfigure}{.33\textwidth}
  \centering
  \includegraphics[width=.9\linewidth]{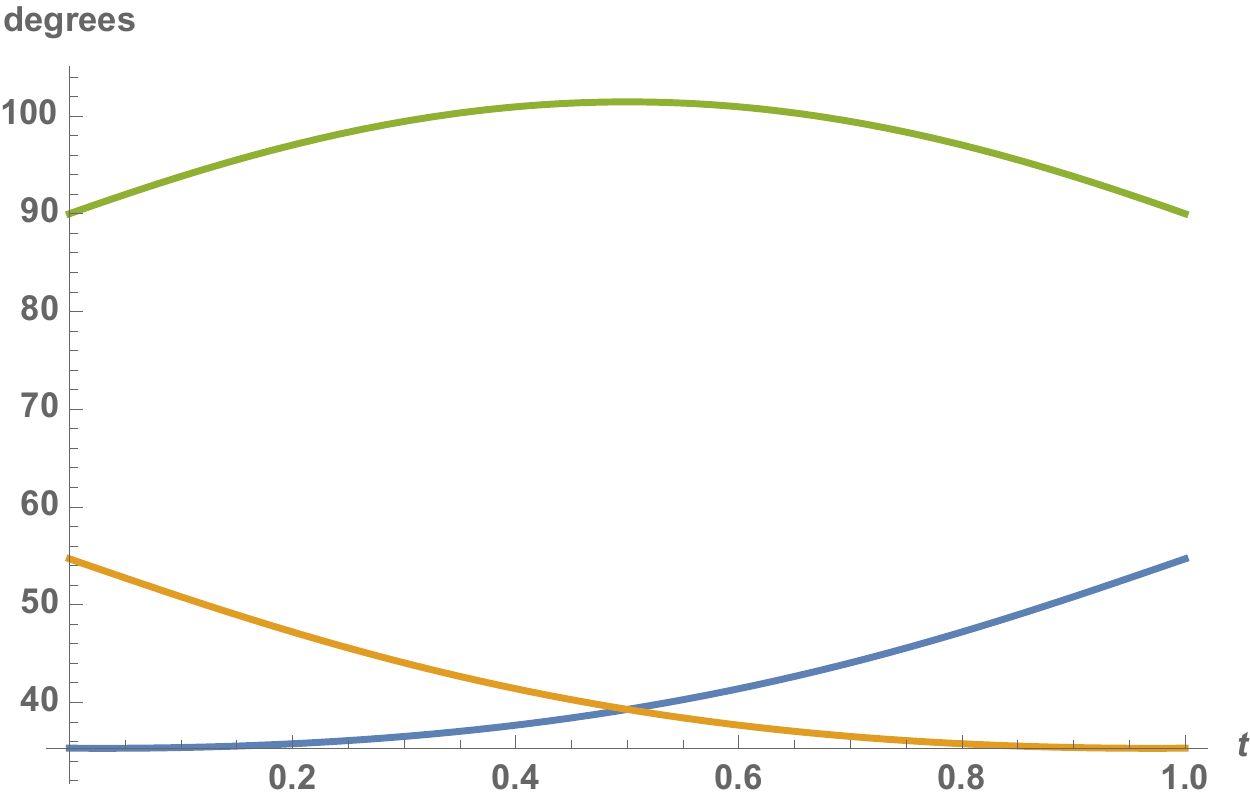}
  \caption{}
\end{subfigure}%
\begin{subfigure}{.33\textwidth}
  \centering
  \includegraphics[width=.9\linewidth]{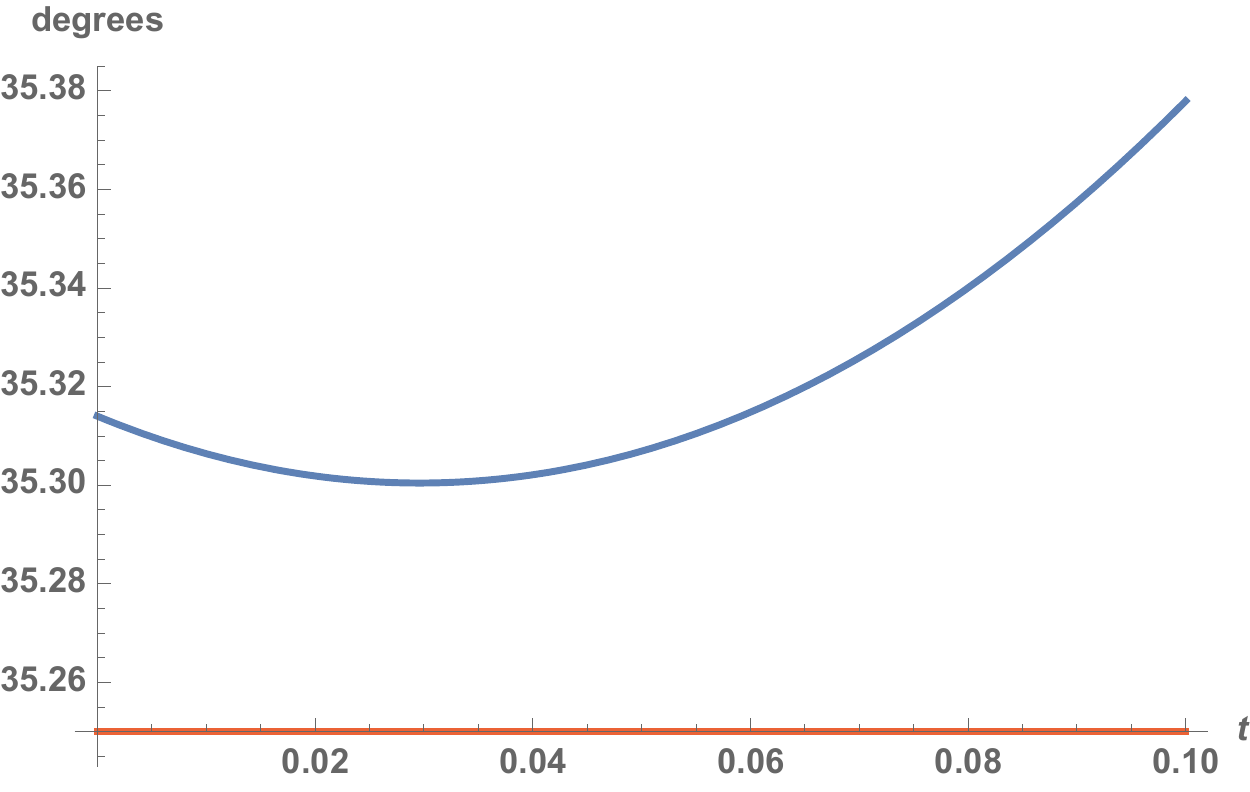}
  \caption{}
\end{subfigure}
\begin{subfigure}{.33\textwidth}
  \centering
  \includegraphics[width=.9\linewidth]{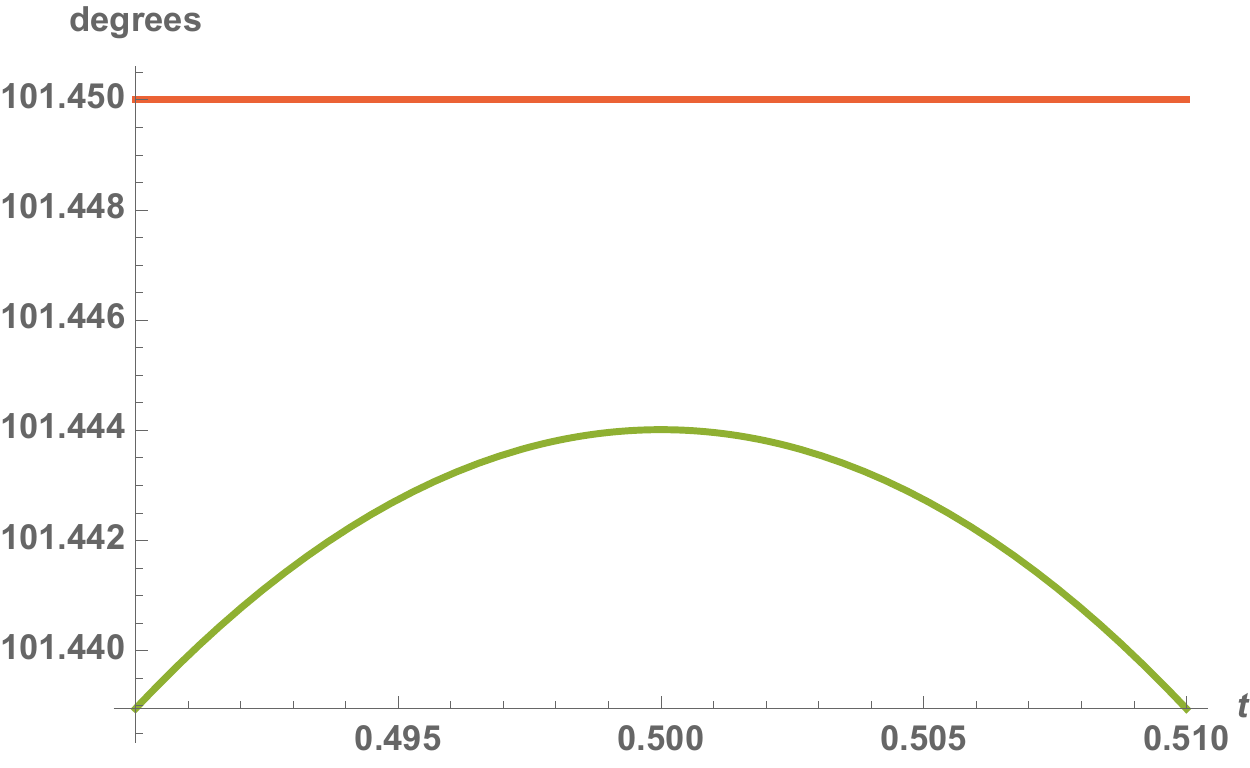}
  \caption{}
\end{subfigure}
\caption{(A) Angles of $\triangle KLK_2$ after projection onto the boundary arc
  from $\vec n_{ABC}$  to $\vec n_{AB_2C}$ of $\partial Y$, parametrized by $t \in [0,1]$. Detailed views of these graphs near (B) $t=0$ and (C) $t=0.5$
  indicate that each curve lies above $35.25^o$ and below $101.45^o$.}
\label{fig:Octahedron_ProjectedAngles}
\end{figure}

\begin{figure}[htbp]
    \includegraphics[scale=0.3]{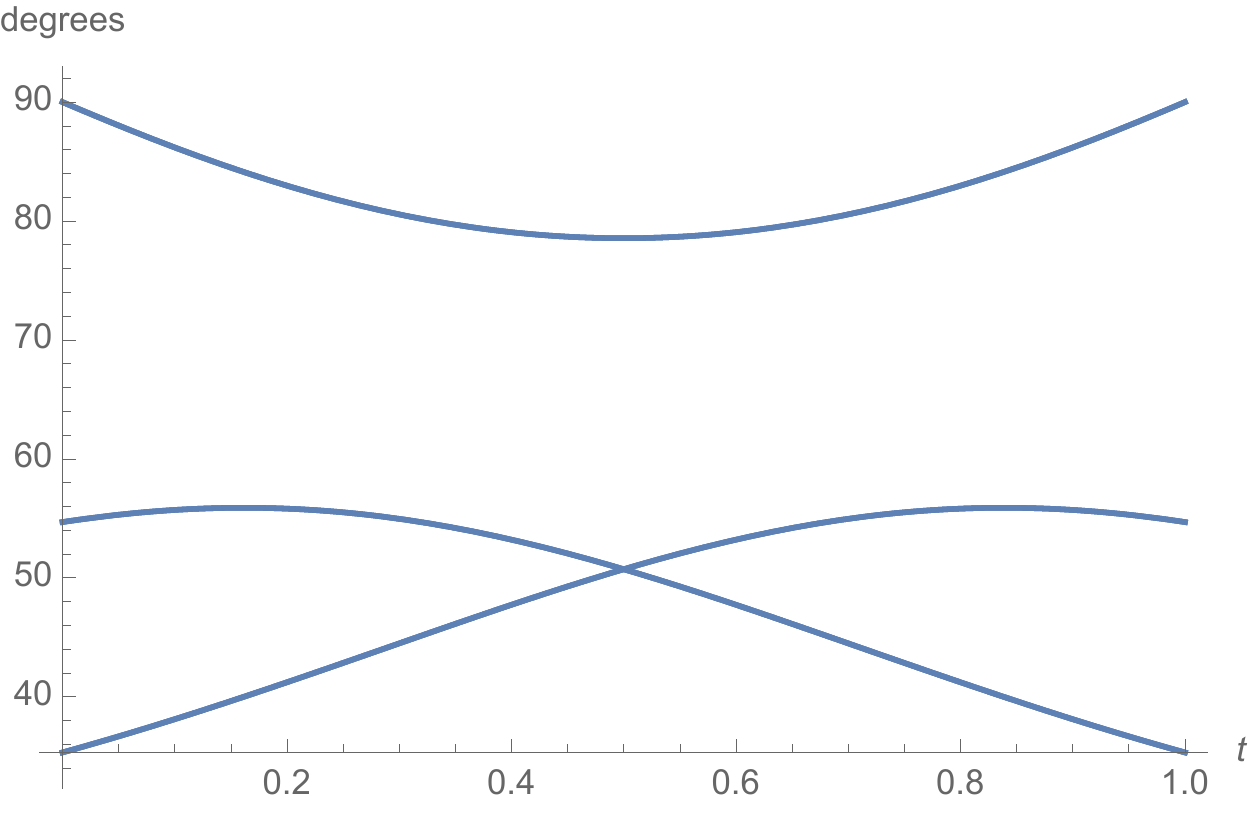}
  \caption{Angles of $\triangle KLK_2$  after projection onto the boundary arc
  from $\vec n_{ABC}$  to $\vec n_{ABC_2}$ of $\partial Y$. Again each curve lies above $35.25^o$ and below $101.45^o$.} 
  \label{fig:Octahedron_ProjectedAngles2}
\end{figure}

We conclude that all projections of the triangles obtained from the diagonally divided rhombus in the GradNormal algorithm
have angles between $35.25^o$ and $101.45^o$. 
 
There is a symmetric case involving a rhombus where $F$ is a plane that separates  vertex $D$ from $A,B,C$.
A symmetry interchanges  $A$ and $D$, and it follows that
this case gives the same angle bounds.

\noindent
{\bf Other Cases:} 
Four remaining cases to consider involve angles obtained by projecting triangles $\triangle KLM$ and $\triangle KNQ$ with edge lengths $(b/2, b/2, c/2)$, and $\triangle LNP$ and $\triangle MPQ$ with edge lengths $( 3a/2, b/2, c/2)$. 
Six remaining cases involve quadrilaterals divided into pairs of triangles: $KLPQ$ is divided into triangles $KLQ$ and $LPQ$, $KMPN$ is divided triangles
$KMN$ and $MNP$,
and $LMNQ$ is divided into triangles $LMN$ and $MNQ$.
We consider these in turn.

\noindent
{\bf Case of $\triangle KLM$:} 
We compute the smallest angle that can occur from a projection of $\triangle KLM$ onto a plane $F$ that
cuts off vertex $A$ from the other vertices  of the tetrahedron, and for which
$\triangle KLM$ is an elementary normal disk.  
The closure of the  set  of possible unit normal vectors for  the plane $F$,  oriented to point towards $A$,
belongs to a spherical triangle  $T$. 
Vertices of  $T$  are unit normal vectors  $\vec n_{ABC}, \vec n_{ABD}, \vec n_{ACD}$ 
to the faces  $ABC$,  $ACD$ and  $ABD$.

The dihedral angles between $\triangle KLM$ and its three adjacent faces are either $60^o$ or $90^o$, and
$F$ can be nearly parallel to one of these faces.
A  projection of  $\triangle KLM$ to  a nearly perpendicular plane can return a triangle with angles close to $0$ or $\pi$, 
giving very poor angle bounds.  Fortunately, the elimination of valence-four vertices in the GradNormal algorithm
 resolves this problem.

If the plane $F$ is almost parallel to the face 
$ABC$ and thus nearly perpendicular to $\triangle KLM$, then $F$ cuts off the vertex $A$ from 
the other vertices of octahedron $ABCDB_2C_2$. 
This case results in a valence-four vertex in the MidNormal mesh, 
the case considered in Lemma~\ref{sphericalquad}. 
The GradNormal algorithm removes the vertex  $M$ in this case and thus avoids
 projecting $\triangle KLM$ to a near perpendicular plane.
The same will apply for planes with normals  in a neighborhood of the vertex
 $\vec n_{ABC}$  of $T$. 
 We now investigate exactly how  $T$ is truncated in the unit sphere
 when we eliminate planes for which MidNormal leads to valence-four vertices at $M$

Call a plane {\em allowable} if it separates
vertex $A$ from vertices $B,C,D$.
Denote by $X$ the closure of the set of unit normal vectors to allowable planes,
oriented to point towards $A$.  
Then $X$ forms a spherical  triangle in the unit sphere
with vertices $\vec n_{ABC}, \vec n_{ABD}, \vec n_{ACD}$.  Inside $X$ is a 
 subset $Y \subset X$ corresponding to normals of allowable planes that 
 separate $A$ from the vertices  $ B_2, C_2$ of the octahedron.
 All normals to planes for which MidNormal gives valence-four vertices at $M$ are in $Y$,
 but some of these  are also normal to planes that lead to higher valence vertices
 at $M$.
This leads us to define another subset $Z \subset Y$ whose points are
in the  closure of normals $\vec v$ with the property that
if  the normal to an allowable plane is in $Z$, then any parallel allowable plane separates
$A$ from vertices $B,C,D,B_2,C_2$.   
It can be seen from Figure~\ref{fig:octahedron} that a neighborhood of $\vec n_{ABC}$ in $X$
lies in $Z$, so this set is non-empty.  
We  now determine the precise shapes of $Y$ and $Z \subset Y$ on the sphere,
determining the configuration shown in Figure~\ref{Z}.

We first consider what points lie in $Y$.
Planes normal to vectors in  $Y$ can be moved to a parallel allowable plane
that separates  $A$ from vertices $B_2,C_2, B,C,D$.  
Any such  plane can be pushed 
through parallel planes in $Y$ towards $A$, 
until it hits $A$, since it separates $A$ from the other five
vertices. The boundary of the set of such planes containing $A$ is a spherical
quadrilateral with vertices corresponding to the normals to
the four faces of the octahedron meeting $A$, namely
 $\vec n_{ABC_2}, \vec n_{AB_2C_2}= \vec n_{BCD}, \vec n_{ACB_2}, \vec n_{ABC} .$
Then $Y$ consists of points insider the spherical quadrilateral with these
four vertices, a subset of the spherical triangle $X$.
 
Next we consider what points lie  in  $Z$.
An allowable plane  normal to a vector in  $Z$ 
must separate $ A$ from $ B_2,C_2, B, C, D$.
This plane can be pushed away from $A$ through parallel planes until it first hits 
one or more of the other five vertices.  
It cannot first hit $D$,  as no allowable plane through $D$ 
separates $A$ from $B_2,C_2,B, C$.

This set of vertices that it hits
must include some subset of  $B, C$
since if it hits only one or both of $B_2,C_2$
then a parallel   plane in $X$would not separate 
$A$ from vertices $B_2,C_2, B,C,D$ and thus its normal
would not lie in $Z$.
We consider which sets of three or more vertices  may be reached
by planes in $Z$ when these planes are translated away from $A$ through parallel planes.
These form some of the vertices of the spherical polygon $Z$.  
Note that the four vertices  $B_2,C_2,B, C $ are coplanar, and form one
 plane defining a vertex of $Z$.  Thus this is the only vertex hit by
pushing a plane in $Z$ away from $A$. Other vertices are found by planes in $Z$ that contain $A$
and two or more additional vertices, giving  vertices of $Z$ at
 $\vec n_{ABC}$,  $\vec n_{ABC_2}$, $\vec n_{AB_2C}$ (but not $\vec n_{AB_2C_2}$, a neighborhood of which lies
 in $Y-Z$).
The resulting region $Z \subset Y$ is shown in Figure~\ref{Z}.
It is the interior of the spherical quadrilateral formed by spherical geodesic
 arcs joining the four vertices $\vec n_{ABC}$,  $\vec n_{ABC_2}$, $\vec n_{AB_2C}$,
 $\vec n_{BCB_2C_2}$. 
 
\begin{figure}[htbp]
\centering
 \includegraphics[scale=0.25]{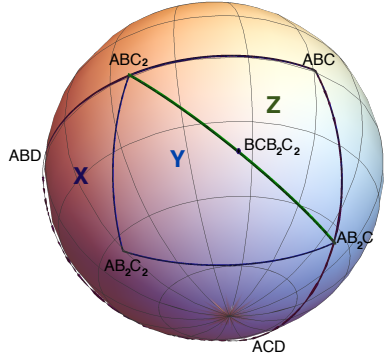}  
   \caption{The spherical triangle $X$ consists of normals to planes separating vertex $A$ from vertices $B,C,D$.
  The region $Y \subset X$ consists of directions for which at least one normal plane gives a vertex of valence-4 at $M$. 
 The region $Z \subset Y$ consists of directions where {\em all} normal planes in $X$  give a   a vertex of valence-4 at $M$.}
  \label{Z}
\end{figure}

The region $ X-Z$ is a spherical quadrilateral, since the vertices
$\vec n_{ABC_2}$,  $\vec n_{BCB_2C_2}$ and $\vec n_{AB_2C}$ lie on a single spherical
geodesic. 
This holds for all $a$ and follows from the fact that lines $BC_2$ and $B_2C$ are parallel to a line of intersection of planes $ABC_2$ and $AB_2C$. Therefore unit normal vectors for planes 
$BCB_2C_2$, $ABC_2$ and $AB_2C$ are coplanar.
Moreover $ X-Z$ is contained in a hemisphere, since all vectors
in $ X$ have positive inner product with $A$.
 
Each  vertex of the  spherical quadrilateral $X-Z$  has distance at most $\pi/2$
from $ \vec n_{KLM} $, as seen by
computing  dihedral angles of the faces of the tetrahedron $\tau_{a_0}$.
The maximum distance of a boundary point from $ \vec n_{BCD} $  occurs at a vertex 
of $ X-Z$, since $ X-Z$ is  a spherical polyhedron contained in a hemisphere.
It follows that each boundary point of  $ X-Z$ 
has distance  at most $\pi/2$ from $\vec n_{KLM} =  \vec n_{KLM} $.
 Corollary~\ref{extremeAngles} implies that extreme angles for the projection of 
$\triangle KLM$ in the GradNormal algorithm are realized either by the triangle itself or 
by a projection to a plane with
normal vector  lying on one of the boundary edges of  $X-Z$. 
There are three angles for $\triangle KLM$ and four boundary edges of  $X-Z$
determining planes onto which they can project. The three angle functions given by  $\triangle KLM$
when projected onto the arc from $\vec n_{ACD}$  to $\vec n_{AB_2C}$ are shown
in Figure~\ref{fig:SphericalQuadTriangleBCD}, as are angles along each of the
other three arcs of $\partial(X-Z)$.

 \begin{figure}[htbp]
\centering
\begin{subfigure}{.24\textwidth}
  \centering
  \includegraphics[width=.9\linewidth]{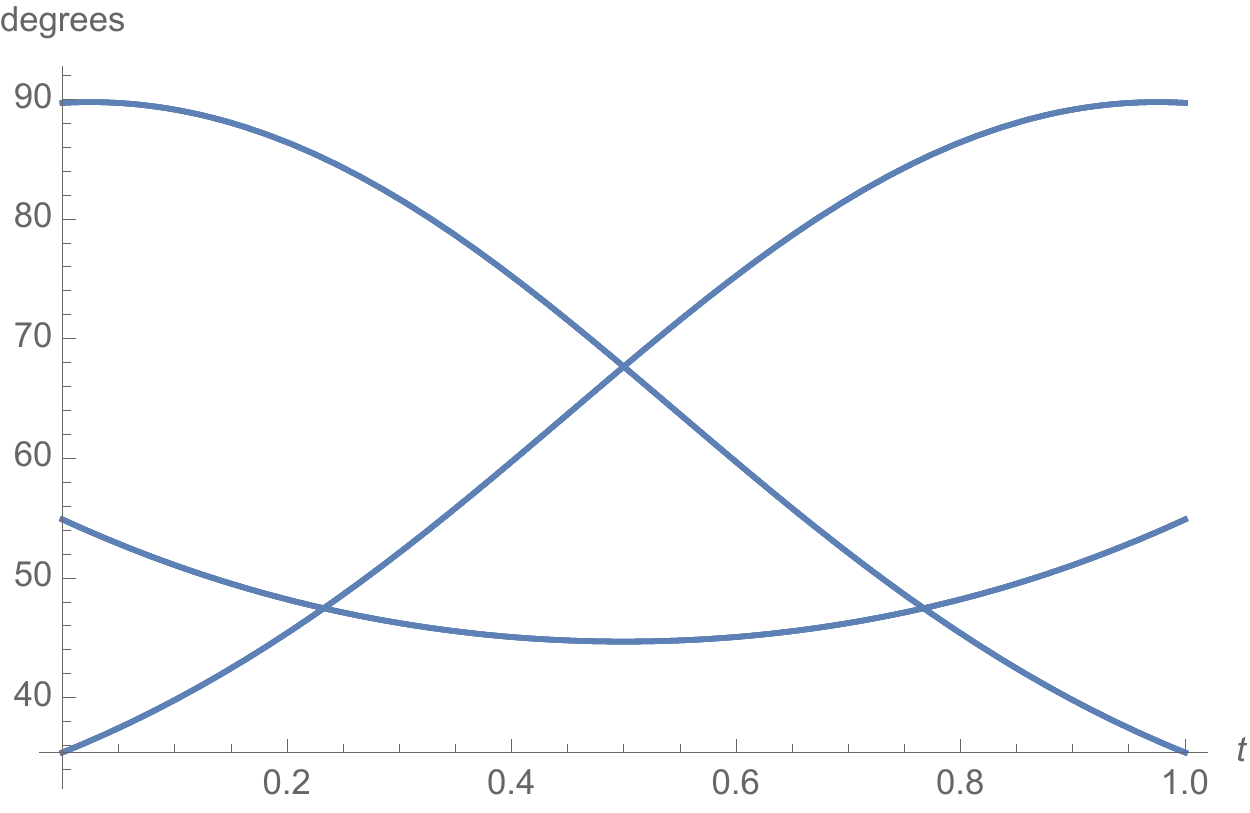}
  \caption{}
\end{subfigure}%
\begin{subfigure}{.24\textwidth}
  \centering
  \includegraphics[width=.9\linewidth]{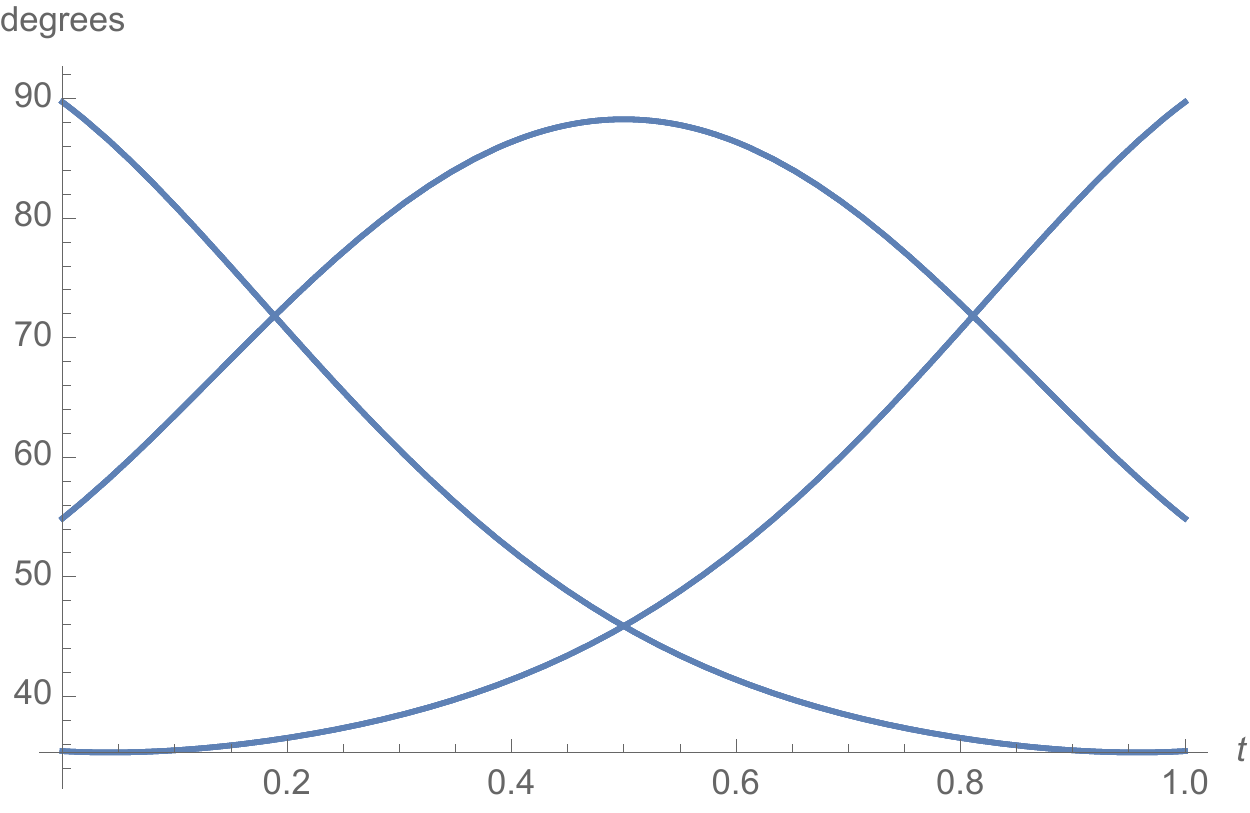}
  \caption{}
\end{subfigure}
\begin{subfigure}{.24\textwidth}
  \centering
  \includegraphics[width=.9\linewidth]{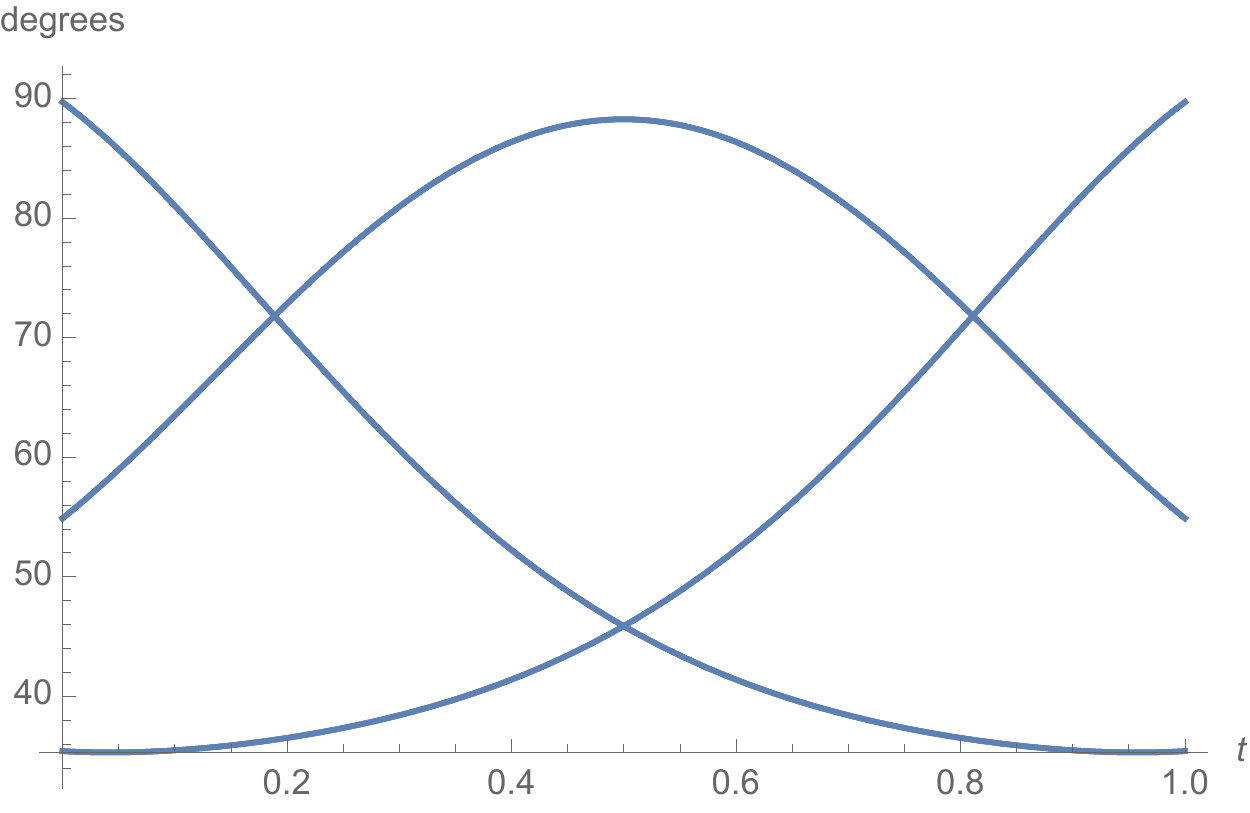}
  \caption{}
\end{subfigure}
\begin{subfigure}{.24\textwidth}
  \centering
  \includegraphics[width=.9\linewidth]{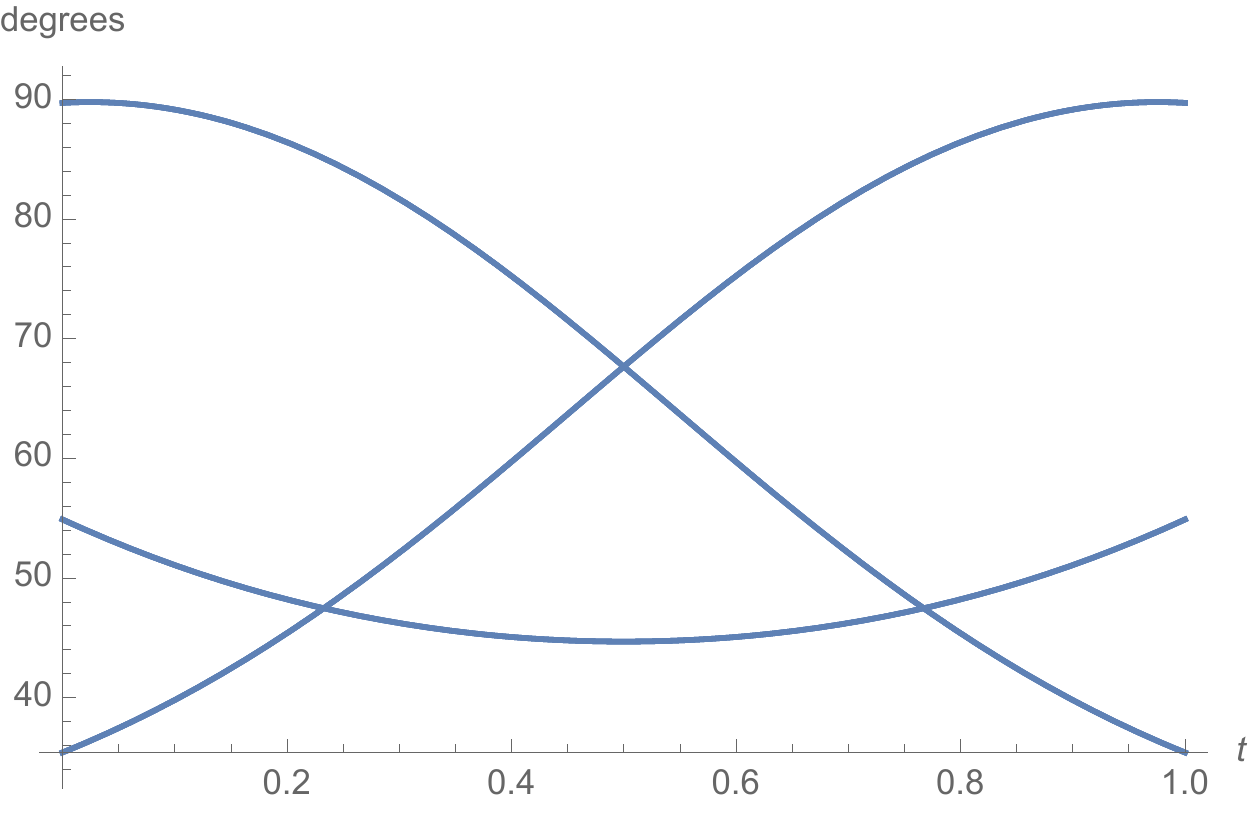}
  \caption{}
\end{subfigure}
\caption{Angles of $\triangle KLM$  after projection onto an arc of $\partial(X-Z)$ running
from (A) $\vec n_{ACD}$  to $\vec n_{AB_2C}$, 
(B)  $\vec n_{BCD}$  to $\vec n_{AB_2C}$, 
(C) $\vec n_{ABD}$  to $\vec n_{ACD}$,
and (D) $\vec n_{ABD}$  to $\vec n_{ABC_2}$. 
Graphs repeat due to symmetries.
Again all angles are in $[35.25^o,~101.45^o]$.} 
\label{fig:SphericalQuadTriangleBCD}
\end{figure}

\noindent
 {\bf Oher Cases:}
Triangles $\triangle KNQ, \triangle LNP$ and $\triangle MPQ$, as well as the triangles
coming from dividing elementary quadrilaterals along a diagonal, all give rise to similar
angle functions for each edge of a corresponding quadrilateral spherical region.
Altogether there are 12 triangles with 36 angles projecting to four edges each, or
144 angle functions in total, each defined on an interval of normal directions connecting
two points on the sphere along a spherical arc.  The union of all these angle functions is
graphed in Figure~\ref{SpherQuads_all72projected_angles}.
 
\begin{figure}[htbp]
    \includegraphics[scale=0.5]{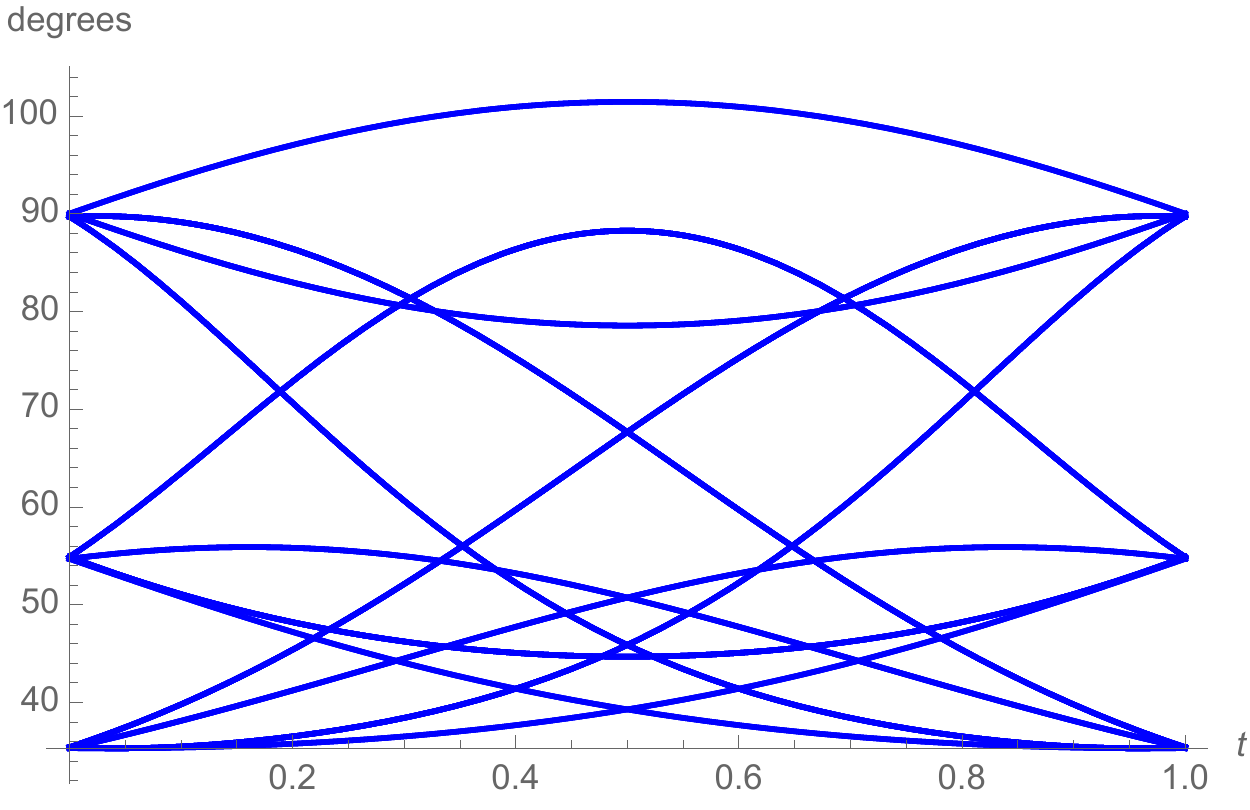}
  \caption{Angles of all triangles in the GradNormal mesh are bounded above and below by the 
  maximum and minimum values obtained in these graphs.  A total of 144 angles are graphed over
  the boundary of spherical regions to produce these functions. Because of symmetries and
  coinciding functions, there are only 12 distinct graphs resulting from these 144 angles.
  All curves lie above $35.25^o$ and below $101.45^o$. } 
  \label{SpherQuads_all72projected_angles}
\end{figure}

We now consider the claims of Theorem~\ref{GradNormalThm}.
In Theorem~\ref{mainTheorem} it was shown that the nearest point projection from mesh
$M(a,e)$ to $F$  is a homeomorphism for $e$ sufficiently small. 
The same argument applies to mesh $M(f,e)$, with the projection given by the
gradient vector for linear $f$ and approximated by the gradient vector for $e$ sufficiently small.
When $f$ is linear, the projected triangle is contained in $F$, and gives a $C^1$ approximation for  
$e$ sufficiently small.  In the argument above, the angle bounds  were established for $F$  a plane,
and also hold for $e$ sufficiently small, since $F\cap \tau$ converges smoothly to the intersection of a plane
with $\tau$ as $e \to 0$.
 \end{proof}

\section{Remarks}
\subsection{Normal surfaces and Marching Tetrahedra}
The simple normal surfaces considered in this paper are similar to the surfaces constructed in the
marching tetrahedra algorithm, though they predate them.  There is a feature of the  general theory of
normal surfaces that gives it the potential to extend the
MidNormal and GradNormal algorithms beyond the settings explored here.
Normal surfaces are well suited for describing surfaces 
that overlap on large subsurfaces.  Many surfaces have this
property, such as  a folded
table cloth, a parachute, the surface of the pages of a book (with many pages touching one another), and the cortical
surface of a brain.  Normal surfaces can efficiently describe such surfaces, and for that  reason
are widely used in computational topology to give efficient representatives of surfaces
in general 3-dimensional manifolds \cite{Haken, HLP}.

\subsection{Other Surface Descriptors}
The MidNormal and GradNormal algorithms 
introduced in this paper  take as input a surface given as a level set of a function on $\RR^3$, 
but they are
amenable to other forms of surface input. For  example, if the input is a poorly triangulated 
surface $F$, then there exist procedures to produce a function on $\RR^3$
that estimates distance from the surface. Producing such a signed distance function 
has been extensively studied in computer graphics \cite{Gib98, PayneToga}.
If the input describing a surface is a point cloud,
methods such as the Moving Least Squares and Adaptive Moving Least Squares procedures 
produce a function giving a level set description of the surface \cite{Dey, ShenOBrienShewchuk}.
This function can then be used as input
to the MidNormal and GradNormal algorithms.

 \subsection{Convergence and curvature}

When we have bounds on the principle curvatures of $F$ we can get angle bounds on the mesh 
for a given value of $e$. We investigate these bounds here, as they are
relevant to whether the GradNormal algorithm can
be used effectively.
The bounds of Theorem~\ref{GradNormal} are guaranteed to apply as the scale size $e \to 0$.  
To test them at a given size, we can fix $e = 1$ and consider how the angle bounds on the mesh
are affected by curvature bounds  on the surface $F$.  Though this can be done rigorously,
we present here some experimental results obtained as a preliminary step.

We set $e=1$ and consider the angles attained by a mesh approximating a surface $F$ 
whose principle curvature are bounded above in absolute value  by a constant $k_0$.
We estimate these angles by modeling $F$ with a sphere.  Since spheres of the appropriate radius have maximal
principal curvature and since they realize all tangent directions, this gives a reasonable approach to modeling the
worst case for an angle bound.  We obtain in this way experimental bounds for the angles obtained in the GradNormal algorithm.
In Table~\ref{angleTable} the result of applying the GradNormal algorithm to spheres of varying radii and tori of revolution at various scales is shown. 
The principle curvatures of the spheres are bounded above by $k_M$,
and the resulting  minimal angles  \textbf{$\theta_m$} and maximum angles  \textbf{$\theta_M$} are shown. This
can be compared to the predicted limiting angle bounds of  $[35.2^o, 101.5^o]$ as $k_M \rightarrow 0$.
 
\begin{table}[h!] \label{curv}
\begin{center}
\begin{tabular}{l|l|l|l}
         \textbf{Spheres} &   \textbf{$k_M$} & \textbf{$\theta_m$} & \textbf{$\theta_M$}   \\
      \hline
&0.23 &$ 33.0^o$& $102.8^o$\\
&0.09 &$ 34.2^o$& $101.3^o$\\
& 0.05 &$ 35.4^o$& $102.7^o$\\
&0.03 &$ 35.2^o$& $101.6^o$\\
\end{tabular}
\ \  \   \  \ 
\begin{tabular}{l|l|l|l}
         \textbf{Tori} &   \textbf{$k_M$} & \textbf{$\theta_m$} & \textbf{$\theta_M$}   \\
      \hline
& 0.5 &$ 9.5^o$& $156.8^o$\\
 &0.2 &$ 32.0^o$& $111.0^o$\\
 &0.1 &$ 38.3^o$& $99.3^o$\\
 &0.05 &$ 33.6^o$& $103.5^o$\\
\end{tabular} 

\vspace{.1in}
\begin{tabular}{l|l|l|l}
         \textbf{Genus 2} &   \textbf{$k_M$} & \textbf{$\theta_m$} & \textbf{$\theta_M$}   \\
      \hline
& 0.57 &$  10.9^o$& $  153.8^o$\\
&0.29 &$   22.2^o$& $  129.0^o$\\
 &0.15 &$  27.8^o$& $ 118.8^o$\\
\end{tabular}

 \end{center}
 \caption{Experimentally attained angle bounds  for spheres  and tori with varying upper bound $k_M$ for the principle curvatures. As $k_M \to 0$ the angles converge to
 the interval $[35.2^o, 101.5^o]$.}
 \label{angleTable}
\end{table}


 \subsection{Further improvements}
It is likely that additional improvements in the angle bounds can be achieved by 
processes such as moving the vertices of the mesh in directions  tangent
to the surface, adding additional vertices, and  performing Delaunay flips. 

\bibliographystyle{plainurl}
\bibliography{MidN}        

\begin{thebibliography}{10}

\bibitem{BernEppstein}
M.~Bern and D.~Eppstein.
\newblock Mesh generation and optimal triangulation.
\newblock In Ding-Zhu Du and Frank Kwang-Ming Hwang, editors, {\em Computing in
  Euclidean Geometry}, volume~1 of {\em Lecture Notes Series on Computing},
  pages 23--90. World Scientific, 1992.

\bibitem{BernEppsteinYao}
M.~Bern, D.~Eppstein, and F.~Yao.
\newblock The expected extremes in a delaunay triangulation.
\newblock {\em International Journal of Computational Geometry \&
  Applications}, 1:79--91, 1991.

\bibitem{BuragoZalgaller}
Y.D. Burago and V.A. Zalgaller.
\newblock Polyhedral embedding of a net.
\newblock {\em Vestnik St. Petersburg Univ. Math.}, pages 66--80, 1960.

\bibitem{ChengDey}
S.W. Cheng, T.K. Dey, and J.~Shewchuk.
\newblock {\em Delaunay Mesh Generation}.
\newblock CRC, 2012.

\bibitem{Chew93}
L.~P. Chew.
\newblock Guaranteed quality triangular meshes.
\newblock In {\em Proceedings of the Ninth annual Symposium on Computational
  geometry}, pages 274--280, San Diego, 1993.

\bibitem{MeshLab}
P.~Cignoni, M.~Callier, M.~Corsini, M.~Dellepiane, F.~Ganovelli, and
  G.~Ranzuglia.
\newblock Meshlab: an open-source mesh processing tool.
\newblock In {\em Sixth Eurographics Italian Chapter Conference}, pages
  129--136, 2008.

\bibitem{deVerdiereMarin}
Y.~Colin de~Verdiere and A.~Marin.
\newblock Triangulations presque equilaterales des surfaces.
\newblock {\em J. Differential Geom}, 32:199--207, 1990.

\bibitem{Dey}
T.K. Dey.
\newblock Curve and surface reconstruction: Algorithms with mathematical
  analysis.
\newblock In {\em Cambridge Monographs on Applied and Computational
  Mathematics}. Cambridge University Press, 2006.

\bibitem{DoiKoide}
A.~Doi and A.~Koide.
\newblock An efficient method of triangulating equi-valued surfaces by using
  tetrahedral cells.
\newblock In {\em IEICE Transactions of Information and Systems}. IEICE, 1991.

\bibitem{EppsteinSullivanUngor}
D.~Eppstein, J.~Sullivan, and A.~Ungor.
\newblock Tiling space and slabs with acute tetrahedra.
\newblock {\em Comp. Geom. Theory $\&$ Applications}, 27(3):237--255, 2004.

\bibitem{Goldberg}
M.~Goldberg.
\newblock Three infinite families of tetrahedral space-fillers.
\newblock {\em J. Comb. Theory}, 16:348--354, 1974.

\bibitem{Luo}
D.~Gu, F.~Luo, and T.~Wu.
\newblock Convergence of discrete conformal geometry and computation of
  uniformization maps.
\newblock {\em Asian Journal of Mathematics}, 23(1):21--34, 2019.

\bibitem{Haken}
W.~Haken.
\newblock Theorie der normalfl\"achen: Ein isotopiekriterium f\"ur den
  kreisknoten.
\newblock {\em Acta Math}, 105:245--375, 1961.

\bibitem{Hass98}
J.~Hass.
\newblock Algorithms for knots and 3-manifolds.
\newblock {\em Chaos, Solitons and Fractals}, 9:569--581, 1998.

\bibitem{HLP}
J.~Hass, J.~Lagarias, and N.~Pippenger.
\newblock The computational complexity of knot and link problems.
\newblock {\em Journal of the ACM}, pages 185--211, 1999.

\bibitem{NormalCode}
J.~Hass and M.~Trnkova.
\newblock Normal mesh files.
\newblock gitlab.com/joelhass/midnormal, 12 2019.

\bibitem{ErtenUngor}
H.Erten and A.~Ungor.
\newblock Computing acute and non-obtuse triangulations.
\newblock In {\em CCCG}, Ottawa, Canada, 2007.

\bibitem{Kneser}
H.~Kneser.
\newblock Geschlossene flachen in dreidimensionalen mannigfaltigkeiten.
\newblock {\em Jahresbericht Math. Verein}, 28:248--260, 1929.

\bibitem{LorensenCline}
W.E. Lorensen and H.E. Cline.
\newblock Marching cubes: A high resolution 3d surface construction algorithm.
\newblock In {\em SIGGRAPH Computer Graphics}, volume~21, pages 163--169. ACM,
  1987.

\bibitem{PayneToga}
B.A. Payne and A.W. Toga.
\newblock Distance field manipulation of surface models.
\newblock In {\em Computer Graphics and Applications}, 12, pages 65--71. IEEE,
  1992.

\bibitem{Gib98}
Spencer S, editor.
\newblock {\em Using distance maps for accurate surface representation in
  sampled volumes}. Symposium on Volume Visualization, IEEE, 1998.

\bibitem{Saraf}
S.~Saraf.
\newblock Acute and non-obtuse triangulations of polyhedral surfaces.
\newblock {\em European J. Combin}, 30:833--840, 2009.

\bibitem{Senechal}
M.~Senechal.
\newblock Which tetrahedra fill space?
\newblock {\em Mathematics Magazine}, 54:227--243, 1981.

\bibitem{ShenOBrienShewchuk}
C.~Shen, J.F. O'Brien, and J.R. Shewchuk.
\newblock Interpolating and approximating implicit surfaces from polygon soup.
\newblock In {\em Proceedings of ACM SIGGRAPH}, Proceedings of ACM SIGGRAPH,
  pages 227--243. ACM, ACM Press, 2004.

\bibitem{Vavasis}
S.A. Vavasis.
\newblock Stable finite elements for problems with wild coefficients.
\newblock {\em SIAM J. Numer. Anal}, 33:35--49, 1996.

\bibitem{Zamfirescu}
C.T. Zamfirescu.
\newblock Survey of two-dimensional acute triangulations.
\newblock {\em Discrete Mathematics}, 313:35--49, 2013.

\end{thebibliography}

\end{document}